\theoremstyle{plain}
\newtheorem{Thm}{Theorem}[section]
\newtheorem{Prop}[Thm]{Proposition}
\newtheorem{Lemma}[Thm]{Lemma}
\newtheorem{Cor}[Thm]{Corollary}
\theoremstyle{definition}
\newtheorem{Remark}[Thm]{Remark}
\theoremstyle{remark}
\newtheorem*{Lit}{Literature}
\newcommand{\q}{\quad}
\newcommand{\eps}{\varepsilon}
\newcommand{\N}{\mathbb{N}}
\newcommand{\R}{\mathbb{R}}
\newcommand{\F}{\mathbb{F}}
\newcommand{\cF}{\mathcal{F}}
\newcommand{\cB}{\mathcal{B}}
\newcommand{\cR}{\mathcal{R}}
\newcommand{\cA}{\mathcal{A}}
\newcommand{\cH}{\mathcal{H}}
\newcommand{\cE}{\mathcal{E}}
\newcommand{\cS}{\mathcal{S}}
\newcommand{\sM}{\mathscr{M}}
\newcommand{\sY}{\mathscr{Y}}
\newcommand{\loc}{\text{loc}}
\newcommand{\br}[1]{\langle #1 \rangle}
\DeclareMathOperator{\esssup}{ess\, sup}
\DeclareMathOperator{\essinf}{ess\, inf}
\newcommand{\hX}{\widehat{X}}
\newcommand{\hY}{\widehat{Y}}
\newcommand{\tU}{\widetilde{U}}
\newcommand{\tc}{\tilde{c}}
\newcommand{\tx}{\tilde{x}}
\newcommand{\hc}{\hat{c}}
\newcommand{\hpi}{\hat{\pi}}
\newcommand{\hkappa}{\hat{\kappa}}
\newcommand{\tL}{\widetilde{L}}
\newcommand{\tM}{\widetilde{M}}
\newcommand{\tN}{\widetilde{N}}
\newcommand{\tX}{\widetilde{X}}
\newcommand{\tZ}{\widetilde{Z}}
\newcommand{\hvartheta}{\hat{\vartheta}}
\newcommand{\tvartheta}{\tilde{\vartheta}}
\newcommand{\dL}{\delta L}
\newcommand{\dM}{\delta M}
\newcommand{\dN}{\delta N}
\newcommand{\dZ}{\delta Z}
\newcommand{\sint}{\stackrel{\mbox{\tiny$\bullet$}}{}}
\numberwithin{equation}{section}
\begin{document}%

\title{Risk Aversion Asymptotics for Power Utility Maximization\\[1em] \small{Marcel Nutz\\
ETH Zurich, Department of Mathematics, 8092 Zurich, Switzerland\\ \texttt{marcel.nutz@math.ethz.ch} \\
 This Version: March 16, 2010.}}
\date{}
\maketitle \vspace{-1.5cm}

\begin{abstract}
We consider the economic problem of optimal consumption and investment with power utility. We study the optimal strategy as the relative risk aversion tends to infinity or to one. The convergence of the optimal consumption is obtained for general semimartingale models while the convergence of the optimal trading strategy is obtained for continuous models. The limits are related to exponential and  logarithmic utility. To derive these results, we combine approaches from optimal control, convex analysis and backward stochastic differential equations (BSDEs).
\end{abstract}

{\small
\noindent \emph{Keywords} power utility, risk aversion asymptotics, opportunity process, BSDE.

\noindent \emph{AMS 2000 Subject Classifications} Primary
91B28; %
secondary
93E20, %
60G44. %

\noindent \emph{JEL Classification}
G11,  %
C61. %
}\\

\noindent \emph{Acknowledgements.} Financial support by Swiss National Science Foundation Grant PDFM2-120424/1 is gratefully acknowledged. The author thanks Freddy Delbaen and Semyon Malamud for discussions and Martin Schweizer for comments on the draft.

\section{Introduction}

This paper considers the maximization of expected utility, a classical problem of mathematical finance. The agent obtains utility from the wealth he possesses at some given time horizon $T\in(0,\infty)$ and, in an alternative case, also from intermediate consumption before $T$.
More specifically, we study preferences given by power utility random fields for an agent who can invest in a financial market which is modeled by a general semimartingale. We defer the precise formulation to the next section to allow for a brief presentation of the contents and focus on the power utility function $U^{(p)}(x)=\tfrac{1}{p}x^p$, where $p\in(-\infty,0)\cup(0,1)$. Under standard assumptions, there exists for each $p$ an optimal trading and consumption strategy that maximizes the expected utility corresponding to $U^{(p)}$. Our main interest concerns the \emph{behavior of these strategies in the limits $p\to-\infty$ and $p\to0$}.

The relative risk aversion of $U^{(p)}$ tends to infinity for $p\to-\infty$. Hence economic intuition suggests that the
agent should become reluctant to take risks and, in the limit, not invest in the risky assets.
Our first main result confirms this intuition. More precisely, we prove in a general semimartingale model that the optimal consumption, expressed as a proportion of current wealth, converges pointwise to a deterministic function. This function corresponds to the consumption which would be optimal in the case where trading is not allowed. In the continuous semimartingale case, we show that the optimal trading strategy tends to zero in a local $L^2$-sense and that the corresponding wealth process converges in the semimartingale topology.

Our second result pertains to the same limit $p\to-\infty$ but concerns the problem without intermediate consumption. In the continuous case, we show that the optimal trading strategy scaled by $1-p$ converges to a strategy which is optimal for exponential utility. We provide economic intuition for this fact via a sequence of auxiliary power utility functions with shifted domains.

The limit $p\to 0$ is related to the logarithmic utility function. Our third main result is the convergence of the corresponding optimal consumption for the general semimartingale case, and the convergence of the trading strategy and the wealth process in the continuous case.

All these results are readily observed for special models where the optimal strategies can be calculated explicitly. While the corresponding economic intuition extends to general models, it is \emph{a priori} unclear how to go about proving the results. Indeed, the problem is to \emph{get our hands on the optimal controls}, which is a notorious question in stochastic optimal control.

Our main tool is the so-called opportunity process, a reduced form of the value process in the sense of dynamic programming.
We prove its convergence using control-theoretic arguments and convex analysis. On the one hand, this yields the convergence of the value function. On the other hand, we deduce the convergence of the optimal consumption, which is directly related to the opportunity process.
The optimal trading strategy is also linked to this process, by the so-called Bellman equation. We study the asymptotics of this backward stochastic differential equation (BSDE) to obtain the convergence of the strategy. This involves nonstandard arguments to deal with nonuniform quadratic growth in the driver and solutions that are not locally bounded.

To derive the results in the stated generality, it is important to \emph{combine} ideas from optimal control, convex analysis and BSDE theory rather than to rely on only one of these ingredients; and one may see the problem at hand as a \emph{model problem of control} in a semimartingale setting.

The paper is organized as follows. In the next section, we specify the optimization problem in detail. Section~\ref{se:mainResults} summarizes the main results on the risk aversion asymptotics of the optimal strategies and indicates connections to the literature. Section~\ref{se:tools} introduces the main tools, the opportunity process and the Bellman equation, and explains the general approach for the proofs. In Section~\ref{se:auxResults} we study the dependence of the opportunity process on $p$ and establish some related estimates. Sections~\ref{se:pto-inftyProofs} deals with the limit $p\to-\infty$; we prove the main results stated in Section~\ref{se:mainResults} and, in addition, the convergence of the opportunity process and the solution to the dual problem (in the sense of convex duality). Similarly, Section~\ref{se:pto0Proofs} contains the proof of the main theorem for $p\to0$ and additional refinements.

\section{Preliminaries}

The following notation is used. If $x,y\in\R$ are reals, $x\wedge y=\min\{x,y\}$ and $x\vee y=\max\{x,y\}$.
We use $1/0:=\infty$ where necessary. If $z\in\R^d$ is a $d$-dimensional vector, $z^i$ is its $i$th coordinate, $z^\top$ its transpose, and
$|z|=(z^\top z)^{1/2}$ the Euclidean norm. If $X$ is an $\R^d$-valued semimartingale
and $\pi$ is an $\R^d$-valued predictable integrand, the vector stochastic integral, denoted by $\int \pi\,dX$ or $\pi\sint X$, is a scalar semimartingale with initial value zero.
Relations between measurable functions
hold almost everywhere unless otherwise mentioned. Dellacherie and Meyer~\cite{DellacherieMeyer.82} and Jacod and Shiryaev~\cite{JacodShiryaev.03} are references for unexplained notions from stochastic calculus.

\subsection{The Optimization Problem}

We consider a fixed time horizon $T\in(0,\infty)$ and a filtered probability space $(\Omega,\cF,\F=(\cF_t)_{t\in[0,T]},P)$  satisfying the usual assumptions of right-continuity and completeness, as well as $\cF_0=\{\emptyset,\Omega\}$ $P$-a.s.\ Let $R$ be an $\R^d$-valued c\`adl\`ag semimartingale with $R_0=0$. Its components are interpreted as the returns of $d$ risky assets and
the stochastic exponential $S=(\cE(R^1),\dots,\cE(R^d))$ represents their prices.
Let $\sM$ be the set of equivalent $\sigma$-martingale measures for $S$. We assume
\begin{equation}\label{eq:ELMMexists}
  \sM\neq\emptyset,
\end{equation}
so that arbitrage is excluded in the sense of the NFLVR condition (see Delbaen and Schachermayer~\cite{DelbaenSchachermayer.98}).
Our agent also has a bank account at his disposal. As usual in mathematical finance, the interest rate is assumed to be zero.

The agent is endowed with a deterministic initial capital $x_0>0$. A \emph{trading strategy} is a predictable $R$-integrable $\R^d$-valued process $\pi$, where $\pi^i$ is interpreted as the fraction of the current wealth (or the portfolio proportion) invested in the $i$th risky asset. A \emph{consumption rate} is an optional process $c\geq0$ such that $\int_0^Tc_t\,dt<\infty$ $P$-a.s.
We want to consider two cases simultaneously: Either consumption occurs only at the terminal time $T$ (utility from ``terminal wealth'' only); or there is intermediate and a bulk consumption at the time horizon. To unify the notation, we define the measure $\mu$ on $[0,T]$,
 \[
   \mu(dt):=
  \begin{cases}
    0 & \text{in the case without intermediate consumption}, \\
    dt & \text{in the case with intermediate consumption}.
  \end{cases}
 \]
 Moreover, let $\mu^\circ:=\mu + \delta_{\{T\}}$, where $\delta_{\{T\}}$ is the unit Dirac measure at $T$.
The \emph{wealth process} $X(\pi,c)$ of a pair $(\pi,c)$ is defined by the linear equation
\begin{equation*}%
   X_t(\pi,c)=x_0+\int_0^t  X_{s-}(\pi,c) \pi_s\,dR_s-\int_0^t c_s\,\mu(ds),\quad 0\leq t\leq T.
\end{equation*}
The set of \emph{admissible} trading and consumption pairs is
\[
  \cA(x_0)=\big\{(\pi,c):\, X(\pi,c)>0 \mbox{ and } c_T=X_T(\pi,c)\big\}.
\]
The convention $c_T=X_T(\pi,c)$ is merely for notational convenience and means that all the remaining wealth is consumed at time $T$.
We fix the initial capital $x_0$ and usually write $\cA$ for $\cA(x_0)$.
Moreover, $c\in\cA$ indicates that there exists $\pi$ such that $(\pi,c)\in\cA$; an analogous convention is used for similar expressions.

It will be convenient to parametrize the consumption strategies as fractions of the wealth.
Let $(\pi,c)\in\cA$ and let $X=X(\pi,c)$ be the corresponding wealth process. Then
\begin{equation*}%
  \kappa:=\frac{c}{X}
\end{equation*}
is called the \emph{propensity to consume} corresponding to $(\pi,c)$.
In general, a propensity to consume is an optional process $\kappa\geq0$ such that $\int_0^T \kappa_s\,ds<\infty$ $P$-a.s.\ and $\kappa_T=1$. The parametrizations by $c$ and by $\kappa$ are equivalent (see Nutz~\cite[Remark~2.1]{Nutz.09a}) and
we abuse the notation by identifying $c$ and $\kappa$ when $\pi$ is given. Note that the wealth process can be expressed as
\begin{equation}\label{eq:wealthExponential}
   X(\pi,\kappa)=x_0\cE\big(\pi\sint R - \kappa \sint \mu\big).
\end{equation}

The preferences of the agent are modeled by a random utility function with constant relative risk aversion. More precisely, let $D$ be a c\`adl\`ag adapted positive process and fix $p\in (-\infty,0)\cup(0,1)$. We define the utility random field
\begin{equation}\label{eq:utilityDef}
  U_t(x):=U^{(p)}_t(x):=D_t\tfrac{1}{p}x^p,\q x\in (0,\infty),\; t\in [0,T],
\end{equation}
where we assume that there are constants $0<k_1\leq k_2<\infty $ such that
\begin{equation}\label{eq:BoundsR}
  k_1\leq D_t\leq k_2,\q 0\leq t\leq T.
\end{equation}
The process $D$ is taken to be independent of $p$; interpretations are discussed in~\cite[Remark~2.2]{Nutz.09a}.
The parameter $p$ in $U^{(p)}$ will sometimes be suppressed in the
notation and made explicit when we want to recall the dependence. The same applies to other quantities in this paper.

The constant $1-p>0$ is called the \emph{relative risk aversion} of $U$.
The \emph{expected utility} corresponding to a consumption rate $c\in\cA$ is given by
$E\big[\int_0^T U_t(c_t)\,\mu^\circ(dt)\big]$, which is
either $E[U_T(c_T)]$ or $E[\int_0^T U_t(c_t)\,dt+U_T(c_T)]$.
We will always assume that the optimization problem is nondegenerate, i.e.,
\begin{equation}\label{eq:PrimalProblemFinite}
  u_p(x_0):=\sup_{c\in\cA(x_0)}E\Big[\int_0^T U^{(p)}_t(c_t)\,\mu^\circ(dt)\Big]<\infty.
\end{equation}
This condition depends on the choice of $p$, but not on $x_0$.
Note that $u_{p_0}(x_0)<\infty$ implies $u_p(x_0)<\infty$ for any $p<p_0$; and for $p<0$ the condition~\eqref{eq:PrimalProblemFinite} is void since then $U^{(p)}<0$.
A strategy $(\pi,c)\in\cA(x_0)$
is \emph{optimal} if $E\big[\int_0^T U_t(c_t)\,\mu^\circ(dt)\big]=u(x_0)$. Note that $U_t$  is irrelevant for $t<T$ when there is no intermediate consumption.
We recall the following existence result.

\begin{Prop}[Karatzas and \v{Z}itkovi\'c~\cite{KaratzasZitkovic.03}]\label{pr:ExistenceKZ}
  For each $p$, if $u_{p}(x_0)<\infty$, there exists an optimal strategy $(\hpi,\hc)\in\cA$.
  The corresponding wealth process $\hX=X(\hpi,\hc)$ is unique. The consumption rate $\hc$ can be chosen
  to be c\`adl\`ag and is unique $P\otimes\mu^\circ$-a.e.
\end{Prop}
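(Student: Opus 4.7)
The plan is to follow the convex-duality approach of Kramkov--Schachermayer, extended by Karatzas--\v{Z}itkovi\'c to cover intermediate consumption in a general semimartingale market. The key ingredients are a bipolar characterisation of the attainable consumption plans, a Koml\'os-type weak-compactness argument for a maximising sequence, and strict concavity of $U^{(p)}$ for uniqueness.

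First I would recast the problem purely in terms of consumption. Let $\mathcal{C}(x_0)$ denote the set of optional processes $c\geq 0$ such that $(\pi,c)\in\cA(x_0)$ for some $\pi$. Using~\eqref{eq:ELMMexists} and the convention $c_T=X_T(\pi,c)$, an element $c\in\mathcal{C}(x_0)$ is characterised by the budget inequality
\[
  E\Big[\int_0^T Y_t\,c_t\,\mu^\circ(dt)\Big]\leq x_0 \qquad\text{for every }Y\in\sY,
\]
where $\sY$ is the polar cone of $\mathcal{C}(1)$, concretely realised as the family of nonnegative supermartingale deflators associated to $\sM$. The need for supermartingale, rather than martingale, deflators is precisely what the $\sigma$-martingale assumption forces. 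The bipolar theorem then renders $\mathcal{C}(x_0)$ closed under countable convex combinations in $L^0(P\otimes\mu^\circ)$.

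Next I would apply a Koml\'os argument to a maximising sequence $c^n\in\mathcal{C}(x_0)$ for the concave functional $J(c):=E[\int_0^T U_t(c_t)\,\mu^\circ(dt)]$: after passing to forward convex combinations one may assume $P\otimes\mu^\circ$-a.e.\ convergence to a limit $\hc$. The bipolar characterisation and Fatou give $\hc\in\mathcal{C}(x_0)$. For $p<0$ one has $U<0$, so Fatou directly yields $J(\hc)\geq\limsup_n J(c^n)=u_p(x_0)$; for $p\in(0,1)$, finiteness~\eqref{eq:PrimalProblemFinite} together with the duality bound furnishes a uniformly integrable majorant for $U_t(c^n_t)$, so reverse Fatou gives the same conclusion. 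Hence $\hc$ is optimal. The trading strategy $\hpi$ is then recovered from an optional decomposition of the process $x_0-\int_0^\cdot \hc_s\,\mu(ds)$ relative to $\sY$, yielding a strictly positive wealth process $\hX=X(\hpi,\hc)$; a c\`adl\`ag version of $\hc$ is produced by optional projection of $\hc/\hX$ against $\mu^\circ$.

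Uniqueness of $\hX$ as a process, and of $\hc$ up to $P\otimes\mu^\circ$-null sets, follows from strict concavity of $x\mapsto x^p/p$: two optima with differing consumption would admit a convex-combination competitor in $\cA(x_0)$ strictly improving $J$ by Jensen, contradicting optimality. The main obstacle is the passage to weak compactness: without local boundedness of $R$, there is no equivalent local martingale measure at hand, so one cannot work directly under a change of measure and must instead use the supermartingale-deflator pairing together with the forward-convex-combinations trick of Delbaen--Schachermayer to replace the failure of weak $L^1$-compactness on $\mathcal{C}(x_0)$.
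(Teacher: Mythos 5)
The paper does not prove Proposition~\ref{pr:ExistenceKZ}; it is quoted from Karatzas and \v{Z}itkovi\'c, so there is no internal proof to compare against. Your sketch follows the same duality/Koml\'os architecture as that reference (budget constraint via supermartingale deflators, bipolar theorem, forward convex combinations of a maximising sequence, strict concavity for uniqueness), and the overall plan is the right one. Two steps, however, would not survive being written out.

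The reverse Fatou argument for $p\in(0,1)$ does not rest on ``a uniformly integrable majorant for $U_t(c^n_t)$'': no such pointwise dominating function is available in general, and the duality bound does not produce one. What is actually proved (Kramkov--Schachermayer's uniform-integrability lemma and its consumption analogue in Karatzas--\v{Z}itkovi\'c) is that the positive parts $U_t(c^n_t)^+$ form a uniformly integrable family, via a contradiction argument that uses finiteness of the \emph{dual} value function and the asymptotic elasticity bound $AE(U^{(p)})=p<1$ (automatic here, but it is the engine of the argument, not a side remark). Separately, the optional decomposition theorem cannot be applied to $x_0-\int_0^\cdot\hc_s\,\mu(ds)$, which is of finite variation and is not a supermartingale after deflation; the process that is decomposed is the wealth-plus-cumulative-consumption process $\hX+\int_0^\cdot\hc_s\,\mu(ds)$, which one must first show to be a supermartingale after deflation by every $Y\in\sY$, typically through the dual optimiser and saturation of the budget constraint. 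The strict positivity of $\hX$ and the c\`adl\`ag version of $\hc$ then follow from that decomposition together with the first-order (utility-gradient) relation between $\hc$ and the dual optimiser, not from an ``optional projection of $\hc/\hX$ against $\mu^\circ$'', which is not a mechanism that would deliver a c\`adl\`ag consumption rate.
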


In the sequel, $\hat{c}$ denotes this c\`adl\`ag version, $\hX=X(\hpi,\hc)$ is the optimal wealth process and
$\hkappa=\hc/\hX$ is the optimal propensity to consume.

\subsection{Decompositions and Spaces of Processes}\label{se:decomps}
In some of the statements, we will assume that the price process $S$ (or equivalently $R$) is continuous. In this case,
it follows from~\eqref{eq:ELMMexists} and Schweizer~\cite{Schweizer.95b} that $R$ satisfies the \emph{structure condition},
i.e.,
\begin{equation}\label{eq:StructureContForR}
  R=M +\int d\br{M}\lambda,
\end{equation}
where $M$ is a continuous local martingale with $M_0=0$ and $\lambda\in L^2_{loc}(M)$.

Let $\xi$ be a scalar special semimartingale, i.e., there exists a (unique) canonical decomposition
$\xi=\xi_0 + M^\xi+ A^\xi$, where $\xi_0\in\R$, $M^\xi$ is a local martingale, $A^\xi$ is predictable of finite variation, and
$M_0^\xi=A_0^\xi=0$. As $M$ is continuous,
$M^\xi$ has a Kunita-Watanabe (KW) decomposition with respect to~$M$,
\begin{equation}\label{eq:GKWdecompL}
  \xi=\xi_0 +Z^\xi\sint M + N^\xi+ A^\xi,
\end{equation}
where $[M^i,N^\xi]=0$ for $1\leq i\leq d$ and $Z^\xi\in L^2_{loc}(M)$; see Ansel and Stricker~\cite[\emph{cas}~3]{AnselStricker.03}.
Analogous notation will be used for other special semimartingales and, with a slight abuse of terminology, we will refer to~\eqref{eq:GKWdecompL} as the KW decomposition of $\xi$.

Let $\cS$ be the space of all c\`adl\`ag $P$-semimartingales and $r\in[1,\infty)$. If $X\in\cS$ has the canonical decomposition $X=X_0+M^X+A^X$, we define
\[
  \|X\|_{\cH^r}:=|X_0|+ \big\|\textstyle{\int}_0^T |dA^X|\big\|_{L^r}+ \big\|[M^X]_T^{1/2}\big\|_{L^r}.
\]
In particular, we will often use that $\|N\|^2_{\cH^2}=E\big[[N]_T\big]$ for a local martingale $N$ with $N_0=0$.
If $X$ is a non-special semimartingale, $\|X\|_{\cH^r}:=\infty$. We can now define
$\cH^r:=\{X\in\cS:\, \|X\|_{\cH^r}<\infty\}$. The same space is sometimes denoted by $\cS^r$ in the literature; moreover, there are many equivalent definitions for $\cH^r$ (see~\cite[VII.98]{DellacherieMeyer.82}). The localized spaces $\cH^r_{loc}$ are defined in the usual way. In particular, if $X,X^n\in\cS$ we say that $X^n\to X$ in $\cH^r_{loc}$ if there exists a localizing sequence of stopping times $(\tau_m)_{m\geq1}$
such that
$\lim_n\|(X^n-X)^{\tau_m}\|_{\cH^r}=0$ for all $m$. The localizing sequence may depend on the sequence $(X^n)$, causing this convergence to be non-metrizable.
On $\cS$, the \'Emery distance is defined by
\[
  d(X,Y):=|X_0-Y_0|+ \sup_{|H|\leq 1} E\bigg[\sup_{t\in[0,T]} 1\wedge |H\sint (X-Y)_t| \bigg],
\]
where the supremum is taken over all predictable processes bounded by one in absolute value.
This complete metric induces on $\cS$ the \emph{semimartingale topology} (cf.~\'Emery~\cite{Emery.79}).

An optional process $X$ satisfies a certain property \emph{prelocally} if there exists a localizing sequence
of stopping times $\tau_m$ such that
$X^{\tau_m-}:=X1_{[0,\tau_m)}+X_{\tau_m-}1_{[\tau_m,T]}$ satisfies this property for each $m$.
When $X$ is continuous, prelocal simply means local.

\begin{Prop}[\cite{Emery.79}]\label{pr:SMconvPrelocHp}
  Let $X,X^n\in \cS$ and $r\in[1,\infty)$. Then $X^n\to X$ in the semimartingale topology if and only if
  every subsequence of $(X^n)$ has a subsequence which converges to $X$ prelocally in $\cH^r$.
\end{Prop}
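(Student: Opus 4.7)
The statement is an equivalence. The ``if'' direction is a formal metric-topology argument and the ``only if'' direction carries the substance. For the ``if'' direction, I would first show that prelocal $\cH^r$ convergence implies convergence in the semimartingale topology. Given $Y^n\to 0$ prelocally in $\cH^r$ with localizing sequence $\tau_m$, Burkholder-Davis-Gundy applied to the martingale part and the trivial bound $|H\sint A|\leq \int|dA|$ for the finite-variation part yield, for any predictable $H$ with $|H|\leq 1$, an estimate of the form
\[
E\Big[\sup_t 1\wedge |H\sint Y^n_t|\Big] \;\leq\; P\{\tau_m<T\} + C_r\,\|Y^{n,\tau_m-}\|_{\cH^r}.
\]
Sending first $n\to\infty$ and then $m\to\infty$ gives $d(Y^n,0)\to 0$. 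Granting this, if every subsequence of $(X^n)$ has a sub-subsequence converging prelocally in $\cH^r$ to $X$, then every subsequence has a sub-subsequence converging to $X$ in the (metric) semimartingale topology, and the standard sub-subsequence lemma forces the full sequence to converge.

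For the ``only if'' direction, after replacing $X^n$ by $X^n-X$ I may assume $X=0$ and $d(X^n,0)\to 0$. Extract a subsequence, still denoted $X^n$, with $d(X^n,0)\leq 2^{-n}$, so that by Borel-Cantelli $\sup_{t\leq T}|X^n_t|\to 0$ almost surely. Writing the canonical decomposition $X^n=M^n+A^n$, I would introduce stopping times $\tau_m$ defined as the first time $|X^n_{t-}|$, $[M^n]_t^{1/2}$, or $\int_0^t|dA^n|$ exceeds the level $m$; on $[0,\tau_m)$ the stopped processes are bounded, so BDG combined with a diagonal extraction should deliver $\|X^{n,\tau_m-}\|_{\cH^r}\to 0$ for each fixed $m$.

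The main obstacle is that the stopping times just described depend on $n$, while prelocalization requires a single sequence $(\tau_m)$ that works for all sufficiently large $n$. To overcome this one needs the stability of the canonical decomposition under semimartingale convergence, namely that $M^n$ and $A^n$ individually converge to zero in a suitable prelocalized sense. This stability result, a classical companion to the present proposition, forms the technical heart of \'Emery~\cite{Emery.79}; combined with the stopping argument above it closes the proof.
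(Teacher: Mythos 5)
The paper states this proposition as a cited result of \'Emery~\cite{Emery.79} and supplies no proof of its own, so there is no in-paper argument to compare against; your sketch must be judged on its own terms. The ``if'' direction is essentially correct: for $Y^n := X^n - X$ the BDG inequalities and the trivial bound on the finite-variation part give, for $|H|\leq 1$,
\[
E\Big[\sup_{t\leq T}\, 1\wedge |H\sint Y^n_t|\Big]\ \leq\ P\{\tau_m<T\} + C_r\,\|Y^{n,\tau_m-}\|_{\cH^r}
\]
(modulo a harmless treatment of the jump at $\tau_m$, which the prelocal stopping is designed to remove), so letting $n\to\infty$ and then $m\to\infty$ yields semimartingale convergence along each sub-subsequence, and the metric sub-subsequence lemma finishes this direction.

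The ``only if'' direction has a genuine gap, and you correctly flag part of it yourself. The stopping times $\tau_m$ you propose depend on $n$ through $|X^n_-|$, $[M^n]^{1/2}$ and $\int|dA^n|$, whereas prelocal convergence requires a single sequence $(\tau_m)$ valid for all $n$ simultaneously. More fundamentally, the a.s.\ uniform convergence $\sup_t|X^n_t|\to0$ obtained from Borel--Cantelli by itself gives no control on $[M^n]$ or $\int|dA^n|$: the entire difficulty of the theorem is precisely that smallness in the \'Emery distance must be upgraded to joint control of the canonical decomposition, and that control is what the prelocal $\cH^r$ statement encodes. Invoking ``stability of the canonical decomposition under semimartingale convergence'' as an external lemma does not close the argument --- in \'Emery's development that stability is obtained \emph{via} (or simultaneously with) the prelocal $\cH^r$ characterization, so your sketch is, in effect, assuming the conclusion. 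A complete proof of the hard direction must construct a single stopping time $\tau$ with $P\{\tau<T\}<\varepsilon$ and $\|(X^n)^{\tau-}\|_{\cH^r}\to0$ directly from the smallness of $d(X^n,X)$; this is the heart of \'Emery's argument and is not reproduced here.
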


We denote by $BMO$ the space of martingales $N$ with $N_0=0$ satisfying
\[
   \|N\|_{BMO}^2:=\Big\|\sup_\tau E\big[[N]_T-[N]_{\tau-}\big|\cF_{\tau}\big]\Big\|_{L^\infty}<\infty,
\]
where $\tau$ ranges over all stopping times (more precisely, this is the $BMO_2$-norm).
There exists a similar notion for semimartingales: let $\cH^\omega$ be the subspace of $\cH^1$
consisting of all special semimartingales $X$ with $X_0=0$ and
\[
   \|X\|_{\cH^\omega}^2:=\Big\|\sup_\tau E\Big[\big([M^X]_T-[M^X]_{\tau-}\big)^{1/2} + \textstyle{\int}_{\tau-}^T |dA^X| \,\Big|\cF_{\tau}\Big]\Big\|_{L^\infty}<\infty.
\]

Finally, let $\cR^r$ be the space of scalar adapted processes which are right-continuous and such that
\[
  \|X\|_{\cR^r}:=\Big\|\sup_{0\leq t\leq T} |X_t|\Big\|_{L^r}<\infty.
\]
With a mild abuse of notation, we will use the same norm also for left-continuous processes.

\section{Main Results}\label{se:mainResults}

In this section we present the main results about the limits of the optimal strategies.
To state an assumption in the results, we first have to introduce the \emph{opportunity process $L(p)$}; this is a reduced form of the value process in the language of dynamic programming.
Fix $p$ such that $u_p(x_0)<\infty$. Using the scaling properties of our utility function, we can show that there exists a unique c\`adl\`ag semimartingale $L(p)$ such that
\begin{equation}\label{eq:OppProcIndep}
    L_t(p)\,\tfrac{1}{p}\big(X_t(\pi,c)\big)^p= \mathop{\esssup}_{\tilde{c}\in\cA(\pi,c,t)} E\Big[\int_t^T U_s(\tc_s)\,\mu^\circ(ds)\Big|\cF_t\Big],\quad 0\leq t\leq T
\end{equation}
for all $(\pi,c)\in\cA$, where
$\cA(\pi,c,t):=\big\{(\tilde{\pi},\tilde{c})\in\cA:\, (\tilde{\pi},\tilde{c})=(\pi,c)\mbox{ on }[0,t]\big\}$.
While we refer to~\cite[Proposition~3.1]{Nutz.09a} for the proof, we shall have more to say about $L(p)$ later since it will be an important tool in our analysis.

We can now proceed to state the main results. The proofs are postponed to Sections~\ref{se:pto-inftyProofs} and~\ref{se:pto0Proofs}. Those sections also contain statements about the convergence of the opportunity processes and the solutions to the dual problems, as well as some refinements of the results below.

\subsection{The Limit $p\to-\infty$}\label{se:mainResInfty}

The relative risk aversion $1-p$ of $U^{(p)}$ increases to infinity as $p\to-\infty$. Therefore we expect that in the limit, the agent does not invest at all. In that situation the optimal propensity to consume is $\kappa_t=(1+T-t)^{-1}$ since this corresponds to a constant consumption rate. Our first result shows that this coincides with the limit of the $U^{(p)}$-optimal propensities to consume.

\begin{Thm}\label{th:Limit-inftyEconomics}
 The following convergences hold as $p\to-\infty$.
  \begin{enumerate}[topsep=3pt, partopsep=0pt, itemsep=1pt,parsep=2pt]
    \item Let $t\in[0,T]$. In the case with intermediate consumption,
      \[
        \hkappa_t(p)\to \frac{1}{1+T-t} \quad P\mbox{-a.s.}
      \]
      If $\F$ is continuous, the convergence is uniform in $t$, $P$-a.s.; and holds also in $\cR^r_{loc}$ for all $r\in [1,\infty)$.
    \item If $S$ is continuous and $L(p)$ is continuous for all $p<0$, then
     \[
       \hpi(p)\to 0 \mbox{ in }L^2_{loc}(M)
     \]
     and $\hX(p)\to x_0\exp\big(-\int_0^{\cdot} \frac{\mu(ds)}{1+T-s}\big)$ in the semimartingale topology.
  \end{enumerate}
\end{Thm}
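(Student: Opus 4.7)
My plan is to prove both parts by first establishing the pointwise asymptotics $L_t(p)^{1/(1-p)}\to 1+T-t$ and then reading off the convergence of $\hat\kappa$ from its feedback formula and of $\hat\pi$ from the Bellman BSDE. This mirrors the strategy outlined in the introduction, in which opportunity-process asymptotics drive the convergence of the optimal controls.

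For Part~(i), the dynamic programming identity \eqref{eq:OppProcIndep} (see \cite{Nutz.09a}) yields the feedback formula $\hat\kappa_t(p) = (D_t/L_t(p))^{1/(1-p)}$, so combined with \eqref{eq:BoundsR} the claim reduces to $L_t(p)^{1/(1-p)}\to 1+T-t$. The upper bound comes from testing the suboptimal pair $(\pi,\kappa)=(0,(1+T-\cdot)^{-1})$ in \eqref{eq:OppProcIndep}: the corresponding wealth is $X_s = X_t(1+T-s)/(1+T-t)$ with constant consumption rate $X_t/(1+T-t)$, giving
\[
  L_t(p) \le (1+T-t)^{-p}\,E\Big[\textstyle{\int}_t^T D_s\,\mu^\circ(ds)\,\Big|\,\cF_t\Big]
\]
and thus $\limsup L_t(p)^{1/(1-p)} \le 1+T-t$. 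The matching lower bound is the first main obstacle; intuitively it says that high risk aversion makes the trading option negligible, so $L(p)$ cannot improve much over its no-trading analogue. I would obtain it via convex duality for $U^{(p)}$: the dual problem has its own opportunity process, on which an analogous test-element inequality translates, through the primal-dual identity for power utility, into a quantitative lower bound on $L_t(p)^{1/(1-p)}$ with the same limit $1+T-t$. The uniform and $\cR^r_{\loc}$-statements under continuous $\F$ then follow by a Dini-type upgrade of pointwise convergence, using monotonicity of $p\mapsto L(p)^{1/(1-p)}$ (from Section~\ref{se:auxResults}) and continuity of the deterministic limit $1+T-\cdot$.

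For Part~(ii), continuity of $S$ and of $L(p)$ puts the Bellman equation for $L(p)$ in continuous form; its KW decomposition $L(p)=L_0(p)+Z^L(p)\sint M+N^L(p)+A^L(p)$ combined with a pointwise maximization in $\pi$ of the drift of $L X^p/p+\int D c^p/p\,\mu^\circ$ produces the feedback
\[
  \hat\pi(p) \;=\; \frac{1}{1-p}\Big(\lambda + \frac{Z^L(p)}{L_{-}(p)}\Big),
\]
so that $\hat\pi(p)\to 0$ in $L^2_{\loc}(M)$ amounts to showing that $Z^L(p)/L_{-}(p)$ grows slower than $1-p$ in this norm, the first summand being handled by $\lambda\in L^2_{\loc}(M)$ from \eqref{eq:StructureContForR}. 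This uniform-in-$p$ energy estimate on the BSDE integrand is the second main obstacle: I would derive it by applying It\^o's formula to $L(p)^{1/(1-p)}$ (or $\log L(p)$) and using the two-sided bounds on $L(p)$ from Part~(i) to tame the quadratic driver, extracting $\int |Z^L/L_{-}|^2\,d\langle M\rangle$ as an energy term. Once $\hat\pi(p)\to 0$ in $L^2_{\loc}(M)$ is known, the representation \eqref{eq:wealthExponential} together with Part~(i) identifies the drift limit $-(1+T-\cdot)^{-1}\sint\mu$, and the decomposition $\pi\sint R = \pi\sint M + \int\pi^\top d\langle M\rangle\lambda$ shows that both components of $\hat\pi\sint R$ vanish in $\cH^r_{\loc}$; Proposition~\ref{pr:SMconvPrelocHp} combined with continuity of $\cE(\cdot)$ in the semimartingale topology then yields $\hat X(p)\to x_0\exp(-\int_0^\cdot\mu(ds)/(1+T-s))$.
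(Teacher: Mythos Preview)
Your high-level strategy coincides with the paper's: the quantity $L(p)^{1/(1-p)}$ is precisely the dual opportunity process $L^*(p)$ of~\eqref{eq:DualAndPrimalOppProc}, and Proposition~\ref{pr:DualLconv-infty} establishes $L^*_t(p)\to\mu^\circ[t,T]$ via exactly your primal test for the upper bound together with a dual test element (the density $Y$ of some $Q\in\sM$ inserted into~\eqref{eq:DualOppProcAltDef}) for the lower bound.

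Two execution points deserve comment. First, the Dini upgrade to uniform convergence in the paper does not rely on monotonicity of $p\mapsto L^*(p)$, which Section~\ref{se:auxResults} does not supply in clean form; instead the paper uses that $q\mapsto E[(Y_s/Y_t)^q\mid\cF_t]^{1/q}$ is monotone (Jensen) for a fixed dual element $Y$, and then applies Dini and Egorov inside the lower-bound integral. Second, for Part~(ii) the paper rewrites $\hat\pi=\beta\lambda + Z^{L^*}/L^*$ via~\eqref{eq:dualOppFVformulas} and proves $Z^{L^*}\to 0$ in $L^2_{loc}(M)$ (Proposition~\ref{pr:StrategyConv}) by applying It\^o's formula not directly to $L^*$ but to $\Phi(X)$ with $X=k_2^\beta\mu^\circ[\cdot,T]-L^*$ and $\Phi(x)=e^x-x$; the identity $\Phi''-\Phi'\equiv 1$ is what permits absorbing the cross term $\lambda^\top d\langle M\rangle Z^{L^*}$ after Cauchy--Schwarz and isolating a clean energy on the left. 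Your plan of ``It\^o on $L^\beta$ plus two-sided bounds'' is the right direction but would not close without this (or an equivalent) device, and it also requires a uniform-in-$p$ local lower bound on $L^*_-$ (Lemma~\ref{le:uniformStopping-infty}, coming from the comparison in Proposition~\ref{pr:ComparisonDualL}) rather than merely the pointwise asymptotics from Part~(i).
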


The continuity assumptions in~(ii) are always satisfied if the filtration $\F$ is generated by a Brownian motion; see also Remark~\ref{rk:Lcontinuity}.

\begin{Lit}
  We are not aware of a similar result in the continuous-time literature, with the exception that when the strategies can be calculated explicitly, the convergences mentioned in this section are often straightforward to obtain. E.g., Grasselli~\cite{Grasselli.03} carries out such a construction in a complete market model. There are also related systematic results.
  Carassus and R\'asonyi~\cite{CarassusRasonyi.06} and Grandits and Summer~\cite{GranditsSummer.07} study convergence to the superreplication problem for increasing (absolute) risk aversion of general utility functions in discrete models. Note that superreplicating the contingent claim $B\equiv0$ corresponds to not trading at all. For the maximization of exponential utility $-\exp(-\alpha x)$ without claim, the optimal strategy is proportional to the inverse of the absolute risk aversion $\alpha$ and hence trivially converges to zero in the limit $\alpha\to\infty$. The case with claim is also studied. See, e.g., Mania and Schweizer~\cite{ManiaSchweizer.05} for a continuous model, and Becherer~\cite{Becherer.06} for a related result. The references given here and later in this section do not consider intermediate consumption.
\end{Lit}

We continue with our second main result, which concerns only the case without intermediate consumption.
We first introduce in detail the exponential hedging problem already mentioned above. Let $B\in L^\infty(\cF_T)$ be a contingent claim.  Then the aim is to maximize
the expected exponential utility (here with $\alpha=1$) of the terminal wealth including the claim,

\begin{equation}\label{eq:exponentialProblem}
  \max_{\vartheta\in\Theta} E\big[-\exp\big(B - x_0-(\vartheta\sint R)_T\big)\big],
\end{equation}
where $\vartheta$ is the trading strategy parametrized by the \emph{monetary amounts} invested in the assets
(setting $\overline{\vartheta}^i:=1_{\{S^i_-\neq0\}} \vartheta^i/S^i_-$ yields $\overline{\vartheta} \sint S=\vartheta\sint R$ and corresponds to the more customary \emph{number} of shares of the assets).

To describe the set $\Theta$, we define the \emph{entropy} of $Q\in\sM$ relative to $P$ by
\[
 H(Q|P):=E\Big[\frac{dQ}{dP}\log\Big(\frac{dQ}{dP}\Big)\Big]=E^Q\Big[\log\Big(\frac{dQ}{dP}\Big)\Big]
\]
and let $\sM^{ent}=\big\{Q\in\sM:\,H(Q|P)<\infty\big\}$. We assume in the following that
\begin{equation}\label{eq:finiteEntropy}
 \sM^{ent}\neq\emptyset.
\end{equation}
Now $\Theta:=\big\{\vartheta\in L(R): \vartheta\sint R \mbox{ is a $Q$-supermartingale for all }Q\in\sM^{ent}\big\}$ is the class
of admissible strategies for~\eqref{eq:exponentialProblem}. If $S$ is locally bounded,
there exists an optimal
strategy $\hvartheta\in\Theta$ for~\eqref{eq:exponentialProblem} by Kabanov and Stricker~\cite[Theorem~2.1]{KabanovStricker.02}.
(See Biagini and Fritelli~\cite{BiaginiFrittelli.05, BiaginiFrittelli.07} for the unbounded case.)

As there is no intermediate consumption, the process $D$ in~\eqref{eq:utilityDef} reduces to a random variable $D_T\in L^\infty(\cF_T)$. If we choose
\begin{equation}\label{eq:ClaimIsDiscounting}
  B:=\log(D_T),
\end{equation}
we have the following result.

\begin{Thm}\label{th:strategyConvExponentialMainRes}
  Let $S$ be continuous and assume that $L(p)$ is continuous for all $p<0$. Under~\eqref{eq:finiteEntropy} and~\eqref{eq:ClaimIsDiscounting},
  \[
    (1-p)\, \hpi(p) \to \hvartheta\quad \mbox{ in } L^2_{loc}(M).
  \]
  Here $\hpi(p)$ is in the fractions of wealth parametrization, while $\hvartheta$ denotes the monetary amounts invested for the exponential utility.
\end{Thm}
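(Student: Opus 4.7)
The plan is to leverage the BSDE characterizations of both optimization problems and show, via a logarithmic transform of the power-utility opportunity process, that the associated Kunita--Watanabe integrands converge in $L^2_{loc}(M)$. Under the continuity of $S$ and $L(p)$, the Bellman equation for the power-utility problem (as developed in Nutz~\cite{Nutz.09a}) yields the explicit identification
\[
(1-p)\,\hpi(p) \;=\; \lambda + \frac{Z^L(p)}{L_{-}(p)},
\]
where $Z^L(p)$ is the KW integrand of $L(p)$ against $M$ in the decomposition~\eqref{eq:GKWdecompL}. Correspondingly, the exponential hedging problem~\eqref{eq:exponentialProblem} admits a BSDE representation in the style of Hu--Imkeller--M\"uller or Mania--Schweizer~\cite{ManiaSchweizer.05}: there is a triple $(Y^\infty,Z^\infty,N^\infty)$ with $Y^\infty_T = B$ such that $\hvartheta = \lambda + Z^\infty$. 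The task thus reduces to proving $Z^L(p)/L_{-}(p) \to Z^\infty$ in $L^2_{loc}(M)$.

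To align the two BSDEs, I would set $Y(p):=\log L(p)$, which is well defined since $L(p)>0$. The terminal condition matches: $Y(p)_T = \log L_T(p) = \log D_T = B = Y^\infty_T$, using the assumption~\eqref{eq:ClaimIsDiscounting}. It\^o applied to the KW decomposition of $L(p)$ yields KW integrand $Z^{Y(p)} := Z^L(p)/L_{-}(p)$, so $(1-p)\hpi(p) = \lambda + Z^{Y(p)}$. The Bellman equation for $L(p)$ transforms into a quadratic BSDE for $(Y(p),Z^{Y(p)},N^{Y(p)})$ with a driver $g_p$ depending explicitly on $p$, $\lambda$, $Z^{Y(p)}$, and the orthogonal martingale $N^{Y(p)}$; a direct computation shows that $g_p$ converges pointwise, as $p\to-\infty$, to the driver $g_\infty$ of the BSDE associated with exponential hedging of the claim $B$. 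The desired convergence then becomes a stability statement for this family of quadratic BSDEs.

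The main obstacle, flagged in the introduction, is twofold. First, the drivers $g_p$ exhibit $p$-dependent quadratic growth in $(Z^{Y(p)},N^{Y(p)})$, and second, the solutions $Y(p)$ together with $N^{Y(p)}$ are \emph{a priori} only locally bounded, with no immediate uniform control in $p$. Classical Kobylanski or Briand--Hu stability therefore does not apply off the shelf. The remedy I would pursue has three ingredients: (i) use the two-sided bounds~\eqref{eq:BoundsR} on $D$ together with the \emph{a priori} estimates on $L(p)$ developed in Section~\ref{se:auxResults} to extract uniform-in-$p$ $BMO$ (or $\cH^\omega$) bounds on $(Z^{Y(p)},N^{Y(p)})$; (ii) invoke Proposition~\ref{pr:SMconvPrelocHp} to reduce to prelocalized intervals on which $\cH^2$ estimates are available; and (iii) on each such interval, run an energy estimate on $Y(p)-Y^\infty$ in which the quadratic difference $g_p(Z^{Y(p)})-g_\infty(Z^\infty)$ is absorbed against the $BMO$ bound through a John--Nirenberg / reverse H\"older inequality, yielding $Y(p)\to Y^\infty$ and $Z^{Y(p)}\to Z^\infty$ in the required topology. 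The hardest link in the chain is step (i): establishing $BMO$ bounds on $Z^{Y(p)}/L_{-}(p)$ that are uniform in $p$ as $p\to-\infty$; once this is in hand, the stability argument is essentially mechanical.
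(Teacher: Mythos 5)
Your BSDE--reduction is right as far as it goes: the identity $(1-p)\hpi(p)=\lambda+Z^{L(p)}/L_-(p)$, the exponential counterpart $\hvartheta=\lambda+Z^{L^{\exp}}/L^{\exp}_-$, and the reduction to convergence of the KW integrands are all correct, and a log transform would indeed recast~\eqref{eq:BellmanBSDEforL} as a quadratic BSDE whose driver formally tends to the exponential one as $q\to1$. But the proposal has a genuine gap precisely at the two points you treat as routine.

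First, your step (i) --- uniform-in-$p$ $BMO$ bounds on $Z^{L(p)}/L_-(p)$ --- is not available from the $BMO$ estimate on $M^{L(p)}$ alone. Lemma~\ref{le:LisBMO}(i) gives $\|M^{L(p)}\|_{BMO}\le C$ independently of $p$, so the extra ingredient needed is a localizing sequence that bounds $L_-(p)$ \emph{away from zero simultaneously for all} $p$. That is not a BSDE estimate: it comes from the inequality $L(p)\ge L^{\exp}$ for all $p<0$ together with $L^{\exp},L^{\exp}_->0$. The paper proves $L(p)\ge L^{\exp}$ by a convex--duality argument (passing through $L^*(p)$, Lemma~\ref{le:monotoneFunctionRevHolder}, and the entropy--duality identity for $L^{\exp}$), and proves $L^{\exp}_->0$ from the martingale optimality principle. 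Your proposal never establishes this comparison, so the localization that makes $Z^{L(p)}/L_-(p)$ even a $BMO$ object uniformly in $p$ is missing.

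Second, your step (iii) --- an energy estimate on $Y(p)-Y^\infty$ absorbed via John--Nirenberg/reverse H\"older --- is the type of \emph{a priori} stability argument the paper explicitly declines to use, because it does not work under the standing hypotheses. The absorption trick (Tevzadze, Briand--Elie) requires either smallness of the $BMO$ norms or a comparison principle/a priori estimate for the limit; here the bounds are uniform but not small, and there is no reverse H\"older assumption of the type $\mathrm{R}_q(P)$ in force. What actually makes the argument go through in the paper is the \emph{prior} control-theoretic result that $L_t(p)\downarrow L^{\exp}_t$ pointwise (upper bound from a direct strategy comparison using $\cE^p_{tT}(|p|^{-1}\vartheta\sint R)\to\exp(-(\vartheta\sint R)_{tT})$ together with Schachermayer's approximation of the optimal exponential strategy; lower bound from duality as above; monotonicity of $p\mapsto L(p)$ from Proposition~\ref{pr:ComparisonLNoCons}). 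Only then is a Kobylanski-type argument run on the monotone sequence $L(p_n)-L(p_m)$, using weak $\cH^2$-compactness and monotone convergence to pass to the limit --- avoiding any comparison principle altogether. By jumping straight to a stability estimate between $Y(p)$ and the \emph{putative} limit $Y^\infty$, your plan implicitly assumes what it needs to prove (that $Y(p)\to Y^\infty$ pointwise), and the quadratic cross terms in $g_p(\cdot)-g_\infty(\cdot)$ cannot be controlled without that prior identification.
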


As this convergence may seem surprising at first glance, we give the following heuristics.

\begin{Remark}\label{rk:heuristicsExpConv}
  Assume $B=\log(D_T)=0$ for simplicity. The preferences induced by $U^{(p)}(x)=\tfrac{1}{p}x^p$ on $\R_+$ are not directly comparable to the ones given by the exponential utility, which are defined on $\R$.
  We
  consider the shifted power utility functions
  \[
    \tU^{(p)}(x):= U^{(p)}\big(x+ 1-p \big),\quad x\in (p-1,\infty).
  \]
  Then $\tU^{(p)}$ again has relative risk aversion $1-p>0$ and its domain of definition increases to $\R$ as $p\to-\infty$.
  Moreover,
  \begin{equation}\label{eq:ModifiedPowerToExp}
    (1-p)^{1-p}\; \tU^{(p)}(x) = \tfrac{1-p}{p}\Big(\frac{x}{1-p}+1\Big)^p \to\; -e^{-x},\quad p\to -\infty,
  \end{equation}
  and the multiplicative constant does not affect the preferences.

  Let the agent with utility function $\tU^{(p)}$ be endowed with some initial capital $x^*_0\in\R$ independent of
  $p$. (If $x_0^*<0$, we consider only values of $p$ such that $p-1<x_0^*$.) The change of variables
  $x=\tx + 1-p$
  yields $U^{(p)}(x)=\tU^{(p)}(\tx)$.  Hence the corresponding optimal wealth processes $\hX(p)$ and $\tX(p)$ are related by
  $\tX(p)=\hX(p)-1+p$ if we choose the initial capital $x_0:=x_0^*+1-p>0$ for the agent with $U^{(p)}$.
  We conclude
  \[
    d\tX(p) = d\hX(p)=\hX(p)\hpi(p)\,dR = \big(\tX(p)+1-p\big)\hpi(p)\,dR,
  \]
  i.e., the optimal monetary investment $\tilde{\vartheta}(p)$ for $\tU^{(p)}$ is given by
  \[
    \tilde{\vartheta}(p)=\big(\tX(p)+1-p\big)\hpi(p).
  \]
  In view of~\eqref{eq:ModifiedPowerToExp}, it is reasonable that $\tilde{\vartheta}(p)$ should converge to $\hvartheta$, the optimal monetary investment for the exponential utility.
  We recall that $\hpi(p)$ (in fractions of wealth) does not depend on $x_0$ and converges to zero
  under the conditions of Theorem~\ref{th:Limit-inftyEconomics}. Thus, loosely speaking, $\tX(p)\hpi(p)\approx 0$ for $-p$ large, and hence
  \[
    \tilde{\vartheta}(p)\approx (1-p) \hpi(p).
  \]
  More precisely, one can show that $\lim_{p\to-\infty} \big(\tX(p)\hpi(p)\big) \sint R =0$ in the semimartingale topology, using arguments as in Appendix~\ref{se:convExponentials}.
\end{Remark}

\begin{Lit}
  To the best of our knowledge, the statement of Theorem~\ref{th:strategyConvExponentialMainRes} is new in the systematic literature. However, there are known results on the dual side for the case $B=0$. The problem dual to exponential utility maximization is the minimization
  of $H(Q|P)$ over to $\sM^{ent}$ and the optimal $Q^E\in \sM^{ent}$ is called minimal entropy martingale measure.
  Under additional assumptions on the model, the solution $\hY(p)$ of the dual problem for power utility~\eqref{eq:dualProblem} introduced below is a martingale
  and then the measure $Q^q$ defined by $dQ^q/dP=\hY_T(p)/\hY_0(p)$ is called $q$-optimal martingale measure, where $q<1$ is conjugate to $p$. This measure can be defined also for $q>1$, in which case it is not connected to power utility.
  The convergence of $Q^q$ to $Q^E$ for $q\to1+$ was proved by Grandits and Rheinl\"ander~\cite{GranditsRheinlander.02}
  for continuous semimartingale models satisfying a reverse H\"older inequality. Under the additional assumption
  that $\F$ is continuous, the convergence of $Q^q$ to $Q^E$ for $q\to1$ and more generally the continuity of $q\mapsto Q^q$ for $q\geq0$ were obtained by Mania and Tevzadze~\cite{ManiaTevzadze.03} (see also Santacroce~\cite{Santacroce.05}) using BSDE convergence together with $BMO$ arguments. The latter are possible due to the reverse H\"older inequality; an assumption which is not present in our results.
\end{Lit}

\subsection{The Limit $p\to0$}

As $p$ tends to zero, the relative risk aversion of the power utility tends to $1$, which corresponds to the
utility function $\log(x)$. Hence we consider
\[
  u_{\log}(x_0):=\sup_{c\in\cA(x_0)}E\Big[\int_0^T \log(c_t)\,\mu^\circ(dt)\Big];
\]
here integrals are set to $-\infty$ if they are not well defined in $\overline{\R}$.
A $\log$-utility agent exhibits a very special (``myopic'') behavior,
which allows for an explicit solution of the utility maximization problem (cf.~Goll and Kallsen~\cite{GollKallsen.00, GollKallsen.03}). If in particular $S$ is continuous, the $\log$-optimal strategy is
\[
  \pi_t=\lambda_t,\q \kappa_t=\frac{1}{1+T-t}
\]
by~\cite[Theorem 3.1]{GollKallsen.00}, where $\lambda$ is defined by~\eqref{eq:StructureContForR}.
Our result below shows that the optimal strategy for power utility with $D\equiv 1$ converges to the
$\log$-optimal one as $p\to0$.
In general, the randomness of $D$ is an additional source of risk and will cause an excess hedging demand.
Consider the bounded semimartingale
\[
  \eta_t:=E\Big[\int_t^T D_s\,\mu^\circ(ds)\Big|\cF_t\Big].
\]
If $S$ is continuous,
$\eta = \eta_0  + Z^\eta \sint M + N^\eta+ A^\eta$
denotes the Kunita-Watanabe decomposition of $\eta$ with respect to $M$ and
the standard case $D\equiv1$ corresponds to $\eta_t=\mu^\circ[t,T]$ and $Z^\eta=0$.

\newpage

\begin{Thm}\label{th:Limit0Economics}
 Assume $u_{p_0}(x_0)<\infty$ for some $p_0\in(0,1)$. As $p\to0$,
  \begin{enumerate}[topsep=3pt, partopsep=0pt, itemsep=1pt,parsep=2pt]
    \item in the case with intermediate consumption,
      \[
        \hkappa_t(p)\to \frac{D_t}{\eta_t}\quad \mbox{uniformly in $t$, $P$-a.s.}
      \]
    \item if $S$ is continuous,
     \[
       \hpi(p)\to \lambda + \frac{Z^\eta}{\eta_-} \q\mbox{ in }L^2_{loc}(M)
     \]
     and the corresponding wealth processes converge in the semimartingale topology.
  \end{enumerate}
\end{Thm}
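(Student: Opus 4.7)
The overall plan is to reduce both statements to a convergence result for the opportunity process, $L(p)\to \eta$ as $p\to 0$, and then feed this into the feedback formulas that express $\hkappa(p)$ and $\hpi(p)$ in terms of $L(p)$ and its Kunita--Watanabe decomposition. Heuristically, $\tfrac{1}{p}x^p\to \log x$ as $p\to 0$ and the logarithmic problem is myopic with opportunity process precisely $\eta$, so the limit should indeed be $\eta$; the feedback for $\hkappa$ in the log-case is $D/\eta$. The assumption $u_{p_0}(x_0)<\infty$ for some $p_0\in(0,1)$ keeps the problem non-degenerate on both sides of $0$, so that one may let $p$ tend to $0$ along $p<0$ and $p\in(0,p_0)$ alike.

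For part~(i), I would first establish the pointwise (in fact, $\cR^r_{loc}$-) convergence $L(p)\to\eta$. Two-sided bounds of $L(p)$ in terms of $\eta$ and the monotone/convex dependence of $p\mapsto L(p)$ developed in Section~\ref{se:auxResults} provide the needed compactness; the limit is identified by passing to the limit in the defining property~\eqref{eq:OppProcIndep} with appropriate test strategies, together with a matching upper bound from the convex-dual formulation, invoking $\tfrac{1}{p}x^p\to\log x$ and dominated convergence using the $D$-bounds~\eqref{eq:BoundsR}. Once $L(p)\to \eta$ uniformly is in hand, the feedback identity $\hkappa(p)=(D/L(p))^{1/(1-p)}$, which follows from the verification arguments attached to~\eqref{eq:OppProcIndep}, and the fact that $1/(1-p)\to 1$ deliver the uniform convergence $\hkappa(p)\to D/\eta$ $P$-a.s.

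For part~(ii), under continuity of $S$ the opportunity process satisfies the Bellman BSDE whose solution provides the optimal trading strategy via a feedback of the form $\hpi(p)=\tfrac{1}{1-p}\bigl(\lambda+Z^{L(p)}/L_-(p)\bigr)$, with $Z^{L(p)}$ the integrand against $M$ in the KW decomposition of $L(p)$. The crux, and the main obstacle, is the convergence $Z^{L(p)}\to Z^\eta$ in $L^2_{loc}(M)$. This is delicate because the Bellman driver is quadratic in $Z$ and the a priori estimates are not uniform in $p$ near $0$; standard stability theorems for BSDEs do not apply. My plan is to work along a common localizing sequence of stopping times on which $L(p)$, $L_-(p)^{-1}$ and the $\cH^\omega$/$BMO$ norms of the martingale parts $Z^{L(p)}\sint M+N^{L(p)}$ are uniformly controlled (using Section~\ref{se:auxResults}); on each such interval, writing the BSDE satisfied by $L(p)-\eta$ and applying It\^o's formula to $|L(p)-\eta|^2$ should yield an estimate of the form $\|Z^{L(p)}-Z^\eta\|_{L^2(M)}^2\leq (\textrm{remainder})$ where the remainder is controlled by the already-established $\cR^2_{loc}$-convergence of $L(p)$ to $\eta$ and by the factor $|1-\tfrac{1}{1-p}|\to 0$, and can be closed by a Gronwall-type argument on the localized interval. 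Dividing by $L_-(p)$ and passing to the limit then gives $\hpi(p)\to\lambda+Z^\eta/\eta_-$ in $L^2_{loc}(M)$.

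Finally, the semimartingale-topology convergence of the wealth processes follows from the representation $\hX(p)=x_0\cE\bigl(\hpi(p)\sint R-\hkappa(p)\sint\mu\bigr)$: the $L^2_{loc}(M)$-convergence of $\hpi(p)$ together with~\eqref{eq:StructureContForR} yields convergence of $\hpi(p)\sint R$ in the semimartingale topology via Proposition~\ref{pr:SMconvPrelocHp}, the uniform convergence from~(i) handles $\hkappa(p)\sint\mu$, and the continuity of the stochastic exponential on $(\cS,d)$ from~\cite{Emery.79} closes the argument.
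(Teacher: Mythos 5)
Your overall architecture is the right one and matches the paper's: convergence of the opportunity process (or its dual) as $p\to 0$, fed into the feedback formulas $\hkappa(p)=(D/L(p))^{\beta}$ and $\hpi(p)=\beta(\lambda+Z^{L(p)}/L_-(p))$, plus a stability argument for the stochastic exponential to transfer these to the wealth processes. Part~(i) and the wealth-process step are essentially as in the paper (the latter is packaged there as Corollary~\ref{co:ForWealthProcConv} via Lemmata~\ref{le:convStochExp} and~\ref{le:convMandR}).

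The gap is in part~(ii), specifically for the one-sided limit $p\to 0+$. You propose to ``work along a common localizing sequence of stopping times on which $L(p)$, $L_-(p)^{-1}$ and the $\cH^\omega$/$BMO$ norms of the martingale parts \ldots are uniformly controlled,'' then run a Gronwall/It\^o argument. This fails on the $p>0$ side because $L(p_0)$ is a supermartingale that need \emph{not} be locally bounded under the sole assumption $u_{p_0}(x_0)<\infty$; its jumps may be unbounded, so one can only control it \emph{prelocally}. Consequently Lemma~\ref{le:LisBMO}(ii), the $BMO$-estimate, applies only on intervals where $L(p_0)^\sigma\leq\alpha$, and no such localizing sequence exists in general. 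The paper flags exactly this in Section~\ref{subsec:limit0+}: ``The major difficulty will be that $L(p)$ may have unbounded jumps; i.e., we have to prove the convergence of quadratic BSDEs whose solutions are not locally bounded.'' The actual proof of Proposition~\ref{pr:Strategy0+}(i) replaces your uniform $BMO$-localization with three weaker, prelocal conditions (a)--(c). Conditions (a)--(b) come from class-(D) integrability (Corollary~\ref{co:ComparisonAppl}(ii), Lemma~\ref{le:integrablility0+}) and the uniform lower bound $L(p)\geq k_1$. The decisive missing ingredient is (c): a $\limsup$ bound on $\|Z^{L(p)}1_{[0,\sigma_n]}\|_{L^2(M)}$, which the paper extracts from the Bellman BSDE by integrating the driver over $[0,T]$ and using a \emph{rate of convergence}: it shows $L_0(p)\to\eta_0$ and $\Gamma_0(p)\to\eta_0$ linearly in $|q|$ (Lemma~\ref{le:linearConvergence}), so that $\frac{1}{|q|}(L_0-\Gamma_0)$ stays bounded and compensates the $|q|$-factor multiplying the quadratic term in the driver. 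Your Gronwall proposal implicitly presupposes an $L^2$-bound on $Z^{L(p)}$ uniform in $p$, which is precisely what has to be \emph{proved} this way; without it the quadratic term cannot be absorbed. The uniform $BMO$-route you describe corresponds to Proposition~\ref{pr:Strategy0+}(ii)--(iii), which require the extra hypothesis that $L(p_1)$ be (locally) bounded and hence give strictly more than what Theorem~\ref{th:Limit0Economics}(ii) can assume.

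A smaller remark: for $p\to 0-$ the quadratic term is again handled without Gronwall. The paper exploits that it comes with a factor $q\to 0$ in the driver of~\eqref{eq:BellmanBSDEforL}, so after taking conditional expectations of $[M^X]_T-[M^X]_t$ for $X=\eta-L(p)$ the problematic term is bounded by $q$ times a uniformly bounded $BMO$-norm (from Lemma~\ref{le:LisBMO}(i)) and vanishes directly. Gronwall is not needed and would require closing an estimate whose constants are not obviously uniform in $p$.
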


\begin{Remark}\label{rk:Limit0Economics}
  If we consider the limit $p\to 0-$, we need not \emph{a priori} assume that $u_{p_0}(x_0)<\infty$ for some $p_0>0$.
  Without that condition, the assertions of Theorem~\ref{th:Limit0Economics} remain valid if (i) is replaced by the weaker statement that $\lim_{p\to0-} \hkappa_t(p)\to D_t/\eta_t$ $P$-a.s.\ for all $t$. If $\F$ is continuous, (i) remains valid
  without changes. In particular, these convergences hold even if $u_{\log}(x_0)=\infty$.
\end{Remark}

\begin{Lit}
  In the following discussion we assume $D\equiv1$ for simplicity. It is part of the folklore that the $\log$-optimal strategy can be obtained from $\hpi(p)$ by formally setting $p=0$. Initiated by Jouini and Napp~\cite{JouiniNapp.04}, a recent branch of the literature studies the stability of the utility maximization problem under perturbations of the utility function (with respect to pointwise convergence) and other ingredients of the problem. To the best of our knowledge, intermediate consumption was not considered so far and the results for continuous time concern continuous semimartingale models.

  We note that $\log(x)=\lim_{p\to0} (U^{(p)}(x)-p^{-1})$ and here the additive constant does not influence the optimal strategy, i.e., we have pointwise convergence of utility functions ``equivalent'' to $U^{(p)}$. Now Larsen~\cite[Theorem~2.2]{Larsen.09} implies that the optimal terminal wealth $\hX_T$ for $U^{(p)}$ converges in probability to the $\log$-optimal one and that the value functions at time zero converge pointwise (in the continuous case without consumption).
  We use the specific form of our utility functions and obtain a stronger result.
  Finally, we can mention that on the dual side and for $p\to0-$, the convergence is related to the continuity of $q$-optimal measures as mentioned after Remark~\ref{rk:heuristicsExpConv}.
\end{Lit}

For general $D$ and $p$, it seems difficult to determine the precise influence of $D$ on the optimal trading strategy $\hpi(p)$. We can read Theorem~\ref{th:Limit0Economics}(ii) as a partial result on the excess hedging demand $\hpi(p)-\hpi(p,1)$ due to $D$; here $\hpi(p,1)$ denotes the optimal strategy for the case $D\equiv1$.

\begin{Cor}
  Suppose that the conditions of Theorem~\ref{th:Limit0Economics}(ii) hold.
  Then $\hpi(p)-\hpi(p,1)\to Z^\eta / \eta_-$ in $L^2_{loc}(M)$; i.e.,
  the asymptotic excess hedging demand due to $D$ is given by $Z^\eta / \eta_-$.
\end{Cor}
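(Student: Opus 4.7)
The plan is to deduce the corollary by applying Theorem~\ref{th:Limit0Economics}(ii) twice: once to the given model and once to the auxiliary model obtained by replacing $D$ with the constant process $1$.

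First, I would verify that the hypotheses of Theorem~\ref{th:Limit0Economics}(ii) are met for both the original problem and the auxiliary problem with $D\equiv 1$. Continuity of $S$ is independent of $D$. For the standing assumption $u_{p_0}(x_0)<\infty$ for some $p_0\in(0,1)$, the two-sided bounds $k_1\leq D\leq k_2$ from~\eqref{eq:BoundsR} imply that the expected utilities for the two specifications of $D$ differ by at most the multiplicative factor $k_2/k_1$, so finiteness of $u_{p_0}$ for the original problem transfers to the auxiliary one.

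Next, in the auxiliary model we have $\eta_t=E[\int_t^T 1\,\mu^\circ(ds)\mid \cF_t]=\mu^\circ[t,T]$, a deterministic continuous function of finite variation. Its Kunita--Watanabe decomposition against $M$ has both the $M$-stochastic integral and the orthogonal martingale part equal to zero; in particular $Z^\eta\equiv 0$. Hence Theorem~\ref{th:Limit0Economics}(ii), applied to the auxiliary model, gives
\[
  \hpi(p,1)\to \lambda \q\text{in }L^2_{loc}(M).
\]
Applied to the original model it gives
\[
  \hpi(p)\to \lambda + \frac{Z^\eta}{\eta_-}\q\text{in }L^2_{loc}(M).
\]

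Finally, I would subtract the two convergences. The only subtlety is that convergence in $L^2_{loc}(M)$ is non-metrizable and depends on a choice of localizing sequence; but if $(\tau_m)$ and $(\sigma_m)$ localize the two convergences respectively, then $(\tau_m\wedge \sigma_m)$ localizes both simultaneously, and on each stopped interval $L^2(M)$ is a Banach space, so linearity of the limit gives
\[
  \hpi(p)-\hpi(p,1)\to \frac{Z^\eta}{\eta_-}\q\text{in }L^2_{loc}(M).
\]
There is no real obstacle here; the corollary is essentially a direct reading of Theorem~\ref{th:Limit0Economics}(ii) with the observation that the $D\equiv 1$ case kills the $Z^\eta/\eta_-$ term and isolates the Merton-type component $\lambda$.
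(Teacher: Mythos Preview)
Your proposal is correct and follows exactly the argument the paper has in mind: the corollary is stated without proof precisely because it is the difference of two applications of Theorem~\ref{th:Limit0Economics}(ii), one for general $D$ and one for $D\equiv1$, and you have spelled out the routine verifications (transfer of $u_{p_0}(x_0)<\infty$ via~\eqref{eq:BoundsR}, vanishing of $Z^\eta$ when $\eta_t=\mu^\circ[t,T]$ is deterministic, and the common localizing sequence for the subtraction).
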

The stability theory mentioned above considers also perturbations of the probability measure $P$
(see Kardaras and {\v Z}itkovi\'c~\cite{KardarasZitkovic.09}) and our
corollary can be related as follows. In the special case when $D$ is a martingale, $U^{(p)}$ under $P$ corresponds to the standard power utility function optimized under the measure $d\widetilde{P}=(D_T/D_0)\,dP$ (see~\cite[Remark~2.2]{Nutz.09a}).
The excess hedging demand due to $D$ then represents the influence of the ``subjective beliefs'' $\widetilde{P}$.

\section{Tools and Ideas for the Proofs}\label{se:tools}

In this section we introduce our main tools and then present the basic ideas how
to apply them for the proofs of the theorems.

\subsection{Opportunity Processes}

We fix $p$ and assume $u_p(x_0)<\infty$ throughout this section. We first discuss the properties of the (primal) opportunity process
$L=L(p)$ as introduced in~\eqref{eq:OppProcIndep}. Directly from that equation we have that $L_T=D_T$ and
that $u_p(x_0)=L_0\tfrac{1}{p}x_0^p$ is the value function from~\eqref{eq:PrimalProblemFinite}.
Moreover, $L$ has the following properties by~\cite[Lemma~3.5]{Nutz.09a} in view of~\eqref{eq:BoundsR}.

\begin{Lemma}\label{le:BoundsForL} The opportunity process satisfies $L,L_->0$.
  \begin{enumerate}[topsep=3pt, partopsep=0pt, itemsep=1pt,parsep=2pt]
    \item If $p\in (0,1)$, $L$ is a supermartingale satisfying
    \[
      L_t\geq \big(\mu^\circ[t,T]\big)^{-p}\,E\Big[\int_t^T D_s\, \mu^\circ(ds)\Big|\cF_t\Big]\geq k_1.
    \]
    \item If $p<0$, $L$ is a bounded semimartingale satisfying
    \[
      0<L_t\leq \big(\mu^\circ[t,T]\big)^{-p}\,E\Big[\int_t^T D_s\, \mu^\circ(ds)\Big|\cF_t\Big] \leq k_2 \big(\mu^\circ[t,T]\big)^{1-p}.
    \]
    If in addition there is no intermediate consumption, then $L$ is a submartingale.
  \end{enumerate}
\end{Lemma}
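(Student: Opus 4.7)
The plan is to read off both displayed chains of inequalities by inserting one explicit admissible continuation into~\eqref{eq:OppProcIndep}, and then to obtain the supermartingale or submartingale property from the Bellman principle underlying~\eqref{eq:OppProcIndep}. Strict positivity of $L_-$ is then immediate in one case and requires a separate argument in the other.

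Fix $(\pi,c)\in\cA$ and $t\in[0,T]$, write $X=X(\pi,c)$, $x:=X_t>0$, and construct an element of $\cA(\pi,c,t)$ by switching on $[t,T]$ to no-trading $\tilde\pi\equiv 0$ together with the propensity $\tilde\kappa_s:=1/\mu^\circ[s,T]$. By~\eqref{eq:wealthExponential} this yields $\tilde X_s=x\,\mu^\circ[s,T]/\mu^\circ[t,T]>0$ and $\tilde\kappa_T=1$, so admissibility holds, and the consumption $\tilde c_s=\tilde\kappa_s\tilde X_s=x/\mu^\circ[t,T]$ is constant on $[t,T]$. Substituting into~\eqref{eq:utilityDef} gives
\[
E\Big[\int_t^T U_s(\tilde c_s)\,\mu^\circ(ds)\,\Big|\,\cF_t\Big] =\tfrac{1}{p}\big(x/\mu^\circ[t,T]\big)^p\,E\Big[\int_t^T D_s\,\mu^\circ(ds)\,\Big|\,\cF_t\Big],
\]
which by~\eqref{eq:OppProcIndep} lower-bounds $L_tx^p/p$. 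Dividing by $x^p/p$ (positive for $p\in(0,1)$, negative for $p<0$) and inserting the elementary bound $k_1\mu^\circ[t,T]\le E[\int_t^T D_s\,\mu^\circ(ds)\,|\,\cF_t]\le k_2\mu^\circ[t,T]$ from~\eqref{eq:BoundsR} produces both displayed chains of inequalities; in particular $L_t\ge k_1$ when $p\in(0,1)$, while for $p<0$ the essential supremum is finite and strictly negative, forcing $L_t>0$.

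The super/submartingale assertions follow from the Bellman optimality principle behind~\eqref{eq:OppProcIndep}: for every admissible $(\pi,c)$ the process $\tfrac{L_tX_t^p}{p}+\int_0^tU_s(c_s)\,\mu(ds)$ is a supermartingale. Specializing to $\pi\equiv 0$ and $c\equiv 0$ on $[0,T)$ in the case $p\in(0,1)$ (admissible because $U_s(0)=0$) collapses this to $L_tx_0^p/p$, so $L$ is a supermartingale; the same specialization to $\pi\equiv 0$ in the no-consumption subcase ($\mu\equiv 0$) again gives that $L_tx_0^p/p$ is a supermartingale, but now $x_0^p/p<0$ flips the inequality and $L$ becomes a submartingale. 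For $p\in(0,1)$ the pointwise bound $L_t\ge k_1$, passed through right-continuity along a countable dense set, upgrades to the pathwise bound $L\ge k_1$, whence $L_-\ge k_1>0$.

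The main obstacle is strict positivity of $L_-$ in the $p<0$ case, where the lemma only provides an upper bound on $L$ and, with intermediate consumption, no super- or submartingale structure at all. My approach would be to exploit the BSDE/Bellman-equation description of $L$ that the paper develops in Section~\ref{se:tools}: the driver of that equation involves $1/L_-$, so no admissible solution can allow $L_-$ to reach zero, and combining this with the uniform upper bound $L\le k_2(\mu^\circ[\cdot,T])^{1-p}$ together with an absorbing-set argument applied to a suitable positive transform of $L$ (feasible precisely because of that upper bound) delivers the pathwise statement $L_->0$.
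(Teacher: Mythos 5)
The paper does not prove this lemma itself --- it cites \cite[Lemma~3.5]{Nutz.09a} --- so your proposal is to be judged on its own merits. The parts you actually carry out are correct: plugging the no-trading, constant-rate continuation $\tilde\pi\equiv 0$, $\tilde\kappa_s=1/\mu^\circ[s,T]$ into \eqref{eq:OppProcIndep} and dividing by $\tfrac1p x^p$ (sign depending on $p$) does yield both chains of inequalities, the upgrade of the bound $L_t\ge k_1$ to the pathwise statement $L,L_-\ge k_1$ via right-continuity is fine, and the martingale optimality principle applied to the null strategy cleanly gives the supermartingale property for $p\in(0,1)$ and the submartingale property for $p<0$ without consumption. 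Two caveats on the easy direction: the sentence ``the essential supremum is strictly negative'' is not automatic for an $\esssup$ of a family of strictly negative random variables; you need that it is attained by the conditionally optimal consumption (Proposition~\ref{pr:ExistenceKZ}) whose terminal value is a.s.\ strictly positive, so that $\esssup=E[\int_t^T U_s(\hc_s)\,\mu^\circ(ds)\,|\,\cF_t]\le E[U_T(\hc_T)\,|\,\cF_t]<0$ a.s.

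The genuine gap is exactly where you flag it: the pathwise statements $L>0$ and $L_->0$ for $p<0$. Your proposed route through the Bellman BSDE does not work at the stated level of generality, because the BSDE \eqref{eq:BellmanBSDEforL} is only available in this framework for \emph{continuous} $S$, whereas the lemma is asserted for general semimartingale models; moreover with intermediate consumption $L$ is not a sub- or supermartingale, so the ``absorbing-set argument applied to a suitable positive transform of $L$'' has no obvious candidate process. The standard and fully general fix bypasses $L$ itself and instead uses the value process along the optimal strategy: $\mathcal V_t:=E[\int_t^T U_s(\hc_s)\,\mu^\circ(ds)\,|\,\cF_t]=L_t\,\tfrac1p\hX_t^p$ is a c\`adl\`ag martingale by the optimality principle, and $\mathcal V_T=U_T(\hc_T)<0$ a.s.\ since $\hc_T=\hX_T>0$. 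Hence $-\mathcal V$ is a positive martingale with strictly positive terminal value, and the absorption-at-zero property of positive c\`adl\`ag supermartingales gives $-\mathcal V>0$ and $-\mathcal V_->0$ up to evanescence. Since $\hX,\hX_->0$ by admissibility, $L_t=p\mathcal V_t\hX_t^{-p}$ and $L_{t-}=p\mathcal V_{t-}\hX_{t-}^{-p}$ are then strictly positive pathwise. That is the missing ingredient; everything else in your argument stands.
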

In particular, $L$ is always a special semimartingale.
We denote by
\begin{equation}\label{eq:betaAndq}
  \beta:=\frac{1}{1-p}>0,\quad  q:=\frac{p}{p-1}\in(-\infty,0)\cup (0,1)
\end{equation}
the relative risk tolerance and the exponent conjugate to $p$, respectively. These constants are of course redundant given $p$, but turn out to simplify the notation.

In the case with intermediate consumption, the opportunity process and the optimal consumption are related by
\begin{equation}\label{eq:consumptionFeedback}
  \hc_t=\Big(\frac{D_t}{L_t}\Big)^\beta\hX_t \q\q\mbox{and hence}\q\q \hkappa_t=\Big(\frac{D_t}{L_t}\Big)^\beta
\end{equation}
according to~\cite[Theorem~5.1]{Nutz.09a}. %
Next, we introduce the convex-dual analogue of $L$; cf.~\cite[\S4]{Nutz.09a} for the following
notions and results.
The \emph{dual problem} is
\begin{equation}\label{eq:dualProblem}
  \inf_{Y\in\sY} E\Big[\int_0^T U_t^*(Y_t)\,\mu^\circ(dt)\Big],
\end{equation}
where $U_t^*(y)=\sup_{x>0} \big\{U_t(x)-xy\big\}=-\tfrac{1}{q}y^{q}D_t^{\beta}$ is the conjugate of $U_t$.
Only three properties of the domain $\sY=\sY(p)$ are relevant for us. First, each element $Y\in \sY$ is a positive c\`adl\`ag supermartingale. Second,
the set $\sY$ depends on $p$ only by a normalization: with the constant $y_0(p):=L_0(p)x_0^{p-1}$, the set $\sY':=y_0(p)^{-1} \sY(p)$ does not depend on $p$. As the elements of $\sY$ will occur only in terms of certain fractions, the constant
plays no role. Third, the $P$-density process of any $Q\in\sM$ is contained in $\sY$ (modulo scaling).

The \emph{dual opportunity process} $L^*$ is the analogue of $L$ for the dual problem and can be defined by
\begin{equation}\label{eq:DualOppProcAltDef}
L^*_t :=
  \begin{cases}
    \mathop{\esssup}_{Y\in\sY}& \hspace{-.8em} E\Big[\int_t^T D_s^\beta (Y_s/Y_t)^q\,\mu^\circ(ds)\Big|\cF_t\Big]
    \;\text{ if }p<0\phantom{\bigg|},\\
    \mathop{\essinf}_{Y\in\sY}& \hspace{-.8em} E\Big[\int_t^T D_s^\beta (Y_s/Y_t)^q\,\mu^\circ(ds)\Big|\cF_t\Big]
    \;\text{ if }p\in(0,1).
  \end{cases}
\end{equation}
Here the extremum is attained at the minimizer $Y\in\sY$ for~\eqref{eq:dualProblem}, which we denote by $\hY=\hY(p)$. %
Finally, we shall use that the primal and the dual opportunity process are related by the power
\begin{equation}\label{eq:DualAndPrimalOppProc}
  L^*=L^\beta.
\end{equation}

\subsection{Bellman BSDE}
We continue with a fixed $p$ such that $u_p(x_0)<\infty$. We recall the Bellman equation, which in the present paper will be used only for continuous $S$.
In this case, recall~\eqref{eq:StructureContForR} and let $L=L_0+Z^L\sint M + N^L+ A^L$ be the
KW decomposition of $L$ with respect to $M$.
Then the triplet $(L,Z^L,N^L)$ satisfies the Bellman BSDE
\begin{align}\label{eq:BellmanBSDEforL}
  d L_t & = \frac{q}{2}\, L_{t-}\Big(\lambda_t+\frac{Z^L_t}{L_{t-}}\Big)^\top\,d\br{M}_t\, \Big(\lambda_t+\frac{Z^L_t}{L_{t-}}\Big)\;
              -p U^*_t( L_{t-})\,\mu(dt)\nonumber\\
       &\phantom{=}\; + Z^L_t\,dM_t + dN^L_t; \\
   L_T  & = D_T. \nonumber
\end{align}
Put differently, the finite variation part of $L$ satisfies
\begin{equation}\label{eq:FVpartofL}
    A^L_t
    =\frac{q}{2}\int_0^t \,L_{s-}\Big(\lambda_s+\frac{Z^L_s}{L_{s-}}\Big)^\top\,d\br{M}_s\, \Big(\lambda_s+\frac{Z^L_s}{L_{s-}}\Big)\; -p \int_0^t U^*_s(L_{s-})\,\mu(ds).
\end{equation}
Here $U^*$ is defined as in~\eqref{eq:dualProblem}. Moreover, the optimal trading strategy $\hpi$ can be described by
\begin{equation}\label{eq:optStrategy}
  \hpi_t=\beta\Big(\lambda_t+\frac{Z^L_t}{L_{t-}}\Big).
\end{equation}
See Nutz~\cite[Corollary~3.12]{Nutz.09b} for these results. Finally, still under the assumption of continuity, the solution to the dual problem~\eqref{eq:dualProblem} is given by the local martingale
\begin{equation}\label{eq:DualOptimizerFormula}
  \hY=y_0\cE\Big(-\lambda\sint M + \frac{1}{L_{-}}\sint N^L\Big),
\end{equation}
with the constant $y_0=u'_p(x_0)=L_0 x_0^{p-1}$ (cf.~\cite[Remark~5.18]{Nutz.09b}).

\begin{Remark}\label{rk:Lcontinuity}
  Continuity of $S$ does not imply that $L$ is continuous; the local martingale $N^L$ may still have jumps
  (see also~\cite[Remark~3.13(i)]{Nutz.09b}).
 If the filtration $\F$ is continuous (i.e., all $\F$-martingales are continuous), it clearly follows that $L$ and $S$ are continuous. %
 The most important example with this property is the Brownian filtration.
\end{Remark}

\subsection{The Strategy for the Proofs}
We can now summarize the basic scheme that is common for the proofs of the three theorems.

The first step is to prove the \emph{pointwise convergence} of the opportunity process $L$ or of the dual opportunity process $L^*$; the choice of the process depends on the theorem. The convergence of the optimal propensity to consume $\hkappa$ then follows in view of the feedback formula~\eqref{eq:consumptionFeedback}. The definitions of $L$ and $L^*$ via the value processes lend themselves to control-theoretic arguments and of course Jensen's inequality will be the basic tool to derive estimates. In view of the relation $L^*=L^\beta$ from~\eqref{eq:DualAndPrimalOppProc}, it is essentially equivalent whether one works with $L$ or $L^*$, as long as $p$ is fixed. However, the dual problem has the advantage of being defined over a set of supermartingales, which are easier to handle than consumption and wealth processes. This is particularly useful when passing to the limit.

The second step is the convergence of the trading strategy $\hpi$. Note that its formula~\eqref{eq:optStrategy} contains the integrand $Z^L$ from the KW decomposition of $L$ with respect to~$M$. Therefore, the convergence of $\hpi$ is related to the \emph{convergence of the martingale part $M^L$} (resp.~$M^{L^*}$). In general, the pointwise convergence of a semimartingale is not enough to conclude the convergence of its martingale part; this requires some control over the semimartingale decomposition. In our case, this control is given by the Bellman BSDE~\eqref{eq:BellmanBSDEforL}, which can be seen as a description for the dependence of the finite variation part $A^L$ on the martingale part $M^L$.
As we use the BSDE to show the convergence of $M^L$, we benefit from techniques from the theory of quadratic BSDEs.
However, we cannot apply standard results from that theory since our assumptions are not strong enough.

In general, our approach is to extract as much information as possible by basic control arguments and convex analysis \emph{before} tackling the BSDE, rather than to rely exclusively on (typically delicate) BSDE arguments. For instance, we use the BSDE only after establishing the pointwise convergence of its left hand side, i.e., the opportunity process. This essentially eliminates the need for an \emph{a priori} estimate or a comparison principle and constitutes a key reason for the generality of our results.
Our procedure shares basic features of the viscosity approach to Markovian control problems, where one also works directly with the value function before tackling the Hamilton-Jacobi-Bellman equation.

\section{Auxiliary Results}\label{se:auxResults}

We start by collecting inequalities for the dependence of the opportunity processes on $p$.
The precise formulations are motivated by the applications in the proofs of the previous theorems,
but the comparison results are also of independent interest.

\subsection{Comparison Results}

We assume the entire section that $u_{p_0}(x_0)<\infty$ for a given exponent $p_0$. For convenience, we restate the quantities
$\beta=1/(1-p)>0$ and $q=\frac{p}{p-1}$ defined in~\eqref{eq:betaAndq}.
It is useful to note that
$q\in (-\infty,0)$ for $p\in (0,1)$ and vice versa. When there is a second exponent $p_0$ under consideration, $\beta_0$ and $q_0$ have the obvious definition. We also recall from~\eqref{eq:BoundsR} the bounds $k_1$ and $k_2$ for $D$.

\begin{Prop}\label{pr:ComparisonDualL}
 Let $0<p< p_0<1$. For each $t\in[0,T]$,
  \begin{align}\label{eq:ComparisonL1}
     L_t^*(p)\; & \leq\;E\Big[\int_t^T D_s^{\beta}\,\mu^\circ(ds)\Big|\cF_t\Big]^{1-q/q_0} \,\Big(k_1^{\beta-\beta_0} L^*_t(p_0)\Big)^{q/q_0},\\
     L_t(p)\;& \leq\;\big(k_2 \mu^\circ[t,T]\big)^{1-p/p_0} L_t(p_0)^{p/p_0}.\label{eq:ComparisonL2}
  \end{align}
 If $p< p_0<0$, the converse inequalities hold, if in~\eqref{eq:ComparisonL1} $k_1$ is replaced by $k_2$.
 If $p<0<p_0<1$, the converse inequalities hold, if in~\eqref{eq:ComparisonL2} $k_2$ is replaced by $k_1$.
\end{Prop}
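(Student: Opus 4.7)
My plan is to deduce both~\eqref{eq:ComparisonL1} and~\eqref{eq:ComparisonL2} from (possibly reverse) H\"older's inequality applied on the product measure $dE[\,\cdot\,|\cF_t]\otimes\mu^\circ(ds)$, combined with the two-sided bound $k_1\le D\le k_2$ from~\eqref{eq:BoundsR}. I sketch the case $0<p<p_0<1$ in detail and indicate how the sign-cases reduce to it.

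For~\eqref{eq:ComparisonL2}, I would rewrite~\eqref{eq:OppProcIndep} as
\[
   L_t(p)=\esssup_{\tc}\, E\Bigl[\int_t^T D_s(\tc_s/X_t)^p\,\mu^\circ(ds)\,\Big|\,\cF_t\Bigr],
\]
decompose the integrand by $D_s(\tc_s/X_t)^p=\bigl[D_s(\tc_s/X_t)^{p_0}\bigr]^{p/p_0}D_s^{1-p/p_0}$, and apply conditional H\"older with the conjugate pair $(p_0/p,\,p_0/(p_0-p))$, both exceeding one since $0<p<p_0<1$. The result,
\[
   E_t\Bigl[\int D_s(\tc/X_t)^p\,\mu^\circ(ds)\Bigr]\le E_t\Bigl[\int D_s(\tc/X_t)^{p_0}\,\mu^\circ(ds)\Bigr]^{p/p_0}E_t\Bigl[\int D_s\,\mu^\circ(ds)\Bigr]^{1-p/p_0},
\]
yields~\eqref{eq:ComparisonL2} after bounding $E_t[\int D_s\mu^\circ(ds)]\le k_2\mu^\circ[t,T]$ (positive exponent $1-p/p_0$) and passing to the $\esssup$ on both sides (it commutes with the positive power $p/p_0$).

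For~\eqref{eq:ComparisonL1} I would use the essinf version of~\eqref{eq:DualOppProcAltDef}, valid for $p\in(0,1)$, namely $L^*_t(p)=\essinf_{Y\in\sY} E_t\bigl[\int_t^T D_s^\beta(Y_s/Y_t)^q\,\mu^\circ(ds)\bigr]$, together with the analogous decomposition $D_s^\beta(Y/Y_t)^q=\bigl[D_s^\beta(Y/Y_t)^{q_0}\bigr]^{q/q_0}\bigl[D_s^\beta\bigr]^{1-q/q_0}$. Because $q,q_0<0$ with $q>q_0$, one has $q/q_0\in(0,1)$, so the conjugate pair $(q_0/q,\,q_0/(q_0-q))$ again sits above one and ordinary H\"older delivers
\[
   E_t\Bigl[\int D_s^\beta(Y/Y_t)^q\,\mu^\circ\Bigr]\le E_t\Bigl[\int D_s^\beta(Y/Y_t)^{q_0}\,\mu^\circ\Bigr]^{q/q_0}E_t\Bigl[\int D_s^\beta\,\mu^\circ\Bigr]^{1-q/q_0}.
\]
The pointwise bound $D_s^\beta\le k_1^{\beta-\beta_0}D_s^{\beta_0}$, which holds because $\beta<\beta_0$ and $D\ge k_1$, converts $D^\beta$ to $D^{\beta_0}$ inside the first factor; applying $\essinf_Y$ (which commutes with the positive power $q/q_0$) then produces $L^*_t(p_0)^{q/q_0}$, giving~\eqref{eq:ComparisonL1}.

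The two remaining sign-configurations, $p<p_0<0$ and $p<0<p_0<1$, use the same algebraic decompositions but with at least one of $p/p_0$, $q/q_0$ leaving $(0,1)$. In such cases ordinary H\"older is replaced by its reverse form (valid for positive integrands with one conjugate exponent in $(0,1)$ or in $(-\infty,0)$), which flips the direction of the inequality; simultaneously the variational formulas for $L$ and $L^*$ switch between $\esssup$ and $\essinf$ according to $\operatorname{sign}(p)$. The one-sided bound on $D^{\beta-\beta_0}$ then switches between $k_1$ and $k_2$ depending on whether the required substitution is an upper or lower bound on $D^\beta$, producing exactly the $k$-swap announced in the statement. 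I expect no single analytic step to be difficult; the main obstacle is the combinatorial bookkeeping across the six sign cases, verifying in each that (i)~the variational formula has the correct orientation, (ii)~the H\"older exponents land in the appropriate range for the direct or reverse form, and (iii)~the chosen constant $k_i$ matches the direction in which $D^\beta$ is compared to $D^{\beta_0}$.
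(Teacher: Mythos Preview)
Your approach is correct and essentially the same as the paper's. The paper also proves both inequalities by a H\"older/Jensen argument on the product measure $E[\,\cdot\,|\cF_t]\otimes\mu^\circ$: it sets up an auxiliary probability $\nu(ds,d\omega)\propto D_s^\beta\,\mu^\circ(ds)\,dP$ and applies conditional Jensen for $x\mapsto x^\alpha$ with $\alpha=q/q_0$ (resp.\ the usual Jensen twice for~\eqref{eq:ComparisonL2}), which is exactly your H\"older step in disguise; the only cosmetic difference is that the paper plugs in the specific optimizer $\hY(p_0)$ (resp.\ $\hc(p)$) rather than passing to the $\essinf/\esssup$ at the end, and the sign-case bookkeeping is handled in the same way.
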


\begin{proof}
  We fix $t$ and begin with~\eqref{eq:ComparisonL1}.
  To unify the proofs, we first argue a Jensen's inequality:
  if $X=(X_s)_{s\in[t,T]}>0$ is optional and $\alpha\in (0,1)$, then
  \begin{equation}\label{eq:JensenInProofComparison}
       E\Big[\int_t^T \hspace{-.5em} D_s^{\beta} X_s^\alpha\,\mu^\circ(ds)\Big|\cF_t\Big]
       \leq E\Big[\int_t^T \hspace{-.5em} D_s^{\beta}\mu^\circ(ds)\Big|\cF_t\Big]^{1-\alpha}
         E\Big[\int_t^T \hspace{-.5em} D_s^{\beta}X_s \,\mu^\circ(ds)\Big|\cF_t\Big]^\alpha.
  \end{equation}
  To see this, introduce the probability space
  $\big([t,T]\times\Omega,\cB([t,T])\otimes\cF, \nu \big)$, where
  \[
    \nu(I\times G):=E\Big[\xi^{-1} \int_I 1_G D_s^{\beta}\,\mu^\circ(ds)\Big],\quad G\in\cF,\, I\in \cB([t,T]),
  \]
  with the normalizing factor $\xi:=E[\int_t^T D_s^{\beta}\mu^\circ(ds)|\cF_t]$. On this space, $X$ is a random variable and we have the conditional Jensen's inequality
  \[
    E^\nu\big[X^\alpha\big|[t,T]\times \cF_t\big] \leq E^\nu\big[X\big|[t,T]\times \cF_t\big]^{\alpha}
  \]
  for the $\sigma$-field $[t,T]\times \cF_t:=\{[t,T]\times A:\,A\in\cF_t\}$.
  But this inequality coincides with~\eqref{eq:JensenInProofComparison} if we identify
  $L^0([t,T]\times\Omega,[t,T]\times \cF_t)$ and $L^0(\Omega,\cF_t)$ by using that an element of the first space is necessarily constant in its time variable.

  Let $0<p\leq p_0<1$ and let $\hY:=\hY(p_0)$ be the solution of the dual problem for $p_0$. Using~\eqref{eq:DualOppProcAltDef} and
  then~\eqref{eq:JensenInProofComparison} with $\alpha:=q/q_0\in (0,1)$ and $X_s^\alpha:=\big((\hY_s/\hY_t)^{q_0}\big)^{\alpha}=(\hY_s/\hY_t)^{q}$,
  \begin{align*}
    L_t^*(p) %
       &\leq E\Big[\int_t^T D_s^{\beta} \big({\hY_s/\hY_t}\big)^{q} \,\mu^\circ(ds)\Big|\cF_t\Big]\\
       &\leq E\Big[\int_t^T D_s^{\beta}\mu^\circ(ds)\Big|\cF_t\Big]^{1-q/q_0}  E\Big[\int_t^T D_s^{\beta} (\hY_s/\hY_t)^{q_0}\,\mu^\circ(ds)\Big|\cF_t\Big]^{q/q_0}.
  \end{align*}
  Now $D_s^{\beta}\leq k_1^{\beta-\beta_0}D_s^{\beta_0}$ since $\beta-\beta_0<0$, which completes the proof of the first claim in view of~\eqref{eq:DualOppProcAltDef}.
  In the cases with $p<0$, the infimum in \eqref{eq:DualOppProcAltDef} is replaced by a supremum and
  $\alpha=q/q_0$ is either $>1$ or $<0$, reversing the direction of Jensen's inequality.

  We turn to~\eqref{eq:ComparisonL2}. Let $0<p\leq p_0<1$ and $\hX=\hX(p)$, $\hc=\hc(p)$. Using~\eqref{eq:OppProcIndep} and (the usual) Jensen's inequality twice,
  \begin{align*}
    L_t(p_0)\hX_t^{p_0}
    & \geq E\Big[\int_t^T D_s \hc_s^{p_0}\,\mu^\circ(ds)\Big|\cF_t\Big]  \\
    & \geq \mu^\circ[t,T]^{1-p_0/p}   E\Big[\int_t^T D^{p/p_0}_s \hc_s^{p}\,\mu^\circ(ds)\Big|\cF_t\Big]^{p_0/p}\\
    & \geq \big(k_2\mu^\circ[t,T]\big)^{1-p_0/p} \big(L_t(p)\hX_t^{p}\big)^{p_0/p}
  \end{align*}
  and the claim follows. The other cases are similar.
\end{proof}

A useful consequence is that $L(p)$ gains moments as $p$ moves away from the possibly critical exponent $p_0$.

\begin{Cor}\label{co:ComparisonAppl}
  \begin{enumerate}[topsep=3pt, partopsep=0pt, itemsep=1pt,parsep=2pt]
    \item  Let $0<p< p_0<1$. Then
     \begin{equation}\label{eq:ComparisonAppl1}
      L(p)\leq C L(p_0)
     \end{equation}
     with a constant $C$ independent of $p_0$ and $p$. In the case without intermediate consumption we can take $C=1$.

    \item Let $r\geq1$ and $0<p\leq p_0/r$. Then
     \[
       E\big[(L_\tau(p))^r\big]\leq C_r
     \]
     for all stopping times $\tau$, with a constant $C_r$ independent of $p_0,p,\tau$. In particular, $L(p)$ is of class (D) for all $p\in(0,p_0)$.
   \end{enumerate}
\end{Cor}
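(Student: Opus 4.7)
The plan is to deduce (i) from Proposition~\ref{pr:ComparisonDualL}, specifically inequality~\eqref{eq:ComparisonL2}, together with the lower bound $L_t(p_0) \geq k_1$ supplied by Lemma~\ref{le:BoundsForL}(i). Given these, I would estimate each of the two factors in~\eqref{eq:ComparisonL2} separately: the prefactor $(k_2 \mu^\circ[t,T])^{1-p/p_0}$ is at most $\max(1, k_2(1+T))$ uniformly in $p, p_0, t$, because its exponent lies in $(0,1)$ and $\mu^\circ[t,T] \leq 1+T$; and since $L_t(p_0) \geq k_1 > 0$ with $p/p_0 \in (0,1)$, a case distinction on whether $L_t(p_0) \geq 1$ gives $L_t(p_0)^{p/p_0} \leq \max(1, k_1^{-1}) L_t(p_0)$. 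Multiplying the two bounds yields~\eqref{eq:ComparisonAppl1}.

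I expect the sharper claim $C = 1$ in the case without intermediate consumption to be the main obstacle, since then $\mu^\circ[t,T] = 1$ but \eqref{eq:ComparisonL2} still carries a factor $k_2^{1-p/p_0}$ which does not reduce to $1$. My plan is to replace the step $D_s^{p/p_0} \geq k_2^{p/p_0 - 1} D_s$ in the proof of Proposition~\ref{pr:ComparisonDualL} with a sharper conditional Jensen estimate. Applied to the concave function $x \mapsto x^{p/p_0}$ and the probability kernel with density $D_T/E[D_T|\cF_t]$ given $\cF_t$, Jensen's inequality produces
\[
  L_t(p)\hX_t^p(p) \;=\; E\big[D_T \hX_T^p(p)\,\big|\,\cF_t\big] \;\leq\; E\big[D_T\,\big|\,\cF_t\big]^{1-p/p_0}\, E\big[D_T \hX_T^{p_0}(p)\,\big|\,\cF_t\big]^{p/p_0}.
\]
The rightmost conditional expectation is bounded by $L_t(p_0)\hX_t^{p_0}(p)$ since $\hX(p)$ is admissible for the $p_0$-problem, while the trivial ``hold the wealth, consume at $T$'' strategy in~\eqref{eq:OppProcIndep} shows $E[D_T|\cF_t] \leq L_t(p_0)$. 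Cancelling $\hX_t^p(p)$ on both sides then gives $L_t(p) \leq L_t(p_0)^{1-p/p_0} L_t(p_0)^{p/p_0} = L_t(p_0)$.

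For~(ii), I would raise~\eqref{eq:ComparisonL2} to the $r$-th power. The hypothesis $p \leq p_0/r$ forces $rp/p_0 \in (0,1]$, so the elementary inequality $x^s \leq 1+x$ valid for $x \geq 0$ and $s\in (0,1]$ gives $L_\tau(p_0)^{rp/p_0} \leq 1 + L_\tau(p_0)$, while the prefactor is bounded uniformly by $\max(1,k_2(1+T))^r$. Taking expectations and invoking the supermartingale property $E[L_\tau(p_0)] \leq L_0(p_0)$ from Lemma~\ref{le:BoundsForL}(i) yields the uniform bound $E[L_\tau(p)^r] \leq \max(1, k_2(1+T))^r (1 + L_0(p_0)) =: C_r$. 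For the class~(D) assertion, I would fix $p \in (0, p_0)$, set $r := p_0/p > 1$, and apply the previous estimate to deduce $\sup_\tau E[L_\tau(p)^r] < \infty$; an $L^r$-bounded family with $r>1$ is uniformly integrable, which is exactly class~(D).
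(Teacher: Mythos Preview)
Your proof is correct. For part~(i) in the general case and for part~(ii), your argument is essentially the paper's: both start from inequality~\eqref{eq:ComparisonL2}, bound the prefactor $(k_2\mu^\circ[t,T])^{1-p/p_0}$ uniformly, and convert $L_t(p_0)^{p/p_0}$ into a multiple of $L_t(p_0)$ using the lower bound $L_t(p_0)\geq k_1$ from Lemma~\ref{le:BoundsForL}(i). Your case distinction on $L_t(p_0)\gtrless 1$ is equivalent to the paper's identity $L^{p/p_0}=k_1^{p/p_0}(L/k_1)^{p/p_0}\leq k_1^{p/p_0-1}L$. For~(ii), the paper applies Jensen to see that $L(p_0)^{rp/p_0}$ is a supermartingale and bounds $E[L_\tau(p_0)^{rp/p_0}]$ by $L_0(p_0)^{rp/p_0}$, whereas you use $x^s\leq 1+x$ and then the supermartingale inequality for $L(p_0)$ itself; these are interchangeable.

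The one genuine difference is the sharper claim $C=1$ in the case without intermediate consumption. The paper disposes of the factor $k_2^{1-p/p_0}$ by a change of measure (Remark~\ref{rk:measureChangeTrick}): absorbing $D_T$ into $P$ reduces to $D\equiv 1$, hence $k_1=k_2=1$, and then~\eqref{eq:ComparisonL2} directly gives $L(p)\leq L(p_0)$. Your route is a direct conditional Jensen inequality under the kernel $D_T/E[D_T|\cF_t]$, which sharpens the step $D_T^{p/p_0}\geq k_2^{p/p_0-1}D_T$ inside the proof of Proposition~\ref{pr:ComparisonDualL} to $E[D_T\hX_T^p|\cF_t]\leq E[D_T|\cF_t]^{1-p/p_0}E[D_T\hX_T^{p_0}|\cF_t]^{p/p_0}$, and then closes with the bound $E[D_T|\cF_t]\leq L_t(p_0)$ from Lemma~\ref{le:BoundsForL}(i). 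Your argument is self-contained and avoids the measure change, at the cost of re-opening the proof of Proposition~\ref{pr:ComparisonDualL}; the paper's version keeps the corollary as a black-box consequence of~\eqref{eq:ComparisonL2} plus Remark~\ref{rk:measureChangeTrick}.
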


\begin{proof}
  (i)~~Denote $L=L(p_0)$. By Lemma~\ref{le:BoundsForL}, $L/k_1\geq 1$, hence
  $L^{p/p_0}=k_1^{p/p_0} (L/k_1)^{p/p_0}\leq k_1^{p/p_0} (L/k_1)$ as $p/p_0\in (0,1)$.
  Proposition~\ref{pr:ComparisonDualL} yields the result with
  $C=\big (\mu^\circ[0,T] k_2/k_1\big)^{1-p/p_0}$; note that $C\leq 1\vee (1+T)k_2/k_1$. In the absence of intermediate consumption we may assume $k_1=k_2=1$ by the subsequent Remark~\ref{rk:measureChangeTrick} and then $C=1$.

  (ii)~~Let $r\geq1$, $0<p\leq p_0/r$, and $L=L(p_0)$. Proposition~\ref{pr:ComparisonDualL} shows
  \[
    L_t(p)^r \leq \big(k_2 \mu^\circ[t,T]\big)^{r(1-p/p_0)} L_t^{r p/p_0}\leq \big((1\vee k_2)(1+T)\big)^{r} L_t^{r p/p_0}.
  \]
  Note $r p/p_0\in(0,1)$, thus $L^{r p/p_0}$ is a supermartingale by Lemma~\ref{le:BoundsForL} and
  $E[L_\tau^{r p/p_0}] \leq L_0^{r p/p_0}\leq 1\vee k_2$.
\end{proof}

\begin{Remark}\label{rk:measureChangeTrick}
  In the case without intermediate consumption we may assume $D\equiv 1$ in the proof of Corollary~\ref{co:ComparisonAppl}(i). Indeed, $D$ reduces to the random variable $D_T$ and can be absorbed into the measure $P$ as follows. Under the measure $\widetilde{P}$ with $P$-density process $\xi_t=E[D_T|\cF_t]/E[D_T]$,
  the opportunity process for the utility function $\widetilde{U}(x)=\tfrac{1}{p}x^p$ is $\widetilde{L}=L/\xi$ by \cite[Remark~3.2]{Nutz.09a}. If Corollary~\ref{co:ComparisonAppl}(i) is proved for $D\equiv 1$, we conclude
  $\widetilde{L}(p)\leq \widetilde{L}(p_0)$ and then the inequality for $L$ follows.
\end{Remark}

Inequality~\eqref{eq:ComparisonAppl1} is stated for reference as it has a simple form; however, note that it was deduced using the very poor estimate $a^b\geq a$ for $a,b\geq1$.
In the pure investment case, we have $C=1$ and so~\eqref{eq:ComparisonAppl1} is a direct comparison result.
Intermediate consumption destroys this monotonicity property:~\eqref{eq:ComparisonAppl1} fails for $C=1$ in that case, e.g., if
$D\equiv 1$ and $R_t=t+W_t$, where $W$ is a standard Brownian motion, and $p=0.1$ and $p_0=0.2$, as can be seen by explicit calculation.
This is not surprising from a BSDE perspective, because the driver of~\eqref{eq:BellmanBSDEforL} is not monotone
with respect to $p$ in the presence of the $d\mu$-term. In the pure investment case, the driver is monotone and so the comparison result can be expected, even for the entire parameter range. This is confirmed by the next result;
note that the inequality is \emph{converse} to~\eqref{eq:ComparisonL2} for the considered parameters.

\begin{Prop}\label{pr:ComparisonLNoCons}
 Let $p< p_0<0$, then
  \[
    L_t(p)\leq \frac{k_2}{k_1} \, \big(\mu^\circ[t,T]\big)^{p_0-p} L_t(p_0).
  \]
  In the case without intermediate consumption, $L(p)\leq L(p_0)$.
\end{Prop}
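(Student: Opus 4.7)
The plan is to imitate the proof of Proposition~\ref{pr:ComparisonDualL}, equation~\eqref{eq:ComparisonL2}, but adjusted to the regime $p<p_0<0$ where the supremum in~\eqref{eq:OppProcIndep} effectively becomes an infimum (since $1/p<0$). Let $\hc=\hc(p_0)$, $\hX=\hX(p_0)$ and $L=L(p_0)$. First I insert $(\hpi(p_0),\hc(p_0))$ as a suboptimal strategy for the $p$-problem to get
\[
L_t(p)\,\hX_t^{p} \;\leq\; E\Big[\int_t^T \hspace{-.3em}D_s \hc_s^{p}\mu^\circ(ds)\Big|\cF_t\Big],
\qquad L_t(p_0)\,\hX_t^{p_0} \;=\; E\Big[\int_t^T \hspace{-.3em}D_s \hc_s^{p_0}\mu^\circ(ds)\Big|\cF_t\Big],
\]
the second being optimality. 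Pointwise, I write $D_s\hc_s^{p} = (D_s\hc_s^{p_0})\,\hc_s^{p-p_0}$ and substitute the feedback formula~\eqref{eq:consumptionFeedback} in the form $\hc_s^{p-p_0} = (D_s/L_s)^{(p-p_0)\beta_0}\hX_s^{p-p_0}$.

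The next step is to bound the deterministic factor $(D_s/L_s)^{(p-p_0)\beta_0}$. Since $\beta_0(p-p_0)<0$, Lemma~\ref{le:BoundsForL}(ii), which gives $L_s\leq k_2(\mu^\circ[s,T])^{1-p_0}$, together with $D_s\geq k_1$, the identity $(1-p_0)\beta_0=1$, and $\mu^\circ[s,T]\leq\mu^\circ[t,T]$ for $s\in[t,T]$, yields
\[
(D_s/L_s)^{(p-p_0)\beta_0} \;\leq\; (k_2/k_1)^{\beta_0(p_0-p)}(\mu^\circ[s,T])^{p_0-p}\;\leq\;(k_2/k_1)^{\beta_0(p_0-p)}(\mu^\circ[t,T])^{p_0-p}.
\]
These are the factors that ultimately produce the constants appearing in the claim.

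The main obstacle is absorbing the remaining random factor $\hX_s^{p-p_0}$ so that the final estimate depends only on $L_t(p_0)$ and not on the path of $\hX$. I expect to handle this by re-expressing $D_s\hc_s^{p_0}\hX_s^{p-p_0}$ using the feedback in reverse (so that the mixed power of $\hX_s$ is converted back into a single power of $\hc_s$ multiplied by deterministic factors), after which the integrand matches that of $L_t(p_0)\hX_t^{p_0}$ up to the constants above, and the factor $\hX_t^{p}$ on the left-hand side cancels $\hX_t^{p_0}$ to yield the claimed ratio $L_t(p)/L_t(p_0)$. Alternatively, one may exploit the martingale relation $\hY_s(p_0)\hX_s(p_0)+\int_0^s\hY_u(p_0)\hc_u(p_0)\mu(du)$ valid at the optimum, which converts $\hX_s^{p-p_0}$-terms into expectations under the dual variable.

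In the case without intermediate consumption, $\mu^\circ=\delta_T$ causes the integral to collapse to the single terminal term, where $\hc_T=\hX_T$, so the auxiliary stochastic object $\hX_s$ does not appear separately under the integral and the argument becomes essentially trivial. By Remark~\ref{rk:measureChangeTrick} one may further reduce to $D\equiv 1$, which eliminates the constants $k_1,k_2$, while $\mu^\circ[t,T]=1$ kills the remaining factor, so the bound simplifies to $L(p)\leq L(p_0)$ as claimed. (As a sanity check, this monotonicity is also consistent with the Bellman BSDE~\eqref{eq:BellmanBSDEforL}, since for fixed $(y,z)$ the quadratic driver scales with $q(p)$ and $q(p)>q(p_0)>0$ for $p<p_0<0$.)
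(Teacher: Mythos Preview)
Your primal approach has a genuine gap that cannot be closed along the lines you sketch. Inserting $(\hpi(p_0),\hc(p_0))$ into the $p$-problem and applying Jensen is exactly the mechanism behind Proposition~\ref{pr:ComparisonDualL}, and for $p<p_0<0$ it yields a \emph{power} relationship of the form $L_t(p)\geq \text{const}\cdot L_t(p_0)^{p/p_0}$ (the converse of~\eqref{eq:ComparisonL2}), not the \emph{linear} bound $L_t(p)\leq \text{const}\cdot L_t(p_0)$ claimed here. Your ``main obstacle''---the random factor $\hX_s^{\,p-p_0}$---is exactly what makes the difference, and neither of your suggested fixes works: re-expressing via the feedback formula in reverse is circular (it just takes you back to $\hc_s^{p}$), and the dual martingale relation does not convert $\hX_s^{\,p-p_0}$ into something controllable by $L_t(p_0)$ alone. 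Concretely, even in the case without intermediate consumption with $D\equiv1$, your argument reduces to showing $E[(\hX_T/\hX_t)^p\mid\cF_t]\leq E[(\hX_T/\hX_t)^{p_0}\mid\cF_t]$; this is neither a pointwise inequality (the ratio $\hX_T/\hX_t$ may lie on either side of $1$) nor a consequence of Jensen (which goes the wrong way since $p/p_0>1$). So the claim that this case is ``essentially trivial'' is incorrect.

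The paper's proof is genuinely different: it works on the \emph{dual} side. The key ingredient you are missing is Lemma~\ref{le:monotoneFunctionRevHolder}, which asserts that for a positive supermartingale $Y$, the map $q\mapsto E[(Y_s/Y_t)^q\mid\cF_t]^{1/(1-q)}$ is decreasing on $(0,1)$. Applied to $\hY=\hY(p)$ with $0<q_0<q<1$, this gives a pointwise comparison between $E[(\hY_s/\hY_t)^q\mid\cF_t]$ and a power of $E[(\hY_s/\hY_t)^{q_0}\mid\cF_t]$; integrating over $s$, invoking~\eqref{eq:DualOppProcAltDef} (which for $p<0$ is a supremum, so $\hY(p)$ is admissible for the $p_0$-problem), and translating back via $L^*=L^\beta$ produces the linear bound. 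The monotonicity in Lemma~\ref{le:monotoneFunctionRevHolder} is strictly sharper than Jensen and is what makes the argument go through.
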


The proof is based on the following auxiliary statement.
\begin{Lemma}\label{le:monotoneFunctionRevHolder}
  Let $Y>0$ be a supermartingale. For fixed $0\leq t\leq s\leq T$, %
  \[
    \phi: \, (0,1)\to \R_+,\quad q\mapsto \phi(q):=\Big(E\big[(Y_s/Y_t)^q\big|\cF_t\big]\Big)^{\frac{1}{1-q}}
  \]
  is a monotone decreasing function $P$-a.s. If $Y$ is a martingale, we have
  $\phi(1):=\lim_{q\to 1-}\phi(q)=\exp\big(-E\big[(Y_s/Y_t)\log(Y_s/Y_t)\big|\cF_t\big]\big)$ $P$-a.s.,
  where the conditional expectation has values in $\R\cup\{+\infty\}$.
\end{Lemma}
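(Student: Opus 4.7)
Set $X := Y_s/Y_t$, a positive $\cF_s$-measurable random variable with $E[X|\cF_t]\le 1$, with equality in the martingale case. Define the conditional cumulant generating function
\[
  h(q) := \log E[X^q|\cF_t],\q q\in[0,1].
\]
Finiteness on $[0,1]$ follows from $X^q \le 1+X$ and $E[X|\cF_t]<\infty$, while the conditional H\"older inequality makes $h$ convex in $q$ ($P$-a.s.). Moreover $h(0)=0$ and $h(1)\le 0$, with equality if $Y$ is a martingale. Since $\log\phi(q)=h(q)/(1-q)$, the whole claim reduces to analytic properties of $h$.

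For the monotonicity I would combine the convexity of $h$ with the boundary inequality $h(1)\le 0$. At every $q\in(0,1)$ the subgradient inequality of convex analysis gives $h(1)\ge h(q)+h'(q-)(1-q)$, i.e.
\[
   h(q)+(1-q)\,h'(q-)\;\le\;h(1)\;\le\;0.
\]
Formal differentiation then yields $\frac{d}{dq}\bigl[h(q)/(1-q)\bigr]=\bigl[h'(q)(1-q)+h(q)\bigr]/(1-q)^2\le 0$. Since $h$ is convex it is differentiable off a countable set and the same conclusion for left/right difference quotients follows by replacing $h'(q-)$ by an appropriate secant slope, so $\log\phi$ is decreasing $P$-a.s.\ on $(0,1)$.

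For the limit under the martingale assumption $h(1)=0$, I would rewrite
\[
   \log\phi(q)\;=\;-\frac{h(1)-h(q)}{1-q},
\]
and recognise the right-hand side as minus the slope of the chord of $h$ from $(q,h(q))$ to $(1,0)$. Convexity forces this slope to increase to the left derivative $h'(1-)$ as $q\uparrow 1$. To identify $h'(1-)=E[X\log X|\cF_t]$, I would differentiate under the conditional expectation to get $h'(q)=E[X^q\log X|\cF_t]/E[X^q|\cF_t]$, and then let $q\uparrow 1$ via monotone convergence applied separately on $\{X\ge 1\}$ (where $X^q\log X\ge 0$ increases to $X\log X$) and on $\{X<1\}$ (where $-X^q\log X\ge 0$ decreases to $-X\log X$ and is dominated by $e^{-1}$). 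Combined with $E[X^q|\cF_t]\to E[X|\cF_t]=1$, this gives $\phi(1-)=\exp(-E[X\log X|\cF_t])$, with values in $\R\cup\{+\infty\}$ since the negative part of $X\log X$ is uniformly bounded.

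The only real technical point is this boundary passage: $q\mapsto X^q$ moves in opposite directions on $\{X\ge 1\}$ and $\{X<1\}$, so one cannot invoke a single monotone or dominated convergence statement, and the argument must be split by the sign of $\log X$. The monotonicity statement itself is essentially a geometric fact about chords of convex functions with $h(0)=0$ and $h(1)\le 0$, and presents no difficulty.
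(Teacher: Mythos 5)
Your proof is correct, and it organizes the argument differently from the paper's. The paper (following the cited reference) introduces the conditional change of measure $dQ \propto (Y_s/Y_t)\,dP$ and deduces the monotonicity from the conditional Jensen inequality, with the supermartingale property entering through $E[Y_s/Y_t\,|\,\cF_t]\le 1$. You instead view $\log\phi(q)=h(q)/(1-q)$ through the conditional cumulant generating function $h(q)=\log E[X^q|\cF_t]$, whose convexity comes from conditional H\"older, and reduce everything to chords of $h$; the supermartingale property enters as $h(1)\le 0$. The two routes use the same circle of inequalities — Jensen after a change of measure \emph{is} H\"older — but the chord picture is conceptually cleaner and, incidentally, does not even require a.e.\ differentiability: for $0<q_1<q_2<1$, convexity gives $(h(q_2)-h(q_1))/(q_2-q_1)\le (h(1)-h(q_2))/(1-q_2)\le -h(q_2)/(1-q_2)$, which rearranges directly to $h(q_2)/(1-q_2)\le h(q_1)/(1-q_1)$; and in the martingale case the limit drops out of the same picture as $\phi(1)=\exp(-h'(1-))$. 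One small inaccuracy in your justification of $h'(1-)=E[X\log X|\cF_t]$: on $\{X<1\}$ the approximants $-X^q\log X$ are not themselves bounded by $e^{-1}$ (only the limit $-X\log X$ is). What you actually need is the bound $-X^{q_0}\log X\le (q_0 e)^{-1}$ for a fixed $q_0\in(0,1)$, which makes the first term of the decreasing sequence integrable and lets monotone (or dominated) convergence go through; this does not affect the correctness of the argument.
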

Lemma~\ref{le:monotoneFunctionRevHolder} can be obtained using Jensen's inequality and a suitable change of measure; we refer to~\cite[Lemma~4.10]{Nutz.09a} for details.

\begin{proof}[Proof of Proposition~\ref{pr:ComparisonLNoCons}.]
  Let $0<q_0<q<1$ be the dual exponents and
  denote $\hY:=\hY(p)$. By Lemma~\ref{le:monotoneFunctionRevHolder} and Jensen's inequality for $\frac{1-q}{1-q_0}\in (0,1)$,
  \begin{align*}
    \int_t^T E\big[ & (\hY_s/\hY_t)^{q} \big|\cF_t\big]\,\mu^\circ(ds)
     \leq \int_t^T \Big(E\big[(\hY_s/\hY_t)^{q_0}\big|\cF_t\big]\Big)^{\frac{1-q}{1-q_0}}\,\mu^\circ(ds)\\
    & \leq \mu^\circ[t,T]^{\big(1-\frac{1-q}{1-q_0}\big)} \bigg(\int_t^T E\big[(\hY_s/\hY_t)^{q_0}\big|\cF_t\big]\,\mu^\circ(ds)\bigg)^{\frac{1-q}{1-q_0}}.%
  \end{align*}
  Using~\eqref{eq:BoundsR} and~\eqref{eq:DualOppProcAltDef} twice, we conclude that
  \begin{align*}
    L^*_t(p)&\leq k_2^\beta \int_t^T E\big[(\hY_s/\hY_t)^{q} \big|\cF_t\big]\,\mu^\circ(ds)\\
    & \leq k_2^\beta k_1^{-\beta_0\frac{1-q}{1-q_0}} \mu^\circ[t,T]^{\big(1-\frac{1-q}{1-q_0}\big)} \bigg(\int_t^T E\big[D_s^{\beta_0}(\hY_s/\hY_t)^{q_0}\big|\cF_t\big]\,\mu^\circ(ds)\bigg)^{\frac{1-q}{1-q_0}}\\
    & \leq  k_2^\beta k_1^{-\beta_0\frac{1-q}{1-q_0}} \mu^\circ[t,T]^{\big(1-\frac{1-q}{1-q_0}\big)} L^*_t(p_0)^{\frac{1-q}{1-q_0}}.
  \end{align*}
  Now~\eqref{eq:DualAndPrimalOppProc} and $\beta=1-q$ yield the first result.
  In the case without intermediate consumption,
  we may assume $D\equiv1$ and hence $k_1=k_2=1$, as in Remark~\ref{rk:measureChangeTrick}.
\end{proof}

\begin{Remark}
   Our argument for Proposition~\ref{pr:ComparisonLNoCons} extends to $p=-\infty$ (cf.\ Lemma~\ref{le:LpGreaterLexp} below).  The proposition generalizes
   \cite[Proposition~2.2]{ManiaTevzadze.03}, where the result is proved for the case without intermediate consumption
   and under the additional condition that $\hY(p_0)$ is a martingale
   (or equivalently, that the $q_0$-optimal equivalent martingale measure exists).
\end{Remark}

Propositions~\ref{pr:ComparisonDualL} and~\ref{pr:ComparisonLNoCons} combine to the following continuity property of $p\mapsto L(p)$ at interior points of $(-\infty,0)$. We will not pursue this further as we are interested mainly in the boundary points of this interval.

\begin{Cor}
  Assume $D\equiv 1$ and let $C_t:=\mu^\circ[t,T]$. If $p\leq p_0<0$,
  \[
    C_t^{1-p/p_0} L(p_0)^{p/p_0}\,\leq\, L(p)\,\leq\, C_t^{p_0-p} L(p_0) \,\leq\, C_t^{1-p_0/p+p_0-p} L(p)^{p_0/p}.
  \]
  In particular, $p\mapsto L_t(p)$ is continuous on $(-\infty,0)$ uniformly in $t$, $P$-a.s.
\end{Cor}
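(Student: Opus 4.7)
My plan is to derive the three-term chain directly from the two preceding comparison results and then extract the uniform continuity by a simple pathwise compactness argument. Because $D\equiv 1$ forces $k_1=k_2=1$, the converse of \eqref{eq:ComparisonL2} in Proposition~\ref{pr:ComparisonDualL} (valid for $p<p_0<0$) collapses to
\[
C_t^{1-p/p_0}\,L_t(p_0)^{p/p_0}\;\leq\; L_t(p),
\]
which is the first inequality in the chain; the second is the direct statement of Proposition~\ref{pr:ComparisonLNoCons} under $D\equiv 1$, namely $L_t(p)\leq C_t^{p_0-p}L_t(p_0)$. For the third inequality I would raise the first one to the exponent $p_0/p$, which is positive (since $p\leq p_0<0$ gives $p/p_0\geq 1$, hence $p_0/p\in(0,1]$), so the direction of the inequality is preserved; this yields $L(p_0)\leq C_t^{1-p_0/p}L(p)^{p_0/p}$, and multiplying by $C_t^{p_0-p}$ produces the claimed form $C_t^{p_0-p}L(p_0)\leq C_t^{1-p_0/p+p_0-p}L(p)^{p_0/p}$.

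For the continuity assertion, I fix $p_*\in(-\infty,0)$ and work pathwise on the $P$-full event on which $L(p_*)$ and $L_-(p_*)$ are strictly positive (Lemma~\ref{le:BoundsForL}). On such a path, $t\mapsto L_t(p_*)(\omega)$ is c\`adl\`ag on the compact interval $[0,T]$ with values in $(0,\infty)$, hence takes values in some compact window $[\epsilon(\omega),M(\omega)]\subset(0,\infty)$: the upper bound $M(\omega)\leq (1+T)^{1-p_*}$ follows from Lemma~\ref{le:BoundsForL}(ii), while the strict positivity of the infimum is the classical consequence of positivity of both $L$ and $L_-$ (any zero-accumulating sequence $t_n$ admits a monotone subsequence whose limit equals $L_t(p_*)(\omega)$ or $L_{t-}(p_*)(\omega)$, both strictly positive). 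Once this compact window is fixed, I would observe that as $p\to p_*$ every exponent appearing in the chain — namely $p_0-p$, $1-p/p_0$, and $1-p_0/p+p_0-p$ — vanishes, so $C_t^{\alpha}\to 1$ uniformly in $t$ (using $C_t\in[1,1+T]$) and $x^{a}\to x$ uniformly on $[\epsilon,M]$ as $a\to 1$. The sandwich then squeezes $L_t(p)$ to $L_t(p_*)$ uniformly in $t$. Applying this once with $p_0=p_*$ and $p\nearrow p_*$, and once with $p=p_*$ and $p_0\searrow p_*$, yields both one-sided continuities and hence continuity at $p_*$.

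I anticipate no serious obstacle; the only step that warrants care is the uniform pathwise lower bound, but this is a standard c\`adl\`ag compactness argument given strict positivity of $L$ and $L_-$. Everything else is algebraic rearrangement of the two comparison inequalities together with the elementary continuity of $x\mapsto x^a$ on compact subsets of $(0,\infty)$.
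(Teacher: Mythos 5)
Your derivation is correct and matches what the paper intends (the paper offers no detailed proof, simply stating that the Corollary follows by combining Propositions~\ref{pr:ComparisonDualL} and~\ref{pr:ComparisonLNoCons} with $k_1=k_2=1$): the first inequality is the converse of~\eqref{eq:ComparisonL2}, the second is Proposition~\ref{pr:ComparisonLNoCons}, the third is obtained by raising the first to the power $p_0/p\in(0,1]$, and the uniform continuity follows by the squeeze once one bounds a fixed $L(p_*)$ away from $0$ and $\infty$ pathwise (via $L,L_->0$ and the c\`adl\`ag compactness argument together with the bound from Lemma~\ref{le:BoundsForL}). Your pathwise window argument and the two one-sided limits supply exactly the missing details of the paper's terse claim.
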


\begin{Remark}
  The optimal propensity to consume $\hkappa(p)$ is \emph{not} monotone with respect to $p$ in general. For instance, monotonicity fails
  for $D\equiv 1$ and $R_t=t+W_t$, where $W$ is a standard Brownian motion, and $p\in \{-1/2,-1,-2\}$.
  One can note that $p$ determines both the risk aversion and the elasticity of intertemporal substitution (see, e.g., Gollier~\cite[\S15]{Gollier.01}).
  As with any time-additive utility specification, it is not possible in our setting to study the dependence
  on each of these quantities in an isolated way.
\end{Remark}

\subsection{$BMO$ Estimate}
In this section we give $BMO$ estimates for the martingale part of $L$.
The following lemma is well known; we state the proof since the argument will be used also later on.

\begin{Lemma}\label{le:QuadVarLemma}
  Let $X$ be a submartingale satisfying $0\leq X \leq \alpha$ for some constant $\alpha>0$.
  Then for all stopping times $0\leq \sigma\leq \tau\leq T$,
  \[
    E\big[[X]_\tau-[X]_\sigma \big|\cF_{\sigma}\big] \leq E\big[ X^2_\tau-X^2_\sigma \big|\cF_{\sigma}\big].
  \]
\end{Lemma}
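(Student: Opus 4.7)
My plan is to apply It\^o's formula to $X^2$ and then exploit the submartingale structure via the Doob--Meyer decomposition.

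First, since $X$ is a bounded submartingale, it is a special semimartingale with canonical decomposition $X = X_0 + M + A$, where $M$ is a local martingale with $M_0 = 0$ and $A$ is a predictable increasing process with $A_0 = 0$. The boundedness $0 \le X \le \alpha$ together with standard arguments (e.g.\ localization by $\tau_n = \inf\{t : |M_t| \ge n\} \wedge T$ combined with Fatou) ensures that $A_T$ is integrable and that $M$ is a uniformly integrable martingale.

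Next, apply It\^o's formula to the function $x \mapsto x^2$ applied to the semimartingale $X$ to obtain
\[
  X_t^2 = X_0^2 + 2\int_0^t X_{s-}\,dX_s + [X]_t.
\]
Splitting $dX = dM + dA$ and evaluating between stopping times $\sigma \le \tau$ yields
\[
  X_\tau^2 - X_\sigma^2 = 2\int_\sigma^\tau X_{s-}\,dM_s + 2\int_\sigma^\tau X_{s-}\,dA_s + \bigl([X]_\tau - [X]_\sigma\bigr).
\]
The integrand $X_-$ is bounded by $\alpha$, so $\int X_{-}\,dM$ is a bounded-integrand stochastic integral against a uniformly integrable martingale and is itself a uniformly integrable martingale; hence its conditional expectation between $\sigma$ and $\tau$ vanishes. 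Taking $E[\,\cdot\,|\cF_\sigma]$ therefore gives
\[
  E\bigl[X_\tau^2 - X_\sigma^2\,\bigl|\,\cF_\sigma\bigr]
  = 2\,E\!\left[\int_\sigma^\tau X_{s-}\,dA_s\,\Bigr|\,\cF_\sigma\right]
  + E\bigl[[X]_\tau - [X]_\sigma\,\bigl|\,\cF_\sigma\bigr].
\]

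Finally, since $X_{-} \ge 0$ and $A$ is non-decreasing, the pathwise Stieltjes integral $\int_\sigma^\tau X_{s-}\,dA_s$ is non-negative, so the first term on the right is $\ge 0$. Rearranging yields the claimed inequality. The only subtle point is the justification that $\int X_{-}\,dM$ is a true martingale (so that it drops out under conditioning); this is where the boundedness hypothesis $X \le \alpha$ is essential, and it is handled by the standard localization argument sketched above.
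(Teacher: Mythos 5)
Your argument follows exactly the same route as the paper: Doob--Meyer decomposition, It\^o's formula for $X^2$, positivity of $\int X_-\,dA$, and elimination of the stochastic integral term by taking conditional expectations. The structure is sound, but there is a gap precisely at the point you flag as ``subtle.'' You claim that a bounded-integrand stochastic integral against a uniformly integrable martingale is itself a uniformly integrable martingale. This is not a valid general principle: uniform integrability of $M$ (i.e.\ $M_t=E[M_T\,|\,\cF_t]$ with $M_T\in L^1$) does not imply $M\in\cH^1$, and bounded-integrand integrals against local martingales can be strict local martingales. What \emph{is} available here is stronger than uniform integrability, namely $\sup_t|M_t|\leq |X_t|+X_0+A_t\leq 2\alpha+A_T\in L^1$, and this is what must be used. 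The paper's argument then proceeds via two applications of the Burkholder--Davis--Gundy inequalities: $\sup_t|M_t|\in L^1$ gives $[M]_T^{1/2}\in L^1$; since $|X_-|\leq\alpha$ this gives $[X_-\sint M]_T^{1/2}\leq\alpha\,[M]_T^{1/2}\in L^1$; and BDG again gives $\sup_t|X_-\sint M_t|\in L^1$, so that $X_-\sint M$ is a true (uniformly integrable) martingale. Your localization sketch contains the necessary observation ($|M_t|\leq 2\alpha+A_T$), but you never extract $\sup_t|M_t|\in L^1$ from it and instead fall back on a false shortcut; the BDG step needs to appear explicitly to close the argument.
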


\begin{proof}
  Let $X=X_0+M^X+A^X$ be the Doob-Meyer decomposition.
  As $X_t^2=X_0^2 +2\int_0^t X_{s-}\,(dM^X_s+dA^X_s)+[X]_t$ and $2\int_\sigma^\tau X_{s-}dA^X_s\geq0$,
  \[
    [X]_\tau-[X]_\sigma%
    \leq X_\tau^2-X_\sigma^2 - 2\int_\sigma^\tau X_{s-}dM^X_s.
  \]
  The claim follows by taking conditional expectations because $X_{-}\sint M^X$ is a martingale. Indeed, $X$ is bounded and $\sup_t |M^X_t|\leq 2\alpha+ A^X_T \in L^1$,
  so the BDG inequalities~\cite[VII.92]{DellacherieMeyer.82} show $[M^X]_T^{1/2}\in L^1$, hence $[X_-\sint M^X]_T^{1/2}\in L^1$, which by the BDG inequalities implies that $\sup_t|X_-\sint M^X_t|\in L^1$.
\end{proof}

We wish to apply Lemma~\ref{le:QuadVarLemma} to $L(p)$ in the case $p<0$. However, the submartingale property fails in general for the case with intermediate consumption (cf.~Lemma~\ref{le:BoundsForL}). We introduce instead a closely related process having this property.

\begin{Lemma}\label{le:TheSubmartB}
  Let $p<0$ and consider the case with intermediate consumption. Then
  \[
    B_t:= \Big(\frac{1+T-t}{1+T}\Big)^p L_t+ \frac{1}{(1+T)^p} \int_0^t D_s\,ds
  \]
  is a submartingale satisfying
  $
    0<B_t\leq k_2 (1+T)^{1-p}.
  $
\end{Lemma}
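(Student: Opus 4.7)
The plan is to realize $B$ as (up to a negative multiplicative constant) the process
\[
  M_t := L_t\,\tfrac{1}{p}X_t^p + \int_0^t U_s(c_s)\,\mu(ds)
\]
coming from one specific admissible but suboptimal strategy, namely no trading ($\pi\equiv 0$) together with the deterministic propensity to consume $\kappa_s = 1/(1+T-s)$. This is not a guess: it is precisely the $p\to-\infty$ limit strategy identified in Theorem~\ref{th:Limit-inftyEconomics}(i), so it is natural for $B$ to acquire a one-sided martingale property when $p<0$.

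The first step is the supermartingale principle: for any $(\pi,c)\in\cA$, the process $M(\pi,c)$ above is a (negative) $P$-supermartingale. This follows directly from~\eqref{eq:OppProcIndep}: for $s\leq t\leq T$, applying~\eqref{eq:OppProcIndep} at time $t$ with continuation $(\pi,c)\in\cA(\pi,c,t)$ yields $E\big[\int_t^T U(c)\,\mu^\circ\,\big|\,\cF_t\big]\leq L_t\,\tfrac{1}{p}X_t^p$; taking $\cF_s$-conditional expectations, adding $\int_0^s U(c)\,\mu$, and applying~\eqref{eq:OppProcIndep} once more at time $s$ with continuation $(\pi,c)\in\cA(\pi,c,s)$ gives $M_s\geq E[M_t\mid\cF_s]$. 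Integrability is automatic for the specific strategy used below since $L$ is bounded by Lemma~\ref{le:BoundsForL}(ii), $X_t$ is bounded and bounded away from zero, and $D$ is bounded.

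The second step is the explicit substitution. With $\pi\equiv 0$ and $\kappa_s=(1+T-s)^{-1}$ (which automatically satisfies $\kappa_T=1$), equation~\eqref{eq:wealthExponential} gives $X_t = x_0\exp\big(-\int_0^t(1+T-s)^{-1}ds\big) = x_0(1+T-t)/(1+T)$, so the consumption rate $c_s=X_s\kappa_s = x_0/(1+T)$ is \emph{constant} and $U_s(c_s) = D_s\,x_0^p/(p(1+T)^p)$. Substituting,
\[
  M_t = \frac{x_0^p}{p}\bigg[L_t\Big(\frac{1+T-t}{1+T}\Big)^p + \frac{1}{(1+T)^p}\int_0^t D_s\,ds\bigg] = \frac{x_0^p}{p}\,B_t.
\]
Since $p<0$ forces $x_0^p/p<0$, the supermartingale property of $M$ transfers to the submartingale property of $B$.

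The bounds are then routine. Combining $D\leq k_2$ from~\eqref{eq:BoundsR} with $L_t\leq k_2(\mu^\circ[t,T])^{1-p} = k_2(1+T-t)^{1-p}$ from Lemma~\ref{le:BoundsForL}(ii) yields
\[
  B_t \leq k_2 (1+T-t)(1+T)^{-p} + k_2 t(1+T)^{-p} = k_2(1+T)^{1-p},
\]
while positivity $B_t>0$ follows from $L_t>0$ (same lemma) and nonnegativity of the time integral. No real obstacle is anticipated; the only genuinely nontrivial idea is identifying the correct suboptimal strategy, but the form of $B$ essentially dictates it, since the factors $((1+T-t)/(1+T))^p$ and $(1+T)^{-p}$ are precisely $(X_t/x_0)^p$ and $(c_s/x_0)^p$ for that strategy.
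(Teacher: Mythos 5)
Your overall strategy is exactly the paper's: the paper's proof reads, in full, ``Choose $(\pi,c)\equiv(0,x_0/(1+T))$ in [Nutz.09a, Proposition~3.4] to see that $B$ is a submartingale. The bound follows from Lemma~\ref{le:BoundsForL}.'' You identified the same strategy, and your explicit substitution $X_t = x_0(1+T-t)/(1+T)$, $c_s \equiv x_0/(1+T)$, $M_t = (x_0^p/p)\,B_t$, and the bounds are all correct.

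The gap is in your ``first step,'' the supermartingale principle. The chain of inequalities you describe does not close. From~\eqref{eq:OppProcIndep}, plugging the continuation $(\pi,c)$ itself into the essential supremum yields
$E\big[\int_t^T U(c)\,\mu^\circ\,\big|\,\cF_t\big]\le L_t\tfrac{1}{p}X_t^p$, and similarly at time $s$. Carrying these through your manipulations gives \emph{lower} bounds on both $M_s$ and $E[M_t\mid\cF_s]$ in terms of the common quantity $E\big[\int_0^T U(c)\,\mu^\circ\mid\cF_s\big]$; it does not compare $M_s$ with $E[M_t\mid\cF_s]$. The supermartingale inequality requires an \emph{upper} bound on $E[L_t\tfrac1p X_t^p\mid\cF_s]$, which one obtains by writing $L_t\tfrac1p X_t^p$ as the essential supremum in~\eqref{eq:OppProcIndep}, exchanging the conditional expectation $E[\cdot\mid\cF_s]$ with the essential supremum, and then enlarging the index set from $\cA(\pi,c,t)$ to $\cA(\pi,c,s)$. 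That exchange is the content of the martingale optimality principle and relies on the family $\{E[\int_t^T U(\tc)\,\mu^\circ\mid\cF_t]:\tc\in\cA(\pi,c,t)\}$ being upward directed (stability under pasting), together with a monotone-convergence step. This is precisely what the paper delegates to~\cite[Proposition~3.4]{Nutz.09a} and what your one-sentence derivation from~\eqref{eq:OppProcIndep} omits. With that proposition invoked (or the lattice argument supplied), the rest of your proof is complete and correct.
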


\begin{proof}
  Choose $(\pi,c)\equiv (0,x_0/(1+T))$ in~\cite[Proposition~3.4]{Nutz.09a} to see that $B$ is a submartingale.
  The bound follows from Lemma~\ref{le:BoundsForL}.
\end{proof}

We are now in the position to exploit Lemma~\ref{le:QuadVarLemma}.

\begin{Lemma}\label{le:LisBMO}
  \begin{enumerate}[topsep=3pt, partopsep=0pt, itemsep=1pt,parsep=2pt]
   \item Let $p_1<0$. There exists a constant $C=C(p_1)$ such that $\|M^{L(p)}\|_{BMO}\leq C$ for all $p\in (p_1,0)$.
   In the case without intermediate consumption one can take $p_1=-\infty$.

   \item Assume $u_{p_0}(x_0)<\infty$ for some $p_0\in(0,1)$ and let $\sigma$ be a stopping
    time such that $L(p_0)^\sigma \leq \alpha$ for a constant $\alpha>0$.
    Then there exists $C'=C'(\alpha)$ such that $\|(M^{L(p)})^\sigma\|_{BMO}\leq C'$ for all $p\in (0,p_0]$.
  \end{enumerate}
\end{Lemma}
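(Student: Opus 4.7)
The plan is to bound $\|M^{L(p)}\|_{BMO}$ in two steps: first apply Lemma~\ref{le:QuadVarLemma} to a suitable bounded submartingale to obtain a uniform bound on the conditional tails of $[L(p)]$, then transfer this bound to $[M^{L(p)}]$ by passing to predictable compensators.

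For part~(i), I treat the two consumption regimes separately. Without intermediate consumption, Lemma~\ref{le:BoundsForL}(ii) shows that $L(p)$ is itself a submartingale bounded by $k_2$ (as $\mu^\circ[t,T]=1$), with no $p$-dependence, so Lemma~\ref{le:QuadVarLemma} yields $E\bigl[[L(p)]_T-[L(p)]_{\tau-}\bigm|\cF_\tau\bigr]\leq k_2^2$ uniformly for all $p<0$; this is precisely what permits $p_1=-\infty$. With intermediate consumption I use the auxiliary submartingale $B=B(p)$ of Lemma~\ref{le:TheSubmartB}, bounded by $k_2(1+T)^{1-p}\leq k_2(1+T)^{1-p_1}$ on $(p_1,0)$. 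Since $B_t=f(t)L_t+g(t)$ with continuous deterministic $f(t)=\bigl((1+T-t)/(1+T)\bigr)^p$ and absolutely continuous $g$, one checks that $[B]_t=\int_0^t f^2(s)\,d[L]_s$; and $f\geq 1$ on $[0,T]$ since $(1+T-t)/(1+T)\leq 1$ and $p<0$, so $[B]\geq[L]$. Lemma~\ref{le:QuadVarLemma} applied to $B$ then bounds $E\bigl[[L]_T-[L]_{\tau-}\bigm|\cF_\tau\bigr]$ by a constant depending only on $p_1,k_2,T$.

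The crucial step is converting this $[L]$-estimate into one for $[M^L]$. Using the canonical decomposition $L=L_0+M^L+A^L$,
\[
   [L]=[M^L]+2\sum_s\Delta M^L_s\Delta A^L_s+\sum_s(\Delta A^L_s)^2.
\]
The predictability of $A^L$ forces its jumps to occur at predictable times, where the local martingale $M^L$ has vanishing conditional jump, so the cross sum $\sum\Delta M^L\Delta A^L$ is a local martingale with zero predictable compensator. After a standard localization to secure uniform integrability, passing to predictable compensators gives $\langle M^L\rangle=\langle L\rangle-\sum(\Delta A^L)^2\leq\langle L\rangle$, and since $E\bigl[[N]_T-[N]_{\tau-}\bigm|\cF_\tau\bigr]=E\bigl[\langle N\rangle_T-\langle N\rangle_{\tau-}\bigm|\cF_\tau\bigr]$ for local martingales $N$ with integrable quadratic variation, this yields $\|M^L\|_{BMO}\leq C(p_1)$.

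For part~(ii), fix $p\in(0,p_0]$. By Lemma~\ref{le:BoundsForL}(i) the process $L(p)$ is a supermartingale, and Corollary~\ref{co:ComparisonAppl}(i) gives $L(p)\leq C_0\,L(p_0)$ for a constant $C_0$ independent of $p,p_0$. On $[0,\sigma]$ this yields $L(p)^\sigma\leq C_0\alpha=:\alpha'$, so $\alpha'-L(p)^\sigma$ is a bounded nonnegative submartingale whose quadratic variation equals that of $L(p)^\sigma$. Applying Lemma~\ref{le:QuadVarLemma} and the same compensator step as above delivers $\|(M^{L(p)})^\sigma\|_{BMO}\leq C'(\alpha)$. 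I expect the main obstacle to be the compensator inequality $\langle M^L\rangle\leq\langle L\rangle$ and the attendant localization; once this is in place the rest of the argument reduces to Lemma~\ref{le:QuadVarLemma} together with the boundedness/comparison estimates already established earlier in Section~\ref{se:auxResults}.
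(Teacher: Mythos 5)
Your overall architecture matches the paper's proof closely: bound the conditional tails of $[L(p)]$ via Lemma~\ref{le:QuadVarLemma} (applied to $L(p)$ itself, or to $B$ from Lemma~\ref{le:TheSubmartB}, or to $C_\alpha - L(p)^\sigma$ for part~(ii)), exploit $[B]=\int f^2\,d[L]$ with $f\geq1$, and then transfer to $[M^L]$ using the decomposition $[L]=L_0^2+[M^L]+[A^L]+2[M^L,A^L]$ together with Yoeurp's lemma (predictability of $A^L$ makes $[M^L,A^L]$ a local martingale). The use of Corollary~\ref{co:ComparisonAppl}(i) for part~(ii) is also the paper's route. Your detour through predictable compensators $\langle M^L\rangle=\langle L\rangle-L_0^2-[A^L]$ is a valid variant of the paper's direct work with $[M^L]$, since the correction terms are increasing, not just nonnegative.

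There is, however, a genuine gap in the final step. The identity you assert,
\[
E\bigl[[N]_T-[N]_{\tau-}\bigm|\cF_\tau\bigr]=E\bigl[\langle N\rangle_T-\langle N\rangle_{\tau-}\bigm|\cF_\tau\bigr],
\]
is false in general: writing $V:=[N]-\langle N\rangle$ for the local martingale difference, one gets
\[
E\bigl[[N]_T-[N]_{\tau-}\bigm|\cF_\tau\bigr]-E\bigl[\langle N\rangle_T-\langle N\rangle_{\tau-}\bigm|\cF_\tau\bigr]=\Delta V_\tau=(\Delta N_\tau)^2-\Delta\langle N\rangle_\tau,
\]
which does not vanish pointwise (only in conditional expectation given $\cF_{\tau-}$, and only at predictable $\tau$). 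The compensator relation is exact only without the left limit, i.e., $E[[N]_T-[N]_{\tau}\,|\,\cF_\tau]=E[\langle N\rangle_T-\langle N\rangle_{\tau}\,|\,\cF_\tau]$. Consequently your argument bounds $\sup_\tau E[[M^L]_T-[M^L]_{\tau}\,|\,\cF_\tau]$ but not the $BMO$-norm, which requires $\tau-$. The missing ingredient is a control on the jump $(\Delta M^L_\tau)^2$, and this is exactly what the paper supplies: since $L(p)$ is bounded by $\alpha$ (uniformly in $p$ on the relevant range), the jumps of its martingale part are bounded, giving the extra additive term $4\alpha^2$ in the $BMO$ estimate. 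Without that jump bound, the $\langle\cdot\rangle$-based estimate simply does not imply the $[\cdot]$-based one, as a bound on $\Delta\langle N\rangle$ gives no control on $\Delta N$. A second, more minor omission is that you gloss over the passage from the localized inequality to the global one (the paper uses Hunt's lemma after verifying $[M^L]_T\in L^1$), but that part is indeed routine.
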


\begin{proof}
  (i)~~Let $p_1<p<0$ and let $\tau$ be a stopping time. We first show that
  \begin{equation}\label{eq:proofLisBMO}
    E\big[[L(p)]_T-[L(p)]_\tau\big|\cF_\tau\big]\leq C.
  \end{equation}
  In the case without intermediate consumption, $L=L(p)$ is a positive submartingale with $L\leq k_2$ (Lemma~\ref{le:BoundsForL}), so Lemma~\ref{le:QuadVarLemma} implies~\eqref{eq:proofLisBMO} with $C= k_2^2$.
  In the other case, define $B$ as in Lemma~\ref{le:TheSubmartB} and
  $f(t):=(\frac{1+T-t}{1+T})^p$. Then $[L]_t-[L]_0=\int_0^t f^{-2}(s)\,d[B]_s$ and $f^{-2}(s)\leq 1$ as $f$ is increasing with $f(0)=1$. Thus
  $
    [L]_T-[L]_\tau=\int_\tau^T f^{-2}(s)\,d[B]_s\leq [B]_T-[B]_\tau
  $.
  Now~\eqref{eq:proofLisBMO} follows since $B\leq k_2(1+T)^{1-p}$ and Lemma~\ref{le:QuadVarLemma} imply
  \[
    E\big[[B]_T-[B]_\tau\big|\cF_\tau\big]\leq k_2^2 (1+T)^{2-2p}\leq k_2^2 (1+T)^{2-2p_1} =: C(p_1).
  \]
  We have $[L]=L_0^2+[M^L]+[A^L]+2[M^L,A^L]$. Since $A^L$ is predictable, $N:=2[M^L,A^L]$ is a local martingale with some localizing sequence $(\sigma_n)$.
  Moreover, $[M^L]_t-[M^L]_s=[L]_t-[L]_s - ([A]_t-[A]_s) - (N_t-N_s)$ and~\eqref{eq:proofLisBMO} imply
  \[
    E\big[[M^L]_{T\wedge \sigma_n}-[M^L]_{\tau\wedge \sigma_n}\big|\cF_{\tau\wedge \sigma_n}\big]\leq C.
  \]
  Choosing $\tau=0$ and $n\to\infty$ we see that $[M]_T\in L^1(P)$ and thus Hunt's Lemma~\cite[V.45]{DellacherieMeyer.82} shows the a.s.-convergence
  in this inequality; i.e., we have $E\big[[M^L]_T-[M^L]_{\tau}\big|\cF_{\tau}\big]\leq C$.
  If $L$ is bounded by $\alpha$, the jumps of $M^L$ are bounded by $2\alpha$ (cf.~\cite[I.4.24]{JacodShiryaev.03}), therefore
  \[
    \sup_\tau E\big[[M^L]_T-[M^L]_{\tau-}\big|\cF_{\tau}\big]\leq C+4\alpha^2.
  \]
  By Lemma~\ref{le:BoundsForL} we can take $\alpha=k_2(1+T)^{1-p_1}$, and $\alpha=k_2$ when there is no intermediate consumption.

  (ii)~~Let $0<p\leq p_0<1$. The assumption and Corollary~\ref{co:ComparisonAppl}(i) show that $L(p)^\sigma\leq C_\alpha$ for a constant $C_\alpha$ independent of $p$ and $p_0$.
  We apply Lemma~\ref{le:QuadVarLemma} to the nonnegative process $X(p):=C_\alpha-L(p)^\sigma$, which is a submartingale by Lemma~\ref{le:BoundsForL}, and obtain
  $
    E\big[[L(p)^\sigma]_T-[L(p)^\sigma]_\tau\big|\cF_\tau\big]=E\big[[X(p)]_T-[X(p)]_\tau\big|\cF_\tau\big] \leq C_\alpha^2.
  $
  Now the rest of the proof is as in (i).
\end{proof}

\begin{Cor}\label{co:lambdaBMO}
  Let $S$ be continuous and assume that either $p\in (0,1)$ and $L$ is bounded or that $p<0$ and $L$ is bounded away from zero.
  Then $\lambda\sint M\in BMO$, where $\lambda$ and $M$ are defined by~\eqref{eq:StructureContForR}.
\end{Cor}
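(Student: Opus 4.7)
The plan is to exploit the Bellman BSDE in two steps: first to obtain a BMO bound on the shifted integrand $(\lambda+Z^L/L_-)\sint M$, and then to use Lemma~\ref{le:LisBMO} together with the positive lower bound on $L$ to see that $(Z^L/L_-)\sint M$ is BMO as well. The conclusion then follows from $\lambda\sint M=(\lambda+Z^L/L_-)\sint M-(Z^L/L_-)\sint M$ and the vector-space property of $BMO$.

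For the first step I would evaluate~\eqref{eq:FVpartofL} between a stopping time $\tau$ and~$T$ and take conditional expectations. By Lemma~\ref{le:LisBMO}, $M^L$ is BMO and in particular a true martingale, so the finite-variation increment on the left becomes $E[L_T-L_\tau\,|\,\cF_\tau]$, giving
\[
\frac{q}{2}\,E\Big[\int_\tau^T L_{s-}\Big(\lambda_s+\tfrac{Z^L_s}{L_{s-}}\Big)^\top d\langle M\rangle_s\Big(\lambda_s+\tfrac{Z^L_s}{L_{s-}}\Big)\,\Big|\,\cF_\tau\Big]=E[L_T-L_\tau|\cF_\tau]+p\,E\Big[\int_\tau^T U^*_s(L_{s-})\,\mu(ds)\,\Big|\,\cF_\tau\Big].
\]
Under the corollary's hypotheses, $L$ is bounded above by some $\alpha<\infty$ and below by some $c>0$ (in case~(i) the lower bound is $c=k_1$ from Lemma~\ref{le:BoundsForL}(i); in case~(ii) it is the standing assumption). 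Hence the first term on the right is bounded in absolute value by $\alpha$. Since $pU^*_s(L_{s-})=(1-p)L_{s-}^qD_s^\beta$ is nonnegative and uniformly bounded ($1-p>0$ in both cases, while $L_-$ stays in a compact subset of $(0,\infty)$ and $D$ in $[k_1,k_2]$), the second term is a bounded nonnegative quantity. The left-hand side carries the sign of~$q$, and absorbing all constants and using $L_{s-}\geq c$ to drop the $L_{s-}$ factor inside the integral, I obtain
\[
\|(\lambda+Z^L/L_-)\sint M\|_{BMO}^2\leq K
\]
for some finite~$K$ depending only on $p,c,\alpha,k_1,k_2,T$.

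For the second step, the Kunita--Watanabe orthogonality $M^L=Z^L\sint M+N^L$ gives $[Z^L\sint M]\leq[M^L]$ pathwise, so $\|Z^L\sint M\|_{BMO}\leq\|M^L\|_{BMO}$; the latter is finite by Lemma~\ref{le:LisBMO} (apply part~(ii) with $p_0:=p$ and $\sigma:=T$ in case~(i); apply part~(i) with any $p_1<p$ in case~(ii)). The pointwise bound $1/L_-\leq 1/c$ then gives $\|(Z^L/L_-)\sint M\|_{BMO}\leq c^{-1}\|Z^L\sint M\|_{BMO}<\infty$, and the argument is complete.

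The main delicate point is Step~1, because the sign of $q$ differs between the two cases and the consumption term $pU^*$ could a priori oppose the quadratic one. What saves the argument is that $pU^*(L_-)=(1-p)L_-^qD^\beta$ is actually nonnegative throughout, so the right-hand side of the conditional-expectation identity can be controlled in absolute value by a constant independent of the sign of~$q$, and this bound then transfers to the absolute value of the (signed) left-hand side.
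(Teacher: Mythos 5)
Your proof is correct and follows the same route as the paper: take conditional expectations in the Bellman BSDE to obtain (via the boundedness of $L$ and the sign/boundedness of $pU^*(L_-)=(1-p)L_-^q D^\beta$) a BMO bound on $(\lambda+Z^L/L_-)\sint M$, then combine with the BMO bound on $Z^L\sint M$ from Lemma~\ref{le:LisBMO}. The only cosmetic difference is that you conclude via the vector-space property of $BMO$ while the paper expands the quadratic form and applies Cauchy--Schwarz directly, which are equivalent manipulations.
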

\begin{proof}
  In both cases, the assumed bound and
  Lemma~\ref{le:BoundsForL} imply that $L$ is bounded away from zero and infinity. Taking conditional expectations in~\eqref{eq:BellmanBSDEforL}, we obtain a constant $C>0$ such that
  \[
    E\bigg[\int_t^T L_{-}\Big(\lambda+\frac{Z^L}{L_{-}}\Big)^\top\,d\br{M}\, \Big(\lambda+\frac{Z^L}{L_{-}}\Big)\bigg|\cF_t\bigg]\leq C,\quad 0\leq t\leq T.
  \]
  Moreover, we have $M^L\in BMO$ by Lemma~\ref{le:LisBMO}. Using the bounds for $L$ and the Cauchy-Schwarz inequality, it follows that
  $E[\int_t^T \lambda^\top\,d\br{M}\,\lambda |\cF_t]\leq C'(1+\|Z^L\sint M\|_{BMO})\leq C'(1+\|M^L\|_{BMO})$ for a constant $C'>0$.
\end{proof}

We remark that uniform bounds for $L$ (as in the condition of Corollary~\ref{co:lambdaBMO}) are equivalent to a reverse H\"older inequality $\mathrm{R}_q(P)$ for some element of the dual domain $\sY$; see~\cite[Proposition~4.5]{Nutz.09a} for details. Here the index $q$ satisfies $q<1$. Therefore, our corollary complements well known results stating that $\mathrm{R}_q(P)$ with $q>1$ implies $\lambda\sint M\in BMO$ (in a suitable setting); see, e.g., Delbaen et al.~\cite[Theorems A,B]{DelbaenEtAl.97}. %

\section{The Limit $p\to-\infty$}\label{se:pto-inftyProofs}
The first goal of this section is to prove Theorem~\ref{th:Limit-inftyEconomics}. Recall that the consumption strategy is related to the opportunity processes via~\eqref{eq:consumptionFeedback} and~\eqref{eq:DualAndPrimalOppProc}. From these relations and the intuition mentioned before Theorem~\ref{th:Limit-inftyEconomics}, we expect that the dual opportunity process $L_t^*=L^\beta_t$ converges to $\mu^\circ[t,T]$ as $p\to -\infty$.
Noting that the exponent $\beta=1/(1-p)\to 0$, this implies that $L_t(p)\to\infty$ for all $t<T$, in the case with intermediate consumption. Therefore, we shall work here with  $L^*$ rather than $L$.
In the pure investment case, the situation is different as then $L\leq k_2$ (Lemma~\ref{le:BoundsForL}). There, the limit of $L$ yields additional information; this is examined in Section~\ref{se:ExponentialLimit} below.

\begin{Prop}\label{pr:DualLconv-infty}
  For each $t\in[0,T]$,
  \[
    \lim_{p\to-\infty} L_t^*(p) = \mu^\circ[t,T]\quad P\mbox{-a.s.\ and in }L^r(P),\;r\in [1,\infty),
  \]
  with a uniform bound. %
  If $\F$ is continuous, the convergences are uniform in $t$.
\end{Prop}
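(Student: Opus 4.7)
The plan is to exploit the dual representation~\eqref{eq:DualOppProcAltDef}, which for $p<0$ reads
\[
  L^*_t(p)=\mathop{\esssup}_{Y\in\sY}E\Big[\int_t^T D_s^\beta(Y_s/Y_t)^q\,\mu^\circ(ds)\,\Big|\,\cF_t\Big].
\]
As $p\to-\infty$ the exponents $\beta=1/(1-p)\to 0^+$ and $q=p/(p-1)\to 1^-$ with $q\in(0,1)$ throughout, so heuristically $D_s^\beta\to 1$ and $(Y_s/Y_t)^q\to Y_s/Y_t$, while concavity of $x\mapsto x^q$ together with the supermartingale property of $Y$ keeps the integrand controlled. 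I will sandwich $L^*_t(p)$ between two explicit bounds, both tending to $\mu^\circ[t,T]$.

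For the upper bound, any $Y\in\sY$ is a positive supermartingale, so $E[Y_s/Y_t\mid\cF_t]\leq 1$, and Jensen's inequality for the concave function $x^q$ yields $E[(Y_s/Y_t)^q\mid\cF_t]\leq 1$. Combined with $D_s^\beta\leq k_2^\beta$, this gives the deterministic, $t$-uniform bound
\[
  L^*_t(p)\leq k_2^\beta\,\mu^\circ[t,T],
\]
which tends to $\mu^\circ[t,T]$ as $\beta\to 0$.

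For the lower bound, by~\eqref{eq:ELMMexists} there is some $Q\in\sM$; its density process $Z$, suitably rescaled, lies in $\sY$, and since only ratios $Z_s/Z_t$ enter, conditional Fubini gives
\[
  L^*_t(p)\geq k_1^\beta\int_t^T E\big[(Z_s/Z_t)^q\,\big|\,\cF_t\big]\,\mu^\circ(ds).
\]
Since $Z$ is a $P$-martingale, $Z_s/Z_t\in L^1(P)$ with $E[Z_s/Z_t\mid\cF_t]=1$, and the pointwise domination $(Z_s/Z_t)^q\leq 1+Z_s/Z_t$ permits the conditional dominated convergence theorem, giving $E[(Z_s/Z_t)^q\mid\cF_t]\to 1$ $P$-a.s.\ as $q\to 1-$ for each fixed $s$. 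Fixing a sequence $p_n\to-\infty$ and invoking Fubini, a second application of dominated convergence (in $\mu^\circ$, which is finite on $[t,T]$, with integrand bounded by $1$) upgrades this to $\int_t^T E[(Z_s/Z_t)^q\mid\cF_t]\,\mu^\circ(ds)\to\mu^\circ[t,T]$ $P$-a.s. Together with $k_1^\beta\to 1$ and the upper bound, this yields the pointwise $P$-a.s.\ convergence. Since $L^*_t(p)\leq(1\vee k_2)(1+T)$ uniformly for all $p\leq 0$ (as then $\beta\leq 1$), bounded convergence delivers the $L^r(P)$-convergence.

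For the uniformity in $t$ under continuous $\F$, the upper bound $k_2^\beta\mu^\circ[\cdot,T]$ is already uniform in $t$, so the task is to make the lower bound uniform in $t$. Under continuity, $L^*(p)$ is a continuous process on the compact interval $[0,T]$ and the limit $\mu^\circ[\cdot,T]$ is continuous and deterministic; a Dini-type argument along a sequence $p_n\to-\infty$ would promote pointwise to uniform convergence once monotonicity (or equicontinuity) is in hand. This is the step I expect to be the main obstacle: the pointwise lower bound expressed through $Z$ does not manifestly vary monotonically in $p$, and establishing that $E[(Z_s/Z_t)^q\mid\cF_t]\to 1$ uniformly in $s,t$ as $q\to 1$ requires more than pointwise DCT. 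The natural route is to combine the continuity of $L^*$ with the $BMO$-estimate of Lemma~\ref{le:LisBMO} (valid with $p_1=-\infty$ in the pure-investment case) or with monotonicity extracted from the comparison bounds of Section~\ref{se:auxResults}, in order to control $\sup_t|L^*_t(p)-\mu^\circ[t,T]|$ uniformly.
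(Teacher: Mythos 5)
Your argument for the pointwise $P$-a.s.\ convergence, the $L^r(P)$-convergence and the uniform bound is essentially identical to the paper's proof: the same sandwich $k_1^\beta\int_t^T E[(Y_s/Y_t)^q\mid\cF_t]\,\mu^\circ(ds)\leq L^*_t(p)\leq k_2^\beta\mu^\circ[t,T]$, the same domination $(Y_s/Y_t)^q\leq 1+Y_s/Y_t$, the same use of $E[(Y_s/Y_t)^q\mid\cF_t]\leq 1$ (supermartingale property) to dominate the $\mu^\circ$-integral. That part is correct.

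The gap you flag is real, and your guesses at how to close it do not match the paper's method. The $BMO$-estimate of Lemma~\ref{le:LisBMO} would only control the martingale part (and in an $L^2$, not uniform, sense), while the comparison bounds of Section~\ref{se:auxResults} relate two different opportunity processes rather than giving monotonicity of the $q\mapsto E[(Y_s/Y_t)^q\mid\cF_t]$ map. The trick the paper uses is much more elementary: fix $s$ and $\omega$ and pass to the $1/q$-th root,
\[
  f_q(t):=E\big[(Y_s/Y_t)^q\mid\cF_t\big]^{1/q}(\omega),\qquad t\in[0,s].
\]
By the conditional Lyapunov/Jensen inequality, $q\mapsto f_q(t)$ is \emph{increasing} on $(0,1)$; under continuous $\F$ each $f_q$ is a continuous function of $t$; and $f_q(t)\to 1$ pointwise by the step you already proved. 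Dini's lemma on the compact $[0,s]$ then gives $f_q\to 1$ uniformly in $t$, hence $f_q(t)^q=E[(Y_s/Y_t)^q\mid\cF_t]\to 1$ uniformly in $t$ for each fixed $s$. To integrate in $s$ one cannot simply apply Dini again (the convergence in $s$ has no built-in monotonicity), so the paper invokes Egorov's theorem: on a set $I\subseteq[0,T]$ with $\mu^\circ([0,T]\setminus I)<\eps$, the convergence is uniform in $s\in I$ as well, and the remaining small piece is controlled by the uniform bound $E[(Y_s/Y_t)^q\mid\cF_t]\leq 1$. That yields $\sup_t|L^*_t(p)-\mu^\circ[t,T]|\to 0$ $P$-a.s.\ (and in $L^r$ by the deterministic uniform bound). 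In short: what is missing in your sketch is the observation that the \emph{$1/q$-normalized} conditional moments form a monotone family in $q$ — this is what makes Dini applicable — together with the Egorov step for the $s$-variable.
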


\begin{Remark}\label{rk:convAtStoppingtime}
  We will use later that the same convergences hold if $t$ is replaced by a stopping time, which is an immediate consequence
  in view of the uniform bound. Of course, we mean by ``uniform bound'' that there exists a constant $C>0$, independent of $p$ and $t$, such that $0\leq L_t^*(p)\leq C$. Analogous terminology will be used in the sequel.
\end{Remark}

\begin{proof}
  We consider $0>p\to -\infty$ and note that $q\to 1-$ and $\beta\to 0+$.
  From Lemma~\ref{le:BoundsForL} we have
  \begin{equation}\label{eq:ProofDualLconv-infty}
    0\leq L^*_t(p)=L^\beta_t(p) \leq k_2^{\beta} \mu^\circ[t,T] \to \mu^\circ[t,T],
  \end{equation}
  uniformly in $t$. To obtain a lower bound, we consider the density process $Y$ of some $Q\in\sM$, which exists by assumption~\eqref{eq:ELMMexists}. From~\eqref{eq:DualOppProcAltDef} we have
  \begin{align*}
    L^*_t(p) %
     \geq k_1^{\beta} \int_t^T E\big[(Y_s/Y_t)^{q}\big|\cF_t\big]\,\mu^\circ(ds).
  \end{align*}
  For fixed $s\geq t$, clearly $(Y_s/Y_t)^{q}\to Y_s/Y_t$ $P$-a.s.\ as $q\to1$, and noting the bound
  $0\leq (Y_s/Y_t)^{q}\leq 1+Y_s/Y_t\in L^1(P)$ we conclude by dominated convergence that
  \[
    E\big[(Y_s/Y_t)^{q}\big|\cF_t\big]\to E\big[Y_s/Y_t\big|\cF_t\big]\equiv1\quad P\mbox{-a.s.,\;\;for all } s\geq t.
  \]
  Since $Y^{q}$ is a supermartingale, $0\leq E\big[(Y_s/Y_t)^{q}\big|\cF_t\big] \leq 1$. Hence, for each $t$,
  dominated convergence shows
  \[
    \int_t^T E\big[(Y_s/Y_t)^{q}\big|\cF_t\big]\,\mu^\circ(ds)\to \mu^\circ[t,T] \;\;P\mbox{-a.s.}
  \]
  This ends the proof of the first claim. The convergence in $L^r(P)$ follows by the bound~\eqref{eq:ProofDualLconv-infty}.

  Assume that $\F$ is continuous; then the martingale $Y$ is continuous. For fixed $(s,\omega)\in[0,T]\times\Omega$ we consider (a version of)
  \[
    f_q(t):=E\big[(Y_s/Y_t)^{q}\big|\cF_t\big]^{1/q}(\omega), \quad t\in[0,s].
  \]
  These functions are continuous in $t$ and increasing in $q$ by Jensen's inequality, and converge to $1$ for each $t$.  Hence $f_q\to 1$ uniformly in $t$ on the compact $[0,s]$, by Dini's lemma. The same holds for $f_q(t)^q=E\big[(Y_s/Y_t)^{q}\big|\cF_t\big](\omega)$.

  Fix $\omega\in\Omega$ and let $\eps,\eps'>0$. By Egorov's theorem there exist a measurable set $I=I(\omega)\subseteq [0,T]$ and $\delta=\delta(\omega)\in(0,1)$ such that $\mu^\circ([0,T]\setminus I)<\eps$ and
  $\sup_{t\in [0,s]}|E\big[(Y_s/Y_t)^{q}\big|\cF_t\big]-1|<\eps'$ for all $q>1-\delta$
  and all $s\in I$. For $q>1-\delta$ and $t\in [0,T]$ we have
  \begin{align*}
    & \int_t^T \big|E\big[(Y_s/Y_t)^{q}\big|\cF_t\big]-1\big| \,\mu^\circ(ds)\\
    &\leq   \int_{I} \big|E\big[(Y_s/Y_t)^{q}\big|\cF_t\big]-1\big| \,\mu^\circ(ds)
    + \int_{[t,T]\setminus I} \big|E\big[(Y_s/Y_t)^{q}\big|\cF_t\big]-1\big| \,\mu^\circ(ds)\\
    &\leq \eps'(1+T) + \eps.
  \end{align*}
  We have shown that $\sup_{t\in[0,T]}|L^*_t(p)-\mu^\circ[t,T]|\to 0$ $P$-a.s.,
  and also in $L^r(P)$ by dominated convergence and the uniform bound resulting from~\eqref{eq:ProofDualLconv-infty} in view of
  $k_2^{\beta} \mu^\circ[t,T]\leq (1\vee k_2)(1+T)$.
\end{proof}

Under additional continuity assumptions, we will prove that the martingale part of $L^*$ converges to zero in $\cH^2_{loc}$.
We first need some preparations.
For each $p$, it follows from Lemma~\ref{le:BoundsForL} that $L^*$ has a canonical decomposition
$L^*=L^*_0+ M^{L^*} + A^{L^*}$. When $S$ is continuous, we denote the KW decomposition with respect to $M$ by $L^*=L^*_0+ Z^{L^*}\sint M + N^{L^*} + A^{L^*}$. If in addition $L$ is continuous, we obtain from $L^*=L^\beta$ and~\eqref{eq:FVpartofL} by It\^o's formula that
\begin{align}\label{eq:dualOppFVformulas}
  M^{L^*} & = \beta L^{\beta-1}\sint M^L;\quad\;   Z^{L^*}/L^*  = \beta Z^L/L; \quad\;  N^{L^*}=\beta L^{\beta-1}\sint N^L;  \\
  A^{L^*}&= \tfrac{q}{2} \int \big(\beta\lambda L^* \,+2Z^{L^*}\big)^\top\,d\br{M}\,\lambda +\tfrac{p}{2} \int \big(L^*\big)^{-1}\,d\br{N^{L^*}} -\int D^\beta\,d\mu.\nonumber
\end{align}
Here $d\mu$ is a shorthand for $\mu(ds)$.

\begin{Lemma}\label{le:uniformStopping-infty}
  Let $p_0<0$. There exists a localizing sequence $(\sigma_n)$ such that
  \[
    \big(L^*(p)\big)^{\sigma_n}_- > 1/n\;\;\mbox{simultaneously for all }p\in (-\infty,p_0];
  \]
  and moreover, if $S$ and $L(p)$ are continuous, $(M^{L^*(p)})^{\sigma_n}\in BMO$ for $p\leq p_0$.
\end{Lemma}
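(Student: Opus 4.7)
The plan is to anchor the localizing sequence to the $p$-independent process $L^*(p_0)$ and then transfer a lower bound to every $L^*(p)$, $p\leq p_0$, using Proposition~\ref{pr:ComparisonDualL}.

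For part~(i), the converse of~\eqref{eq:ComparisonL1} (valid for $p<p_0<0$ with $k_1$ replaced by~$k_2$) yields, after bounding the conditional-expectation factor and $(k_2^{\beta-\beta_0})^{q/q_0}$ uniformly in $p\leq p_0$ (possible because $\beta\in(0,\beta_0]$, $\mu^\circ[t,T]\in[1,1+T]$, $q/q_0\in[1,1/q_0)$, and $\beta-\beta_0\in[-\beta_0,0]$),
\[
L^*_t(p)\;\geq\;c\,L^*_t(p_0)^{q/q_0}\qquad\text{for all }p\leq p_0,\ t\in[0,T],
\]
with a constant $c=c(p_0,k_1,k_2,T)>0$ independent of~$p$. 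Since $L(p_0)_->0$, the process $1/L^*(p_0)_-$ is locally bounded, so one may pick a localizing sequence $(\sigma_n)$ with $L^*(p_0)^{\sigma_n}_-\geq\epsilon_n$ for any prescribed $\epsilon_n\downarrow 0$. I take $\epsilon_n\leq 1$ satisfying $\epsilon_n>1/n$ and $c\,\epsilon_n^{1/q_0}>1/n$. Since $q/q_0\leq 1/q_0$ and $\epsilon_n\leq 1$ give $\epsilon_n^{q/q_0}\geq\epsilon_n^{1/q_0}$, while $x^{q/q_0}\geq 1\geq\epsilon_n^{1/q_0}$ for $x\geq 1$, one has $L^*(p_0)^{q/q_0}\geq\epsilon_n^{1/q_0}$ in every case, hence
\[
L^*(p)^{\sigma_n}_-\;\geq\;c\,(L^*(p_0)^{\sigma_n}_-)^{q/q_0}\;\geq\;c\,\epsilon_n^{1/q_0}\;>\;\tfrac{1}{n}
\]
for $p<p_0$; the case $p=p_0$ is covered directly by $L^*(p_0)^{\sigma_n}_-\geq\epsilon_n>1/n$.

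For part~(ii), fix $p\leq p_0$ and assume $S$ and $L(p)$ are continuous. The bound from~(i) together with $L(p)=(L^*(p))^{1-p}$ gives $L(p)^{\sigma_n}_-\geq n^{p-1}>0$, while Lemma~\ref{le:BoundsForL} provides $L(p)\leq k_2(1+T)^{1-p}$; hence $L(p)$ is bounded away from $0$ and $\infty$ on $[0,\sigma_n]$. By Lemma~\ref{le:LisBMO}(i) applied with any $p_1<p$, $\|M^{L(p)}\|_{BMO}<\infty$. Using~\eqref{eq:dualOppFVformulas} I write $M^{L^*(p)}=\beta L^{\beta-1}\sint M^{L(p)}$; $\beta-1<0$ and $L^{2(\beta-1)}\leq n^{-2p}$ on $[0,\sigma_n]$ then give
\[
E\bigl[[M^{L^*(p)}]^{\sigma_n}_T-[M^{L^*(p)}]^{\sigma_n}_\tau\bigm|\cF_\tau\bigr]\;\leq\;\beta^2\,n^{-2p}\,\|M^{L(p)}\|_{BMO}^2,
\]
so $(M^{L^*(p)})^{\sigma_n}\in BMO$.

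The delicate part is~(i): producing a single localizing sequence that forces the bound $>1/n$ on $L^*(p)_-$ simultaneously over the unbounded range $p\in(-\infty,p_0]$. The key observation is that although $p$ is unbounded below, the exponent $q/q_0$ in the comparison remains in the bounded interval $[1,1/q_0)$, which permits the threshold $c\,\epsilon_n^{1/q_0}$ to be chosen independently of~$p$. Part~(ii) is then an essentially routine consequence of Lemma~\ref{le:LisBMO}(i) combined with the identity $M^{L^*}=\beta L^{\beta-1}\sint M^L$; note that no $p$-uniform BMO bound is claimed in the lemma.
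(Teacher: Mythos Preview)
Your proof is correct and follows essentially the same route as the paper: anchor the localization to $L^*(p_0)$ via $\sigma_n=\inf\{t:L^*_t(p_0)\leq\eps_n\}\wedge T$, use the converse of~\eqref{eq:ComparisonL1} from Proposition~\ref{pr:ComparisonDualL} to get $L^*_t(p)\geq\alpha\,L^*_t(p_0)^{q/q_0}$ with $\alpha>0$ independent of $p$, and for the $BMO$ claim invoke $M^{L^*}=\beta L^{\beta-1}\sint M^L$ together with Lemma~\ref{le:LisBMO}(i). Your treatment is in fact more explicit than the paper's about why the exponent $q/q_0\in[1,1/q_0)$ yields a $p$-uniform lower bound and about how $\eps_n$ must be chosen to achieve the stated threshold $1/n$.
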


\begin{proof}
  Fix $p_0<0$ (and corresponding $q_0$) and a sequence $\eps_n\downarrow 0$ in $(0,1)$. Set
  $\sigma_n=\inf \{t\geq 0:\,L^*_t(p_0)\leq \eps_n\}\wedge T$.
  Then $\sigma_n\to T$ stationarily because each path of $L^*(p_0)$ is bounded away from zero (Lemma~\ref{le:BoundsForL}).
  Proposition~\ref{pr:ComparisonDualL} implies that there is a constant $\alpha=\alpha(p_0)>0$ such that
  $L^*_t(p)\geq \alpha\, \big(L^*_t(p_0)\big)^{q/q_0}$ for all $p\leq p_0$. It follows that
  $L^*_{(\sigma_n\wedge t)-}(p)\geq \alpha  \eps_n^{1/q_0}$ for all $p\leq p_0$ and
  we have proved the first claim.

  Fix $p\in (-\infty,p_0]$, let $S$ and $L=L(p)$ be continuous and recall that
  $M^{L^*}=\beta L^{\beta-1}\sint M^L$ from~\eqref{eq:dualOppFVformulas}. Noting that $\beta-1<0$, we have just shown that the integrand $\beta L^{\beta-1}$ is bounded on $[0,\sigma_n]$. Since $M^L\in BMO$ by Lemma~\ref{le:LisBMO}(i), we conclude that $(M^{L^*})^{\sigma_n}\in BMO$. %
\end{proof}

\begin{Prop}\label{pr:StrategyConv}
  Assume that $S$ and $L(p)$ are continuous for all $p<0$. As $p\to-\infty$,
  \[
    Z^{L^*(p)}\to 0 \mbox{ in }L^2_{loc}(M)\quad\mbox{and} \quad N^{L^*(p)}\to0 \mbox{ in }\cH^2_{loc}.
  \]
\end{Prop}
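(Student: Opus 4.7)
My plan is to combine two Itô expansions of $L^*(p)$ on suitable localizing stopping times and exploit the pointwise convergence $L^*(p)\to l$ from Proposition~\ref{pr:DualLconv-infty}, where $l_t := \mu^\circ[t,T]$. First I localize: combining $(\sigma_n)$ from Lemma~\ref{le:uniformStopping-infty} with $\rho_n := \inf\{t:\br{\lambda\sint M}_t\geq n\}\wedge T$ (a localizing sequence since $\lambda\in L^2_{loc}(M)$ by~\eqref{eq:StructureContForR}), I set $\tau_n := \sigma_n\wedge\rho_n$. On $[0,\tau_n]$, $L^*(p)$ takes values in some $[1/n, K]$ uniformly in $p\leq p_0$ by Lemma~\ref{le:uniformStopping-infty} and Lemma~\ref{le:BoundsForL}(ii), and $(M^{L^*(p)})^{\tau_n}\in BMO$ so that the stochastic integrals appearing below are true martingales.

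The key quantitative input comes from Itô applied to $\log L^*$: using~\eqref{eq:dualOppFVformulas} and completing the square in the $Z^{L^*}$-pieces, the finite-variation part of $d\log L^*$ contains, besides bounded $\lambda$- and $\mu$-terms, the two non-positive contributions $-\tfrac{1}{2(L^*)^2}|Z^{L^*}-qL^*\lambda|^2\,d\br{M}$ and $\tfrac{p-1}{2(L^*)^2}\,d\br{N^{L^*}}$. Taking expectations on $[0,\tau_n]$ and rearranging yields
\[
\tfrac{1-p}{2}\,E\Bigl[\int_0^{\tau_n}\frac{d\br{N^{L^*}}}{(L^*)^2}\Bigr] + \tfrac{1}{2}\,E\Bigl[\int_0^{\tau_n}\frac{|Z^{L^*}-qL^*\lambda|^2}{(L^*)^2}\,d\br{M}\Bigr] = R(p),
\]
with $R(p)$ bounded uniformly in $p\leq p_0$. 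Running the same computation with conditional expectations in fact gives the stronger statement $\|(N^{L^*(p)})^{\tau_n}\|_{BMO}^2 \leq 2K^2R(p)/(1-p) = O(|p|^{-1})$; hence $N^{L^*(p)}\to 0$ in $BMO$ on $[0,\tau_n]$, a fortiori in $\cH^2_{loc}$, and one also obtains the uniform bound $E[\int_0^{\tau_n}|Z^{L^*}-qL^*\lambda|^2\,d\br{M}]=O(1)$.

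For $Z^{L^*}$, I apply Itô to $(L^*-l)^2$. Since $l$ is deterministic with $dl=-d\mu$, one has $[L^*-l]=[M^{L^*}]$, and taking expectations on $[0,\tau_n]$,
\[
E[[M^{L^*}]_{\tau_n}] = E[(L^*-l)^2_{\tau_n}] - (L^*_0-l_0)^2 - 2E\Bigl[\int_0^{\tau_n}(L^*-l)\,(dA^{L^*}+d\mu)\Bigr].
\]
The first two terms vanish by Proposition~\ref{pr:DualLconv-infty} and dominated convergence. Expanding $dA^{L^*}+d\mu$ by~\eqref{eq:dualOppFVformulas} produces four summands. The $q\beta$-piece vanishes as $q\beta\to 0$; the $(1-D^\beta)d\mu$-piece vanishes by dominated convergence; the cross term $2q\int(L^*-l)(Z^{L^*})^\top d\br{M}\lambda$ is handled by decomposing $Z^{L^*}=qL^*\lambda+(Z^{L^*}-qL^*\lambda)$ and applying Cauchy--Schwarz, using the uniform bound from the log step together with the pointwise vanishing of $(L^*-l)$ on the finite measure $dP\otimes d\br{\lambda\sint M}_{\tau_n}$; the critical term $p\int(L^*-l)/L^*\,d\br{N^{L^*}}$ is dominated by $|p|\cdot E[[N^{L^*}]_{\tau_n}^2]^{1/2}\cdot E[\|(L^*-l)/L^*\|_\infty^2]^{1/2}$, where $E[[N^{L^*}]_{\tau_n}^2]\leq 2\|N^{L^*}\|_{BMO}^4 = O(|p|^{-2})$ by the energy inequality and the second factor tends to $0$ by bounded-pointwise convergence of $L^*$. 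Thus $E[[M^{L^*(p)}]_{\tau_n}]\to 0$, and since $[M^{L^*}]=\br{Z^{L^*}\sint M}+[N^{L^*}]$ with non-negative orthogonal summands, $Z^{L^*(p)}\to 0$ in $L^2_{loc}(M)$.

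The main obstacle is the critical term $p\int(L^*-l)/L^*\,d\br{N^{L^*}}$: its prefactor diverges to $-\infty$ and would naively swamp the estimate. It is precisely tamed by the logarithmic identity, which forces $\|N^{L^*(p)}\|_{BMO}=O(|p|^{-1/2})$ on $[0,\tau_n]$ --- exactly fast enough to counterbalance the divergent prefactor and deliver the final vanishing.
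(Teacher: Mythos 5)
Your two-step scheme is a genuinely different route from the paper's single Itô expansion of $\Phi(X)=e^X-X$ applied to $X:=k_2^\beta\mu^\circ[\cdot,T]-L^*\geq 0$: there the choice of $\Phi$ ensures $\Phi''-p\Phi'(L^*)^{-1}\geq 1$, so the $\br{N^{L^*}}$-term stays on the left with the correct sign and needs no rate at all; you instead extract the rate $\|N^{L^*(p)}\|_{BMO}^2=O(|p|^{-1})$ from the logarithm and then attack $(L^*-l)^2$. Your $\log$-step is correct (the completion of the square, the conditional version, and the resulting $BMO$ rate all check out, using $q^2+q\beta=q$), and several of the error pieces in the $(L^*-l)^2$ identity are handled properly.

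However, the estimate of the critical term has a genuine gap. You bound
\[
  \Bigl|p\,E\Bigl[\int_0^{\tau_n}\tfrac{L^*-l}{L^*}\,d\br{N^{L^*}}\Bigr]\Bigr|
  \leq |p|\,E\bigl[\br{N^{L^*}}_{\tau_n}^2\bigr]^{1/2}\,E\bigl[\|(L^*-l)/L^*\|_\infty^2\bigr]^{1/2},
\]
and with $E[\br{N^{L^*}}_{\tau_n}^2]^{1/2}=O(|p|^{-1})$ the prefactor $|p|\cdot O(|p|^{-1})$ is $O(1)$, \emph{not} $o(1)$. You then assert that the second factor tends to zero ``by bounded-pointwise convergence of $L^*$''. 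But Proposition~\ref{pr:DualLconv-infty} gives $L^*_t(p)\to l_t$ only for each fixed $t$; the convergence is uniform in $t$ only under the additional hypothesis that $\F$ is continuous, which Proposition~\ref{pr:StrategyConv} does \emph{not} assume. Pointwise convergence of continuous paths to a continuous limit does not imply convergence of the running supremum, so $E[\|(L^*-l)/L^*\|_\infty^2]^{1/2}$ need not vanish, and the bound is $O(1)$. This is exactly the borderline the paper's convex weight is designed to circumvent. A local repair of your argument is available: write $L^*-l=(L^*-l)^+-(L^*-l)^-$. Since $L^*\leq k_2^\beta l$, one has $(L^*-l)^+\leq((1\vee k_2)^\beta-1)(1+T)\to 0$ \emph{deterministically}, so its contribution is $|p|\cdot o(1)\cdot O(|p|^{-1})=o(1)$; while the $(L^*-l)^-$ contribution equals $p\,E[\int_0^{\tau_n}(L^*-l)^-(L^*)^{-1}\,d\br{N^{L^*}}]\leq 0$ (as $p<0$), so it may be moved to the left-hand side next to $E[[M^{L^*}]_{\tau_n}]$ and discarded. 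With that sign splitting your proof closes; as written, the argument does not.
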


\begin{proof}%
  We fix some $p_0<0$ and consider $p\in(-\infty,p_0]$. Using Lemma~\ref{le:uniformStopping-infty}, we may assume by localization
  that $M^{L^*(p)}\in \cH^2$ and $\lambda \in L^2(M)$.
  Define the continuous processes $X=X(p)$ by
  \[
    X_t(p):=k_2^\beta \mu^\circ[t,T]-L^*_t(p),
  \]
  then $0\leq X(p)\leq (1\vee k_2)(1+T)$ by~\eqref{eq:ProofDualLconv-infty}.
  Fix $p$. We shall apply It\^o's formula to $\Phi(X)$, where
  \[
    \Phi(x):=\exp(x)-x.
  \]
  For $x\geq0$, $\Phi$ satisfies
  \[
    \Phi(x)\geq 1, \q \Phi'(0)=0,\q \Phi'(x)\geq 0,\q\Phi''(x)\geq1,\q \Phi''(x)-\Phi'(x)=1.
  \]
  We have
  $
    \tfrac{1}{2}\int_0^T \Phi''(X)\,d\br{X}=\Phi(X_T)-\Phi(X_0)-\int_0^T \Phi'(X)\,(\,dM^X+dA^X).
  $
  As $\Phi'(X)$ is like $X$ uniformly bounded and $M^X=-M^{L^*}\in\cH^2$, the stochastic integral wrt.~$M^X$ is a true martingale and
  \[
    E\Big[\int_0^T \Phi''(X)\,d\br{X}\Big]=2E\big[\Phi(X_T)-\Phi(X_0)\big]-2E\Big[\int_0^T \Phi'(X)\,dA^X\Big].
  \]
  Note that $dA^X=-k_2^\beta\,d\mu-dA^{L^*}$, so that~\eqref{eq:dualOppFVformulas} yields
  \[
    2\,dA^X=-q\big(\beta\lambda L^* +2Z^{L^*}\big)^\top\,d\br{M}\,\lambda
      -p \big(L^*\big)^{-1}\,d\br{N^{L^*}} +2 \big(D^\beta-k_2^\beta\big) \,d\mu.
  \]
  Letting $p\to-\infty$, we have $q\to1-$ and $\beta\to 0+$. Hence, using that $X$ and $L^*$ are bounded uniformly in $p$,
  \begin{align*}
    -q\beta &E\Big[\int_0^T \Phi'(X) (\lambda L^*)^\top\,d\br{M}\,\lambda\Big]\to 0,\\
       &E\Big[\int_0^T \Phi'(X)\big(D^\beta-k_2^\beta\big) \,d\mu \Big]\to 0,\\
       &E\big[\Phi(X_T)-\Phi(X_0)\big]\to 0,
  \end{align*}
  where the last convergence is due to Proposition~\ref{pr:DualLconv-infty} (and the subsequent remark).
  If $o$ denotes the sum of these three expectations tending to zero,
    \begin{align*}
     E\bigg[\int_0^T &\Phi''(X)\,d\br{X}\bigg] \\
    & =E\bigg[\int_0^T \Phi'(X)\Big\{ 2q \big(Z^{L^*}\big)^\top\,d\br{M}\,\lambda + p \big(L^*\big)^{-1}\,d\br{N^{L^*}} \Big\} \bigg]+ o.
  \end{align*}
  Note $d\br{X}=d\br{L^*}=\big(Z^{L^*}\big)^\top\,d\br{M}\,Z^{L^*}+ d\br{N^{L^*}}$.
  For the right hand side, we use $\Phi'(X)\geq0$ and $|q|<1$ and the Cauchy-Schwarz inequality
  to obtain
  \begin{align*}
    & E\bigg[\int_0^T \Phi''(X)\,\Big\{ \big(Z^{L^*}\big)^\top\,d\br{M}\,Z^{L^*}+ d\br{N^{L^*}} \Big\}\bigg] \\
    & \leq E\bigg[\int_0^T \hspace{-.3em}\Phi'(X)\Big\{\big(Z^{L^*}\big)^\top\,d\br{M}\,Z^{L^*}\hspace{-.3em}+ \lambda^\top\,d\br{M}\,\lambda
    + p \big(L^*\big)^{-1}\,d\br{N^{L^*}} \Big\} \bigg]+ o.
  \end{align*}
  We bring the terms with $Z^{L^*}$ and $N^{L^*}$ to the left hand side, then
  \begin{align*}
   & E\bigg[\int_0^T \big\{\Phi''(X)-\Phi'(X) \big\} \big(Z^{L^*}\big)^\top\,d\br{M}\,Z^{L^*}\bigg]\\
   & + E\bigg[\int_0^T \hspace{-.5em}\big\{\Phi''(X)-p\Phi'(X) \big(L^*\big)^{-1}\big\}d\br{N^{L^*}} \bigg]
    \leq E\bigg[\int_0^T \hspace{-.4em}\Phi'(X)\lambda^\top d\br{M}\,\lambda \bigg]+ o.
 \end{align*}
 As $\Phi'(0)=0$, we have $\lim_{p\to-\infty} \Phi'(X_t)\to0$ $P$-a.s.~for all $t$, with a uniform bound, hence $\lambda \in L^2(M)$ implies that the right hand side converges to zero.
 We recall $\Phi''-\Phi'\equiv1$ and $\Phi''(X)-p\Phi'(X) \big(L^*\big)^{-1}\geq \Phi''(0)=1$. Whence
 both expectations on the left hand side are nonnegative and we can conclude that they converge to zero; therefore,
 $E\big[\int_0^T (Z^{L^*})^\top\,d\br{M}\,Z^{L^*}\big]\to 0$ and
  $E[\br{N^{L^*}}_T]\to 0$.
\end{proof}

\begin{proof}[Proof of Theorem~\ref{th:Limit-inftyEconomics}]
  In view of~\eqref{eq:consumptionFeedback}, part~(i) follows from Proposition~\ref{pr:DualLconv-infty};
  note that the convergence in $\cR^r_{loc}$ is immediate as $\hkappa(p)$ is locally bounded
  uniformly in $p$ by Lemma~\ref{le:uniformStopping-infty} and~\eqref{eq:consumptionFeedback}.
  For part~(ii), recall from~\eqref{eq:optStrategy} and~\eqref{eq:dualOppFVformulas} that
  \[
    \hpi=\beta (\lambda+ Z^L/L )
    =\beta\lambda +  Z^{L^*}/L^*
  \]
  for each $p$.
  As $\beta\to0$, clearly $\beta\lambda\to0$ in $L^2_{loc}(M)$.
  By Lemma~\ref{le:uniformStopping-infty}, $1/L^*$ is locally bounded
  uniformly in $p$, hence $\hpi(p)\to0$ in $L^2_{loc}(M)$ follows from Proposition~\ref{pr:StrategyConv}.
  As $\hkappa(p)$ is locally bounded uniformly in $p$, Corollary~\ref{co:ForWealthProcConv}(i)
  from the Appendix yields the convergence of the wealth processes $\hX(p)$.
\end{proof}

\subsection{Convergence to the Exponential Problem}\label{se:ExponentialLimit}

In this section, we prove Theorem~\ref{th:strategyConvExponentialMainRes}
and establish the convergence of the corresponding opportunity processes.
We assume that there is no intermediate consumption, that $S$ is locally bounded and satisfies~\eqref{eq:finiteEntropy}, and that the contingent claim $B$ is bounded (we will choose a specific $B$ later).
Hence there exists an (essentially unique) optimal strategy $\hvartheta\in\Theta$ for~\eqref{eq:exponentialProblem}. It is easy to see that $\hvartheta$ does not depend on the initial capital $x_0$. If $\vartheta\in\Theta$,
we denote by $G(\vartheta)=\vartheta \sint R$ the corresponding gains process and define
$\Theta(\vartheta,t)=\{\tvartheta\in\Theta:\, G_t(\tvartheta)=G_t(\vartheta)\}$.
We consider the value process (from initial wealth zero) of~\eqref{eq:exponentialProblem},
\[
  V_t(\vartheta):=\esssup_{\tvartheta\in\Theta(\vartheta,t)} E\big[-\exp\big(B - G_T(\tvartheta)\big)\big|\cF_t\big], \quad 0\leq t\leq T.
\]
Note the concatenation property $\vartheta^1, \vartheta^2\in \Theta \Rightarrow \vartheta^1 1_{[0,t]} + \vartheta^2 1_{(t,T]} \in \Theta$. With $G_{t,T}(\vartheta):=\int_t^T \vartheta \,dR$, we have
$G_T(\tvartheta)=G_t(\vartheta) + G_{t,T}(\tvartheta 1_{(t,T]})$ for $\tvartheta\in\Theta(\vartheta,t)$.
Therefore, if we define the exponential opportunity process
\begin{equation}\label{eq:LexpDef}
    L^{\exp}_t:=\essinf_{\tvartheta\in\Theta} E\big[\exp\big(B - G_{t,T}(\tvartheta)\big)\big|\cF_t\big], \quad 0\leq t\leq T,
\end{equation}
then using standard properties of the essential infimum one can check that
\[
 V_t(\vartheta)=-\exp(-G_t(\vartheta))\,L^{\exp}_t.
\]
Thus $L^{\exp}$ is a reduced form of the value process, analogous to $L(p)$ for power utility. We also note that $L^{\exp}_T=\exp(B)$.

\begin{Lemma}\label{le:BoundsForLexp}
  The exponential opportunity process $L^{\exp}$ is a submartingale satisfying $L^{\exp}\leq \|\exp(B)\|_{L^\infty(P)}$ and $L^{\exp}, L_-^{\exp}>0$.
\end{Lemma}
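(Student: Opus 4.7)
The proof splits naturally into four pieces: the upper bound, pointwise strict positivity, the submartingale property, and the positivity of the left limits. I would handle them in that order, since the first three feed into the fourth.

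The upper bound is immediate from taking $\tvartheta=0\in\Theta$ in~\eqref{eq:LexpDef}: this gives $L^{\exp}_t\leq E[\exp(B)\mid\cF_t]\leq \|\exp(B)\|_{L^\infty(P)}$. In particular, restricting the essinf to strategies with $E[\exp(B-G_{t,T}(\tvartheta))\mid\cF_t]\leq \|\exp(B)\|_{L^\infty(P)}$ does not change $L^{\exp}_t$, which will be convenient later.

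For strict positivity I would fix an auxiliary measure $Q\in\sM^{ent}$ (nonempty by~\eqref{eq:finiteEntropy}) with $P$-density process $Z^Q$ and introduce the conditional relative entropy $H_t(Q|P):=E^P[(Z^Q_T/Z^Q_t)\log(Z^Q_T/Z^Q_t)\mid\cF_t]$, which is a.s.\ finite since $Q\in\sM^{ent}$. For any $\tvartheta\in\Theta$, boundedness of $B$ and a change of measure give
\[
E^P\big[\exp(B-G_{t,T}(\tvartheta))\mid\cF_t\big]\geq e^{-\|B\|_{L^\infty}} E^Q\big[\exp\big(-G_{t,T}(\tvartheta)-\log(Z^Q_T/Z^Q_t)\big)\mid\cF_t\big],
\]
and Jensen's inequality applied to the convex function $\exp$ bounds this from below by $\exp(-\|B\|_{L^\infty}-E^Q[G_{t,T}(\tvartheta)\mid\cF_t]-H_t(Q|P))$. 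Since $\vartheta\sint R$ is a $Q$-supermartingale, $E^Q[G_{t,T}(\tvartheta)\mid\cF_t]\leq 0$, so the bound $\exp(-\|B\|_{L^\infty}-H_t(Q|P))>0$ is uniform in $\tvartheta$; taking the essinf yields $L^{\exp}_t>0$ a.s.

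For the submartingale property I would use the dynamic-programming pasting. The concatenation $\tvartheta 1_{(t,T]}$ lies in $\Theta$ whenever $\tvartheta\in\Theta$, because $(\tvartheta 1_{(t,T]})\sint R$ is the difference of the $Q$-supermartingale $\tvartheta\sint R$ and its stopping at $t$, which is again a $Q$-supermartingale. The family $\{E[\exp(B-G_{t,T}(\tvartheta))\mid\cF_t]\}_{\tvartheta\in\Theta}$ is downward directed via pasting along sets in $\cF_t$, so there is a sequence $(\tvartheta_n)$ with $Y_n:=E[\exp(B-G_{t,T}(\tvartheta_n))\mid\cF_t]\downarrow L^{\exp}_t$ a.s., and by the observation in the first paragraph we may take $Y_n\leq\|\exp(B)\|_{L^\infty(P)}$. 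For $s\leq t$, the identity $G_{s,T}(\tvartheta_n 1_{(t,T]})=G_{t,T}(\tvartheta_n)$ gives $L^{\exp}_s\leq E[Y_n\mid\cF_s]$, and dominated convergence for conditional expectations yields $L^{\exp}_s\leq E[L^{\exp}_t\mid\cF_s]$.

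Finally, for $L^{\exp}_->0$ I would pass to a c\`adl\`ag modification: the bounded submartingale $L^{\exp}$ admits one by Doob's regularization (the function $t\mapsto E[L^{\exp}_t]$ being the right-continuous output of the DPP construction), and the entropy process has a c\`adl\`ag version via $H_t(Q|P)=N_t/Z^Q_t-\log Z^Q_t$ where $N_t:=E^P[Z^Q_T\log Z^Q_T\mid\cF_t]$ is a closed $P$-martingale with finite left limits and $Z^Q_->0$. With both sides right-continuous, the a.s.\ pointwise bound $L^{\exp}_t\geq \exp(-\|B\|_{L^\infty}-H_t(Q|P))$ holds simultaneously on a dense set of $t$, hence for all $t\in[0,T]$ a.s., and taking left limits gives $L^{\exp}_{t-}\geq\exp(-\|B\|_{L^\infty}-H_{t-}(Q|P))>0$. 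The main obstacle is precisely this last step: one must justify that $L^{\exp}$ can be taken c\`adl\`ag (not automatic from the abstract essinf~\eqref{eq:LexpDef}) in order to transfer the sectionwise positivity bound to left limits; once that regularity is in place the proof is routine.
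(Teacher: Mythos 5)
Your proposal is correct in outline and well-organized, but it takes a noticeably different route from the paper's at the positivity step, and that difference is what creates the gap you flag at the end. The paper does not establish strict positivity of $L^{\exp}$ by a dual (entropy) lower bound; instead it uses directly the \emph{existence of an optimal strategy} $\hvartheta$ from Kabanov--Stricker: since $\hvartheta$ is optimal for every conditional problem~\eqref{eq:LexpDef}, one has $L^{\exp}_t = E[\exp(B - G_{t,T}(\hvartheta))\,|\,\cF_t] > 0$ for free, and then the martingale optimality principle says that $\xi := \exp(-G(\hvartheta))\,L^{\exp}$ is a positive, c\`adl\`ag \emph{martingale}. The minimum principle for positive c\`adl\`ag supermartingales then gives $P[\inf_t \xi_t>0]=1$ (hence $\xi,\xi_->0$), and since $\exp(G(\hvartheta))$ has pathwise bounded positive c\`adl\`ag trajectories, the same holds for $L^{\exp}=\xi\exp(G(\hvartheta))$. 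Observe what this buys: the c\`adl\`ag regularity of $L^{\exp}$, the positivity of $L^{\exp}$, and the positivity of $L^{\exp}_-$ all come out simultaneously from one object, with essentially no extra work. Your dual lower bound $L^{\exp}_t\geq\exp(-\|B\|_\infty - H_t(Q|P))$ is a correct and pretty alternative for pointwise positivity, and has the virtue of not invoking the existence theorem, but it does not by itself produce a c\`adl\`ag version of $L^{\exp}$, which is exactly where your argument hits friction.

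That friction is the genuine gap: you assert that the bounded submartingale $L^{\exp}$ admits a c\`adl\`ag modification "by Doob's regularization," citing right-continuity of $t\mapsto E[L^{\exp}_t]$ as "the output of the DPP construction." But that right-continuity is precisely what needs an argument, and your downward-directed pasting construction, taken literally, only yields an essinf for each fixed $t$ with no regularity in $t$. Decreasing limits of right-continuous functions need not be right-continuous, so the claim does not follow from the sequence $Y_n \downarrow L^{\exp}_t$. The cleanest way to close this in your framework would in fact be to bring in the optimal strategy after all: $L^{\exp}_t = \xi_t\exp(G_t(\hvartheta))$ with $\xi$ a u.i.\ martingale immediately gives both a c\`adl\`ag version and the right-continuity of the mean. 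Once you import that, the rest of your left-limit argument (via the c\`adl\`ag closed martingale $N_t=E[Z^Q_T\log Z^Q_T\,|\,\cF_t]$ and the positivity of $Z^Q_-$) is correct, just longer than what the paper does. Your upper bound via $\tvartheta=0$ and your proof of the submartingale property via pasting and downward directedness both match the paper's in substance; the paper simply outsources the latter to the cited martingale optimality principle rather than re-deriving it.
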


\begin{proof}
  The martingale optimality principle of dynamic programming is proved here exactly as, e.g., in~\cite[Proposition~A.2]{Nutz.09a}, and  yields that $V(\vartheta)$ is a supermartingale for every $\vartheta\in\Theta$ such that $E[V_{\cdot}(\vartheta)]>-\infty$ and a martingale if and only if $\vartheta$ is optimal.
  As $V(\vartheta)=-\exp(-G(\vartheta))\,L^{\exp}$, we obtain the submartingale property by the choice $\vartheta\equiv 0$.
  It follows that $L^{\exp}\leq \|L_T^{\exp}\|_{L^\infty}=\|\exp(B)\|_{L^\infty}$.

  The optimal strategy $\hvartheta$ is optimal for all the conditional problems~\eqref{eq:LexpDef}, hence
  $L_t^{\exp}=E\big[\exp\big(B - G_{t,T}(\hvartheta)\big)\big|\cF_t\big]>0$.
  Thus $\xi:=\exp(-G(\hvartheta))\,L^{\exp}$ is a positive martingale, by the optimality principle. In particular, we have $P[\inf_{0\leq t\leq T} \xi_t >0]=1$, and
  now the same property for $L^{\exp}$ follows.
\end{proof}

Assume that $S$ is continuous and denote the KW decomposition of $L^{\exp}$ with respect to $M$ by
$L^{\exp} = L^{\exp}_0 + Z^{L^{\exp}}\sint M + N^{L^{\exp}} + A^{L^{\exp}}$.
Then the triplet $(\ell,z,n):=\big(L^{\exp}, Z^{L^{\exp}},N^{L^{\exp}}\big)$ satisfies the BSDE
\begin{align}\label{eq:BellmanBsdeExp}
  d\ell_t & = \frac{1}{2}\,\ell_{t-}\Big(\lambda_t+\frac{z_t}{\ell_{t-}}\Big)^\top\,d\br{M}_t\, \Big(\lambda_t+\frac{z_t}{\ell_{t-}}\Big) + z_t\,dM_t + dn_t
\end{align}
with terminal condition $\ell_T = \exp(B)$, and the optimal strategy $\hvartheta$ is
\begin{equation}\label{eq:ExpOptimalStrategy}
  \hvartheta=\lambda + \frac{Z^{L^{\exp}}}{L^{\exp}_-}.
\end{equation}
This can be derived directly by dynamic programming or inferred, e.g., from Frei and Schweizer~\cite[Proposition~1]{FreiSchweizer.09}. We will actually reprove the BSDE later,
but present it already at this stage for the following motivation.

We observe that~\eqref{eq:BellmanBsdeExp} \emph{coincides with the BSDE~\eqref{eq:BellmanBSDEforL},
except that $q$ is replaced by $1$} and the terminal condition is $\exp(B)$ instead of $D_T$. From now on \emph{we assume
$\exp(B)=D_T$}, then one can guess that the solutions $L(p)$ should converge to $L^{\exp}$ as $q\to1-$, or equivalently $p\to-\infty$.

\begin{Thm}\label{th:strategyConvExponential}
  Let $S$ be continuous.
  \begin{enumerate}[topsep=3pt, partopsep=0pt, itemsep=1pt,parsep=2pt]
  \item  As $p\downarrow -\infty$, $L_t(p)\downarrow L_t^{\exp}$ $P$-a.s.~for all $t$, with a uniform bound.

  \item If $L(p)$ is continuous for each $p<0$, then $L^{\exp}$ is also continuous and the convergence $L(p)\downarrow L^{\exp}$ is uniform in $t$, $P$-a.s. Moreover,
      \[
        (1-p)\, \hpi(p) \to \hvartheta\quad \mbox{ in } L^2_{loc}(M).
      \]
  \end{enumerate}
\end{Thm}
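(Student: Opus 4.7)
The plan is to handle parts (i) and (ii) in sequence.

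For part~(i), Proposition~\ref{pr:ComparisonLNoCons} (no intermediate consumption) gives $L(p)\leq L(p_0)$ whenever $p<p_0<0$, so combined with Lemma~\ref{le:BoundsForL} the pointwise monotone limit $\ell_t:=\lim_{p\to-\infty} L_t(p)$ exists and is bounded by $k_2$. To identify $\ell$ with $L^{\exp}$, I would prove $\ell\leq L^{\exp}$ by the reparametrization $\tilde\pi_p:=\vartheta/(1-p)$ for bounded $\vartheta\in\Theta$ (power-admissible by continuity of $S$): the infimum representation $L_t(p)=\essinf_{\tilde\pi} E[D_T\,\mathcal{E}(\tilde\pi\sint R)_{t,T}^{\,p}\mid\cF_t]$ valid for $p<0$ gives
\[
L_t(p)\leq E\Big[D_T\exp\Big(\tfrac{p}{1-p}G_{t,T}(\vartheta)-\tfrac{p}{2(1-p)^2}\br{G(\vartheta)}_{t,T}\Big)\Big|\cF_t\Big],
\]
whose right-hand side converges, by dominated convergence, to $E[\exp(B-G_{t,T}(\vartheta))|\cF_t]$ since the coefficients tend to $-1$ and $0$; optimizing over a dense subclass of $\Theta$ yields $\ell\leq L^{\exp}$. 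For the reverse inequality $\ell\geq L^{\exp}$ I would pass to the dual problem~\eqref{eq:dualProblem}, where the dual exponent $q=p/(p-1)\to 1-$ is the correct regime and the power dual reduces in the limit to the entropy minimization that characterizes $L^{\exp}$; one then checks that $\ell$ is a supersolution of the extremal representation~\eqref{eq:LexpDef}.

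For part~(ii), the formulas \eqref{eq:optStrategy}, \eqref{eq:dualOppFVformulas} and~\eqref{eq:ExpOptimalStrategy} give
\[
(1-p)\hpi(p)=\lambda+\frac{Z^{L(p)}}{L(p)_-},\qquad \hvartheta=\lambda+\frac{Z^{L^{\exp}}}{L^{\exp}_-},
\]
so the task reduces to $Z^{L(p)}/L(p)_- \to Z^{L^{\exp}}/L^{\exp}_-$ in $L^2_{loc}(M)$. Via a localization analogous to Lemma~\ref{le:uniformStopping-infty}, using $L^{\exp}>0$ (Lemma~\ref{le:BoundsForLexp}) together with $L(p)\geq L^{\exp}$ from part~(i), both $L(p)_-$ and $L^{\exp}_-$ may be taken bounded away from zero, and the task reduces further to $Z^{L(p)}\to Z^{L^{\exp}}$ in $L^2(M)$ and $N^{L(p)}\to N^{L^{\exp}}$ in $\cH^2$ on the localized intervals. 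Continuity of $L^{\exp}$ then emerges as a byproduct: each $N^{L(p)}$ is continuous (since $L(p)$ and $M$ are), so the $\cH^2$-limit $N^{L^{\exp}}$ is continuous by BDG along a subsequence, while the remaining pieces of $L^{\exp}$ are driven by $M$ and $\br{M}$ and hence continuous. Uniform convergence $L(p)\downarrow L^{\exp}$ on $[0,T]$ then follows from Dini's theorem applied pathwise.

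The core of the proof is a BSDE stability argument for $(Z^{L(p)},N^{L(p)})$ mirroring Proposition~\ref{pr:StrategyConv}. Apply It\^o's formula with $\Phi(x)=e^x-x$ to the nonnegative process $X(p):=L(p)-L^{\exp}$, take expectations, and substitute the finite-variation parts from~\eqref{eq:FVpartofL} (with $q=q(p)\to 1-$) and from the driver in~\eqref{eq:BellmanBsdeExp}. The key identity $\Phi''-\Phi'\equiv 1$ produces, after rearrangement, a nonnegative quadratic form in $(Z^{L(p)}-Z^{L^{\exp}})$ together with a nonnegative contribution from $\br{N^{L(p)}-N^{L^{\exp}}}$ on the left, balanced on the right by error terms carrying the vanishing factors $\Phi'(X(p))$ (handled via dominated convergence from the pointwise convergence $X(p)\to 0$) and $(1-q)$. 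Cross-terms involving $\lambda$ are controlled via $\lambda\sint M\in BMO$ (Corollary~\ref{co:lambdaBMO}) and the uniform BMO bound on $M^{L(p)}$ (Lemma~\ref{le:LisBMO}(i)). This BSDE stability step is the main obstacle: standard quadratic BSDE results do not apply directly, as the driver has only local BMO control and the coefficient $q$ changes with $p$; the crucial feature—shared with Proposition~\ref{pr:StrategyConv}—is that the pointwise convergence of $L(p)$ is already in hand \emph{before} entering the BSDE, avoiding the need for an a priori estimate or a comparison principle.
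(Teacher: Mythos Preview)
Your overall plan for part~(i) matches the paper's, but the lower bound $L(p)\geq L^{\exp}$ is left vague. The paper (Lemma~\ref{le:LpGreaterLexp}) makes this precise via the minimal entropy martingale measure: with $Y$ its density process, the duality relation~\eqref{eq:dualRelationForExp} and Lemma~\ref{le:monotoneFunctionRevHolder} give $L^{\exp}_t=\phi(1)\leq\phi(q)=E[(Y_T/Y_t)^q|\cF_t]^{1/\beta}\leq L_t(p)$ directly. For the upper bound, ``optimizing over a dense subclass'' hides a genuine technicality: the paper needs a three-step approximation (bounded strategies, then class~(D), then Schachermayer's construction~\cite{Schachermayer.01}) to reach the optimal $\hvartheta$.

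For part~(ii) your route differs substantially from the paper's. The paper does \emph{not} compare $L(p)$ directly to $L^{\exp}$; instead it runs Kobylanski's technique (Lemmata~\ref{le:KobylanskiConv2}, \ref{le:KobylanskiConv3}) on differences $L^n-L^m$ along the sequence, first extracting a weak $\cH^2$-limit $(\tZ,\tN)$ by compactness, then upgrading to strong convergence via $\Phi(x)=\frac{1}{8K^2}(e^{4Kx}-4Kx-1)$ with $K$ matched to the quadratic growth constant of the driver, and finally identifying the limit with $L^{\exp}$. This has two advantages over your proposal: it does not need the BSDE~\eqref{eq:BellmanBsdeExp} for $L^{\exp}$ as an input, and it produces continuity of $L^{\exp}$ genuinely as output (since the limit $\tL$ is shown to be continuous before being identified with $L^{\exp}$). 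Your argument, by contrast, applies It\^o to $X=L(p)-L^{\exp}$ and substitutes the driver of~\eqref{eq:BellmanBsdeExp}; but you then claim continuity of $L^{\exp}$ ``emerges as a byproduct'' of $N^{L(p)}\to N^{L^{\exp}}$ in $\cH^2$---which is circular, because if $N^{L^{\exp}}$ has jumps you must handle them in It\^o's formula \emph{before} you can conclude the $\cH^2$-convergence. Moreover, the analogy with Proposition~\ref{pr:StrategyConv} is imperfect: there the target is deterministic and the driver of $L^*$ is \emph{linear} in $Z^{L^*}$ (see~\eqref{eq:dualOppFVformulas}), so after Cauchy--Schwarz the quadratic coefficient is $|q|<1$ and $\Phi''-\Phi'\equiv1$ is exactly what is needed. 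Here the driver of $L$ is genuinely quadratic in $Z^L$ with coefficient $q/(2L_-)$; the sign turns out to be favorable, but the bookkeeping is different and your sketch does not address it.
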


We note that (ii) is also a statement about the rate of convergence for $\hpi(p)\to0$ in Theorem~\ref{th:Limit-inftyEconomics}(ii) for the case without intermediate consumption.
The proof occupies most of the remainder of the section. Part~(i) follows from the next two lemmata;
recall that the monotonicity of $p\mapsto L_t(p)$ was already established in Proposition~\ref{pr:ComparisonLNoCons} while the uniform bound is from Lemma~\ref{le:BoundsForL}.

\begin{Lemma}\label{le:LpGreaterLexp}
  We have $L(p)\geq L^{\exp}$ for all $p<0$.
\end{Lemma}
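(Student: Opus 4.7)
The plan is to use the dual characterization of $L(p)$ (via the dual opportunity process $L^*$) together with a conditional Jensen inequality and the entropic dual formula for $L^{\exp}$. First, since $p<0$, equations~\eqref{eq:DualOppProcAltDef} and~\eqref{eq:DualAndPrimalOppProc} give
\[
  L_t(p) = (L^*_t)^{1-p}, \qquad L^*_t = \esssup_{Y \in \sY} E\big[D_T^\beta (Y_T/Y_t)^q \,\big|\, \cF_t\big].
\]
Because $\sM^{ent}\neq\emptyset$ and the density process $Y^Q$ of any $Q \in \sM$ lies in $\sY$ modulo a scale constant (which is irrelevant for the ratio $Y_T/Y_t$), I fix an arbitrary $Q \in \sM^{ent}$, set $Z := Y^Q_T/Y^Q_t > 0$, and use $D_T = e^B$ to get the lower bound
\[
  L^*_t(p) \;\geq\; E\big[e^{\beta B}\, Z^q \,\big|\, \cF_t\big].
\]

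Next I rewrite the right-hand side under $Q$. The crucial identity is $q-1 = -\beta$. Together with the standard change-of-measure formula and the conditional Jensen inequality (for the convex function $\exp$), I obtain, using $E^Q[\log Z\,|\,\cF_t] = H_{t,T}(Q|P)$,
\[
  E\big[e^{\beta B} Z^q\,\big|\,\cF_t\big] \;=\; E^Q\big[e^{\beta(B-\log Z)}\,\big|\,\cF_t\big] \;\geq\; \exp\!\big(\beta\, E^Q[B|\cF_t] - \beta\, H_{t,T}(Q|P)\big).
\]
Raising to the power $1-p = 1/\beta > 0$ preserves the inequality and gives
\[
  L_t(p) \;\geq\; \exp\!\big(E^Q[B|\cF_t] - H_{t,T}(Q|P)\big) \qquad \text{for every } Q \in \sM^{ent}.
\]

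To finish, I take the essential supremum over $Q \in \sM^{ent}$ on the right. Under the standing assumptions ($B\in L^\infty$, $S$ locally bounded, $\sM^{ent}\neq\emptyset$), the conditional entropic duality for exponential hedging identifies
\[
  \esssup_{Q \in \sM^{ent}} \exp\!\big(E^Q[B|\cF_t] - H_{t,T}(Q|P)\big) \;=\; L^{\exp}_t,
\]
which yields $L_t(p) \geq L^{\exp}_t$. The main obstacle is this last step, namely the ``$\leq$'' direction of the conditional entropic dual (that $L^{\exp}_t$ is actually attained by, and not merely bounded below by, the $Q$-supremum). This is classical in the exponential-hedging literature (Kabanov--Stricker, Delbaen et al.) and in the dynamic form used here is recovered through the martingale optimality principle already invoked in the proof of Lemma~\ref{le:BoundsForLexp}. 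If one prefers to avoid invoking this duality, an alternative is to use the pointwise Bernoulli-type inequality $\cE(\tilde\pi\sint R)^p \geq e^{p(\tilde\pi\sint R)}$ valid for $p<0$ (which follows from $\log\cE(Y) = Y - \tfrac12[Y^c] - \sum_s(\Delta Y_s - \log(1+\Delta Y_s))$), applied to a candidate $\vartheta := -p\,\tilde\pi$; the difficulty then shifts to verifying $\vartheta\in\Theta$, which motivates the cleaner dual route above.
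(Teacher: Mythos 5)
Your proposal is correct and follows essentially the same route as the paper's proof: both rest on the dual characterization~\eqref{eq:DualOppProcAltDef} of $L^*(p)$, a conditional Jensen inequality, and the entropic dual representation of $L^{\exp}$. The differences are cosmetic — you keep $B$ general and take an essential supremum over all $Q\in\sM^{ent}$, whereas the paper first reduces to $B=0$ by a change of measure, fixes $Q=Q^E$ (the minimal entropy measure, which attains the dual supremum, cf.~\eqref{eq:dualRelationForExp}), and packages the Jensen step as the monotonicity $\phi(1)\leq\phi(q)$ from Lemma~\ref{le:monotoneFunctionRevHolder}.
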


\begin{proof}
  As is well-known, we may assume that $B=0$ by a change of measure from  $P$ to $dP(B)=(e^B/E[e^B])\,dP$.
  Let $Q^E\in \sM^{ent}$ be the measure with minimal entropy $H(Q|P)$; see, e.g.,~\cite[Theorem~3.5]{KabanovStricker.02}.
  Let $Y$ be its $P$-density process, then
  \begin{equation}\label{eq:dualRelationForExp}
    -\log(L^{\exp}_t)=E^{Q^E}\big[\log(Y_T/Y_t)\big|\cF_t\big]=E\big[(Y_T/Y_t)\log(Y_T/Y_t)\big|\cF_t\big].
  \end{equation}
  This is merely a
  dynamic version of the well-known duality relation
  stated, e.g., in~\cite[Theorem~2.1]{KabanovStricker.02} %
  and one can retrieve this version, e.g., from~\cite[Eq.~(8),(10)]{FreiSchweizer.09}.
  Using the decreasing function $\phi$ from Lemma~\ref{le:monotoneFunctionRevHolder},
  \begin{align*}
    L^{\exp}_t&=\exp\Big(-E\big[(Y_T/Y_t)\log(Y_T/Y_t)\big|\cF_t\big]\Big)=\phi(1)\\
      & \leq \phi(q)= E\big[(Y_T/Y_t)^{q}\big|\cF_t\big]^{1/\beta}\leq L^*(p)^{1/\beta}=L(p),
  \end{align*}
  where \eqref{eq:DualOppProcAltDef} was used for the second inequality.
\end{proof}

\begin{Lemma}
  Let $S$ be continuous. Then $\limsup_{p\to -\infty} L_t(p)\leq L_t^{\exp}$.
\end{Lemma}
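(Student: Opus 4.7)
The plan is to construct, for each admissible strategy $\vartheta$ in the exponential problem, a corresponding power‑utility strategy $\pi^p:=-\vartheta/p$ whose expected utility bounds $L_t(p)$ from above, and then to pass this bound to the limit. In the pure investment case the value process identity~\eqref{eq:OppProcIndep} (for $p<0$) rewrites as
\[
 L_t(p)=\essinf_{\pi}E\big[D_T\,\big(X_T(\pi)/X_t(\pi)\big)^p\,\big|\cF_t\big],
\]
the essinf being over admissible $\pi$. Given $\vartheta\in\Theta$, the process $\pi^p:=-\vartheta/p$ lies in $L(R)$, and continuity of $R$ makes $X(\pi^p)>0$ automatically, so $\pi^p$ is admissible. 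Using $\cE(X)^p=\exp(pX-\tfrac{p}{2}[X])$ for the continuous semimartingale $X=\pi^p\sint R$, together with $[R]=\langle M\rangle$ and $p\pi^p=-\vartheta$, one computes
\[
 \big(X_T(\pi^p)/X_t(\pi^p)\big)^p=\exp\!\Big(-G_{t,T}(\vartheta)+\tfrac{1}{2|p|}\textstyle\int_t^T\vartheta^\top d\langle M\rangle\vartheta\Big).
\]
Substituting $\exp(B)=D_T$ gives the key inequality
\begin{equation}\label{eq:planKey}
 L_t(p)\leq E\Big[\exp\!\big(B-G_{t,T}(\vartheta)\big)\,\exp\!\Big(\tfrac{1}{2|p|}\textstyle\int_t^T\vartheta^\top d\langle M\rangle\vartheta\Big)\Big|\cF_t\Big].
\end{equation}

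To pass $p\to-\infty$ in~\eqref{eq:planKey} requires controlling the auxiliary factor. I would therefore localize the exponential optimizer: let $\sigma_n$ be the first time at which either $|G_\cdot(\hvartheta)|$ or $\int_0^\cdot\hvartheta^\top d\langle M\rangle\hvartheta$ exceeds $n$, truncated at $T$. Since $G_\cdot(\hvartheta)$ is c\`adl\`ag on the compact $[0,T]$ and the quadratic‑variation integral is finite, $\sigma_n=T$ eventually, $P$-a.s., i.e.~$\sigma_n\to T$ stationarily. Setting $\vartheta^n:=\hvartheta 1_{[0,\sigma_n]}$, the stopping‑time property of supermartingales shows $\vartheta^n\in\Theta$. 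Moreover $\int_0^T(\vartheta^n)^\top d\langle M\rangle\vartheta^n\leq n$, so applying~\eqref{eq:planKey} with $\vartheta=\vartheta^n$ yields
\[
 L_t(p)\leq e^{n/(2|p|)}\,E\big[\exp\!\big(B-G_{t,T}(\vartheta^n)\big)\big|\cF_t\big],
\]
and the prefactor tends to $1$ as $p\to-\infty$. This gives $\limsup_{p\to-\infty}L_t(p)\leq E[\exp(B-G_{t,T}(\vartheta^n))|\cF_t]$ for every $n$.

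It remains to send $n\to\infty$. Here the martingale $\xi_t=\exp(-G_t(\hvartheta))L^{\exp}_t$ produced in Lemma~\ref{le:BoundsForLexp} is the main tool. Conditioning on $\cF_{\sigma_n}$ (on the event $\{t\leq \sigma_n\}$, which holds for large $n$) and using $\exp(B)=L^{\exp}_T$ together with the submartingale decomposition $L^{\exp}=L^{\exp}_0+M^{L^{\exp}}+A^{L^{\exp}}$, one obtains
\[
 E\!\big[\exp(B-G_{t,T}(\vartheta^n))\big|\cF_t\big]-L^{\exp}_t
 =\exp(G_t(\hvartheta))\,E\!\big[\exp(-G_{\sigma_n}(\hvartheta))\big(A^{L^{\exp}}_T-A^{L^{\exp}}_{\sigma_n}\big)\big|\cF_t\big],
\]
because $E[\xi_{\sigma_n}|\cF_t]=\xi_t$ cancels the leading $L^{\exp}_t$ term. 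Since $A^{L^{\exp}}$ is continuous/c\`adl\`ag with $A^{L^{\exp}}_T\in L^1$ (as $L^{\exp}$ is bounded), and since $\sigma_n=T$ stationarily, the right‑hand side converges to $0$ a.s., and the conclusion $\limsup_{p\to-\infty}L_t(p)\leq L^{\exp}_t$ follows.

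The main obstacle is the last step: the identification of the pointwise a.s.\ convergence with convergence of the conditional expectation. The quantity $\exp(-G_{\sigma_n}(\hvartheta))$ is bounded by $e^n$ but not uniformly in $n$, so dominated convergence is not immediate. The natural $n$‑uniform bound is $\sup_{s\leq T}\exp(-G_s(\hvartheta))=\sup_s\xi_s/L^{\exp}_s$, which need not be integrable in general. The cleanest way out is to exploit the structure further: because $L^{\exp}$ is bounded, $A^{L^{\exp}}_T-A^{L^{\exp}}_{\sigma_n}$ is itself uniformly bounded by a constant, while the dual/martingale control provided by $\xi$ lets one argue convergence through an integration‑by‑parts on $[\sigma_n,T]$ and the stationarity of $\sigma_n\uparrow T$. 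This is the only delicate point; once it is handled, the three limits (the one inside the conditional expectation, $p\to-\infty$, and $n\to\infty$) combine to give the claim.
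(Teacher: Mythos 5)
Your opening step --- taking the portfolio fraction $\pi=|p|^{-1}\vartheta$, using $\cE(\pi\sint R)^p=\exp\big(-\vartheta\sint R+\tfrac{1}{2|p|}\br{\vartheta\sint R}\big)$ on $[t,T]$ and letting the quadratic-variation penalty die as $p\to-\infty$ --- is exactly the paper's step~(i). The gap is in passing from that key inequality to the optimizer $\hvartheta$. You stop $\hvartheta$ at $\sigma_n$, express the remainder as $E\big[\exp(-G_{\sigma_n}(\hvartheta))(A^{L^{\exp}}_T-A^{L^{\exp}}_{\sigma_n})\big|\cF_t\big]$ via the martingale $\xi=\exp(-G(\hvartheta))L^{\exp}$, and then need this to vanish as $n\to\infty$. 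That step does not close, and you say so yourself: on $\{\sigma_n<T\}$ the factor $\exp(-G_{\sigma_n}(\hvartheta))$ is only bounded by $e^n$, and there is no $n$-uniform integrable majorant unless $\exp(-G(\hvartheta))$ is of class~(D). As the paper notes in the remark immediately after this lemma, that class-(D) property is equivalent to a reverse H\"older inequality $\textrm{R}_{L\log(L)}(P)$ for the minimal-entropy density process, which is \emph{not} assumed here. The ``integration by parts on $[\sigma_n,T]$'' you gesture at supplies no additional control, since the increments $A^{L^{\exp}}_T-A^{L^{\exp}}_{\sigma_n}$ also admit no $n$-uniform bound.

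The paper therefore avoids naively stopping $\hvartheta$. It first proves the $\limsup$ inequality for those $\vartheta$ with $\exp(-\vartheta\sint R)$ of class~(D) --- where your stopping argument does work --- then extends trivially to random variables $g_{tT}\leq(\vartheta\sint R)_{tT}$, and finally, and this is the nontrivial ingredient your proof is missing, invokes Schachermayer's construction (Theorem~2.2 of~\cite{Schachermayer.01}) to produce $g^n_{tT}$ of this form with $\exp(B-g^n_{tT})\to\exp(B-G_{t,T}(\hvartheta))$ in $L^1(P)$. That $L^1$-convergence replaces precisely the domination your argument lacks. In short, the first half of your proof coincides with the paper's; the second half has a genuine gap, and it sits exactly where the result requires the approximation theorem rather than a stopping argument.
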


\begin{proof}
  Fix $t\in[0,T]$. We denote $\cE_{tT}(X):=\cE(X)_T/\cE(X)_t$ and $X_{tT}:=X_T-X_t$.

  (i)~~Let $\vartheta\in L(R)$ be such that $|\vartheta \sint R| + \br{\vartheta\sint R}$ is bounded by a constant.
   Noting that $L(R)\subseteq \cA$ because $R$ is continuous, we have from~\eqref{eq:OppProcIndep} that
   \begin{align*}
    L_t(p) & = \essinf_{\pi\in\cA} E\big[D_T\cE^p_{tT}(\pi\sint R)\big|\cF_t\big]
                  \leq E\big[D_T\cE^p_{tT}(|p|^{-1}\vartheta \sint R)\big|\cF_t\big]\\
           & = E\big[D_T\exp\big(-(\vartheta \sint R)_{tT} + \tfrac{1}{2|p|} \br{\vartheta\sint R}_{tT}\big) \big|\cF_t\big].
  \end{align*}
  The expression under the last conditional expectation is bounded uniformly in $p$, so the last line
  converges to $E\big[\exp\big(B-(\vartheta \sint R)_{tT}\big) \big|\cF_t\big]$ $P$-a.s.\ when $p\to-\infty$;
  recall $D_T=\exp(B)$. We have shown
  \begin{equation}\label{eq:proofExpConvBddStrat}
    \limsup_{p\to-\infty} L_t(p) \leq E\big[\exp\big(B-(\vartheta \sint R)_{tT}\big) \big|\cF_t\big]\quad P\mbox{-a.s.}
  \end{equation}

  (ii)~~Let $\vartheta\in L(R)$ be such that $\exp(-\vartheta\sint R)$ is of class (D). Defining the
  stopping times $\tau_n=\inf\{s>0:\,|\vartheta \sint R_{s}| + \br{\vartheta\sint R}_{s} \geq n\}$, we have
  \[
    \limsup_{p\to-\infty} L_t(p) \leq E\big[\exp\big(B-(\vartheta \sint R)^{\tau_n}_{tT}\big) \big|\cF_t\big]\quad P\mbox{-a.s.}
  \]
  for each $n$, by step (i) applied to $\vartheta1_{(0,\tau_n]}$.
  Using the class (D) property, the right hand side converges to $E\big[\exp\big(B-(\vartheta \sint R)_{tT}\big) \big|\cF_t\big]$ in $L^1(P)$ as $n\to\infty$, and also $P$-a.s.\ along a subsequence. Hence~\eqref{eq:proofExpConvBddStrat} again holds.

  (iii)~~The previous step has a trivial extension: Let $g_{tT}\in L^0(\cF_T)$ be a random variable such that
  $g_{tT}\leq (\vartheta \sint R)_{tT}$ for some $\vartheta$ as in (ii). Then
  \[
    \limsup_{p\to-\infty} L_t(p) \leq E\big[\exp(B-g_{tT}) \big|\cF_t\big]\quad P\mbox{-a.s.}
  \]

  (iv)~~Let $\hvartheta\in\Theta$ be the optimal strategy. We claim that there exists a sequence $g^n_{tT}\in L^0(\cF_T)$
  of random variables as in (iii) such that
  \[
    \exp\big(B-g^n_{tT}\big)\to \exp\big(B-G_{t,T}(\hvartheta)\big)\;\mbox{ in }L^1(P).
  \]
  Indeed, we may assume $B=0$, as in the previous proof.
  Then our claim follows by the construction of Schachermayer~\cite[Theorem~2.2]{Schachermayer.01} applied to
  the time interval $[t,T]$; recall the definitions~\cite[Eq.\,(4),(5)]{Schachermayer.01}.
  We conclude that $\limsup_{p\to-\infty} L_t(p) \leq E\big[\exp\big(B-G_{t,T}(\hvartheta)\big) \big|\cF_t\big]=L^{\exp}_t$
  $P$-a.s.\ by the $L^1(P)$-continuity of the conditional expectation.
\end{proof}

\begin{Remark}
  Recall that $\exp(-G(\hvartheta))L^{\exp}$ is a martingale, hence of class~(D). If $L^{\exp}$ is uniformly bounded away from zero, it follows that $\exp(-G(\hvartheta))$ is already of class (D) and the last two steps in the previous proof are unnecessary.
  This situation occurs precisely when the right hand side of~\eqref{eq:dualRelationForExp} is bounded uniformly in $t$.
  In standard terminology, the latter condition states that the reverse H\"older inequality $\textrm{R}_{L\log(L)}(P)$
  is satisfied by the density process of the minimal entropy martingale measure.
\end{Remark}

\begin{Lemma}\label{le:KobylanskiConv}
  Let $S$ be continuous and assume that $L(p)$ is continuous for all $p<0$. Then
  $L^{\exp}$ is continuous and $L_t(p)\to L_t^{\exp}$ uniformly in $t$, $P$-a.s. Moreover,
  $Z^{L(p)}\to Z^{\exp}$ in $L^2_{loc}(M)$ and $N(p)\to N^{\exp}$
  in $\cH^2_{loc}$.
\end{Lemma}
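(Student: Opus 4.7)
The plan is a Kobylanski-style monotone stability argument for the Bellman BSDE~\eqref{eq:BellmanBSDEforL}, leveraging the monotone pointwise convergence $L(p) \downarrow L^{\exp}$ from Theorem~\ref{th:strategyConvExponential}(i). First I localize. Since $L^{\exp}_- > 0$ by Lemma~\ref{le:BoundsForLexp}, the stopping times $\tau_m := \inf\{t : L^{\exp}_{t-} \leq 1/m\} \wedge T$ form a localizing sequence, and the inequality $L(p) \geq L^{\exp}$ from Lemma~\ref{le:LpGreaterLexp} yields $L(p)_- \geq 1/m$ on $[0, \tau_m]$ simultaneously for all $p < 0$. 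Combined with the upper bound $L(p) \leq k_2$ (Lemma~\ref{le:BoundsForL}) and the uniform BMO estimate $\|M^{L(p)}\|_{BMO} \leq C$ from Lemma~\ref{le:LisBMO}(i) (with $p_1 = -\infty$, valid in the pure investment case), this makes $L(p)$, $1/L(p)_-$, and $q(p)$ uniformly bounded on $[0, \tau_m]$, with $q(p) \to 1-$ as $p \to -\infty$.

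For $p < p' < 0$, set $\Delta := L(p) - L(p') \in [0, k_2]$. Since $L(p)_T = L(p')_T = D_T$ we have $\Delta_T = 0$, and by continuity of $L(p)$ the martingale part of $\Delta$ is $M^\Delta = (Z^{L(p)} - Z^{L(p')}) \sint M + (N^{L(p)} - N^{L(p')})$. Applying It\^o's formula to $\Delta^2$ on $[0, \tau_m]$ and taking expectations yields
\[
  E\bigl[\langle M^\Delta\rangle_{\tau_m}\bigr] = E[\Delta_{\tau_m}^2] - \Delta_0^2 - 2\, E\Bigl[\int_0^{\tau_m} \Delta_-\, d\bigl(A^{L(p)} - A^{L(p')}\bigr)\Bigr].
\]
The first two terms vanish as $p, p' \to -\infty$ by dominated convergence. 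To handle the drift term, I would use the explicit form~\eqref{eq:FVpartofL} of $A^{L(p)}$ to split the difference into a piece proportional to $q - q' \to 0$ (times a BMO-bounded quadratic form in $Z^{L(p)}$), a piece quadratic in $Z^{L(p)} - Z^{L(p')}$ that can be absorbed into the left-hand side, and a cross term bounded via Cauchy--Schwarz by $\|\Delta^{\tau_m}\|_{\cR^2}$ times the uniform BMO norms. Since $\Delta \to 0$ pointwise with $\Delta \leq k_2$, dominated convergence gives $\|\Delta^{\tau_m}\|_{\cR^2} \to 0$; hence $M^{L(p)}$ is Cauchy in $\cH^2$ on $[0, \tau_m]$, and by KW orthogonality this is equivalent to $Z^{L(p)} \to Z^\infty$ in $L^2(M)$ and $N^{L(p)} \to N^\infty$ in $\cH^2$ on $[0, \tau_m]$.

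Passing to the limit in the canonical decomposition $L(p) = L_0(p) + M^{L(p)} + A^{L(p)}$ on $[0, \tau_m]$, the martingale parts converge to the continuous martingale $Z^\infty \sint M + N^\infty$, and since $L(p) \to L^{\exp}$ pointwise, the finite-variation parts $A^{L(p)}$ must converge to a continuous process. Hence $L^{\exp}$ is continuous on each $[0, \tau_m]$, and thus globally. Uniqueness of the KW decomposition of $L^{\exp}$ with respect to $M$ then identifies $Z^\infty = Z^{L^{\exp}}$ and $N^\infty = N^{L^{\exp}}$, giving the two asserted convergences. Finally, once continuity of $L^{\exp}$ is established, Dini's theorem applied pathwise to the decreasing sequence of continuous functions $L_\cdot(p)$ on the compact $[0, T]$, with continuous limit $L^{\exp}_\cdot$, yields the uniform convergence in $t$, $P$-a.s.

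The main obstacle is the Cauchy estimate above: the driver of~\eqref{eq:BellmanBSDEforL} has quadratic growth in $Z^L/L_-$, so $A^{L(p)} - A^{L(p')}$ is not linear in $\Delta$. Making the quadratic cross terms tractable requires combining the monotonicity ($\Delta \geq 0$), the uniform BMO bound on $M^{L(p)}$, and the uniform two-sided bounds on $L(p)$ after localization. This is the content of Kobylanski's monotone stability theorem for quadratic BSDEs, here adapted to a continuous-filtration semimartingale setting with a nontrivial orthogonal local martingale part $N^L$.
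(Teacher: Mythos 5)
Your approach diverges from the paper's at the key technical step. The paper (via Lemmata~\ref{le:KobylanskiConv2} and~\ref{le:KobylanskiConv3}) first uses the uniform BMO bound to extract a weakly $\cH^2$-convergent subsequence of $M^{L(p_n)}$, identifies its KW decomposition, and then upgrades to strong convergence by applying It\^o's formula to $\Phi(\delta L)$ with the exponential test function $\Phi(x)=(e^{4Kx}-4Kx-1)/(8K^2)$, relying on a weak lower semicontinuity argument to pass to the limit. You propose a direct Cauchy estimate via It\^o on $\Delta^2$, which would bypass the weak-compactness detour -- a genuine simplification if it closes.

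The unverified step is precisely the one Kobylanski's exponential $\Phi$ is designed to circumvent. With $\Delta:=L(p')-L(p)$ for $p<p'<0$ (note the needed sign correction: by Proposition~\ref{pr:ComparisonLNoCons}, $L(p)\leq L(p')$, so your $\Delta=L(p)-L(p')$ is $\leq 0$, not in $[0,k_2]$), the drift contribution $-2E\big[\int\Delta_-\,dA^\Delta\big]$ contains a term $-E\big[\int\Delta_-\,\tfrac{q'}{L(p')_-}|Z^{L(p')}-Z^{L(p)}|^2\,d\br{M}\big]$ whose coefficient $\Delta_-q'/L(p')_-$ is of order one, not small (it is bounded only by $q'<1$ with $q'\to 1$). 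You assert this "can be absorbed into the left-hand side" without saying why. With $\Phi=x^2$ the absorption works only because this term is nonpositive -- a consequence of the convexity of the driver~\eqref{eq:FVpartofL} in $Z^L$ ($\partial^2_z f=q/L_->0$ for $p<0$) together with the monotone ordering $\Delta\geq 0$, so it helps rather than hurts. Neither fact is invoked in your sketch, and the favorable sign is the sole reason $\Phi=x^2$ suffices here when quadratic-BSDE stability arguments generically need the exponential $\Phi$ (chosen so that $\tfrac12\Phi''-2K\Phi'\equiv 1$). Your closing sentence ("This is the content of Kobylanski's monotone stability theorem\dots") also sits uneasily with the proposal: if you intend to invoke that theorem as a black box, the It\^o-on-$\Delta^2$ computation is out of place; if you intend to re-derive it with $\Phi=x^2$, the sign verification is the heart of the matter and must be explicit.

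Two secondary points. First, to control $E\big[\int\Delta_-(Z^{L(p)})^2\,d\br{M}\big]$ via $\|\Delta^{\tau_m}\|_{\cR^2}$ you need the energy inequality for BMO martingales (uniform $L^2$-bound on $\int_0^{\tau_m}(Z^{L(p)})^2\,d\br{M}$), which you gesture at but should make precise. Second, the identification $Z^\infty=Z^{L^{\exp}}$, $N^\infty=N^{L^{\exp}}$ by uniqueness of the canonical decomposition requires that the limiting finite-variation process be predictable; the clean route is to note that each $A^{L(p)}$ is continuous increasing (submartingale property, Lemma~\ref{le:BoundsForL}(ii)) and the a.s.\ pointwise limit of increasing processes is increasing, hence predictable.
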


We have already identified the monotone limit $L^{\exp}_t=\lim L_t(p)$. Hence, by uniqueness of the KW decomposition, the above
lemma follows from the subsequent one, which we state separately to clarify the argument.
The most important input from the control problems is that by stopping, we can bound $L(p)$ away from zero simultaneously for all $p$ (cf.~Lemma~\ref{le:uniformStopping-infty}).

\begin{Lemma}\label{le:KobylanskiConv2}
  Let $S$ be continuous and assume that $L(p)$ is continuous for all $p<0$. Then
  $\big(L(p),Z^{L(p)},N(p)\big)$ converge to a solution $(\tL, \widetilde{Z},\tN)$
  of the BSDE~\eqref{eq:BellmanBsdeExp} as $p\to -\infty$: $\tL$ is continuous and $L_t(p)\to\tL_t$ uniformly in $t$, $P$-a.s.;
  while $Z^{L(p)}\to \widetilde{Z}$ in $L^2_{loc}(M)$ and $N(p)\to\tN$
  in $\cH^2_{loc}$.
\end{Lemma}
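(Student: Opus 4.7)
The strategy is a Kobylanski-type stability argument for the Bellman BSDE \eqref{eq:BellmanBSDEforL} in the no-consumption case, exploiting the already-established monotone convergence $L(p)\downarrow L^{\exp}$ (Theorem \ref{th:strategyConvExponential}(i)) to upgrade weak to strong convergence of the martingale parts.

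First I would localize so that all quantities are uniformly controlled. Set $\sigma_n:=\inf\{t: L^{\exp}_t\leq 1/n\}\wedge T$; since Lemma \ref{le:BoundsForLexp} gives $L^{\exp}, L^{\exp}_->0$, we have $\sigma_n\uparrow T$ stationarily. By Lemma \ref{le:LpGreaterLexp}, $L(p)\geq L^{\exp}$, so $L(p)_{\sigma_n-}\geq 1/n$ simultaneously for all $p<0$, while Lemma \ref{le:BoundsForL}(ii) gives the uniform upper bound $L(p)\leq k_2$. Lemma \ref{le:LisBMO}(i) (with $p_1=-\infty$ in the pure-investment case) then yields $\|M^{L(p)}\|_{BMO}\leq C$ uniformly in $p<0$, which translates into uniform $\cH^2$-bounds for the stopped processes $(Z^{L(p)}\sint M)^{\sigma_n}$ and $N(p)^{\sigma_n}$. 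Banach--Alaoglu produces a subsequence $p_k\to-\infty$ with $Z^{L(p_k)}\rightharpoonup\widetilde{Z}$ weakly in $L^2(M^{\sigma_n})$ and $N(p_k)\rightharpoonup\tN$ weakly in $\cH^2$.

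Next I would upgrade weak to strong convergence via an It\^o trick. Since $L(p)$ is continuous, $[L(p)]=\int (Z^{L(p)})^\top d\langle M\rangle\,Z^{L(p)}+\langle N(p)\rangle$, and applying It\^o's formula to $L(p)^2$ together with BSDE \eqref{eq:BellmanBSDEforL} gives, after taking expectations at $\sigma_n$,
\[
E\big[[L(p)]_{\sigma_n}\big] = E\big[L(p)_{\sigma_n}^2-L_0(p)^2\big]-2\,E\Big[\int_0^{\sigma_n} L(p)_-\,dA^{L(p)}\Big].
\]
By monotone convergence and the uniform two-sided bounds on $L(p)$, the right-hand side converges (along $p_k$) to the analogous expression for the limit $\tL$; combined with weak lower semicontinuity of the positive quadratic functional, this forces $Z^{L(p_k)}\to\widetilde{Z}$ strongly in $L^2(M^{\sigma_n})$ and $\langle N(p_k)\rangle_{\sigma_n}\to\langle\tN\rangle_{\sigma_n}$, whence $N(p_k)\to\tN$ strongly in $\cH^2$. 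Feeding strong $L^2$-convergence of the martingale parts back into the BSDE and applying BDG yields $L(p_k)\to\tL$ uniformly on $[0,\sigma_n]$, so $\tL$ is continuous as a uniform limit of continuous processes.

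Finally, strong convergence of $Z^{L(p_k)}$, together with $q\to 1$ and the uniform lower bound $L(p_k)_-\geq 1/n$, lets the quadratic driver of \eqref{eq:BellmanBSDEforL} pass in $L^1$ to the driver of \eqref{eq:BellmanBsdeExp}; the terminal condition $\tL_T=D_T=\exp(B)$ is inherited from $L(p)_T=D_T$. Uniqueness of the KW decomposition of $\tL$ forces the subsequential limits $\widetilde{Z}$ and $\tN$ to be uniquely determined, so the full sequence converges. Letting $n\to\infty$ produces the asserted convergences in $L^2_{loc}(M)$ and $\cH^2_{loc}$. The main obstacle is the quadratic growth of the driver in $Z^L$: weak convergence of $Z^{L(p_k)}$ does not a priori imply convergence of the quadratic term, and the It\^o identity above, combined with the monotone pointwise convergence of $L(p)$ and the uniform bounds from localization, is precisely the Kobylanski-style device that circumvents this difficulty.
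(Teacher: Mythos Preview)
Your localization is essentially the same as the paper's, and extracting a weakly convergent subsequence of the martingale parts is the correct first move. The gap is in the ``It\^o trick'': applying It\^o's formula to $L(p)^2$ does \emph{not} circumvent the quadratic growth of the driver. In the identity
\[
E\big[[L(p)]_{\sigma_n}\big] \;=\; E\big[L(p)_{\sigma_n}^2-L_0(p)^2\big] - 2\,E\Big[\int_0^{\sigma_n} L(p)_-\,dA^{L(p)}\Big],
\]
the term $\int L_-\,dA^L$ contains, via the Bellman BSDE~\eqref{eq:BellmanBSDEforL}, the quadratic expression $\frac{q}{2}\int |L_-\lambda + Z^L|^2\,d\br{M}$. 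Monotone convergence and uniform bounds on $L(p)$ say nothing about this term; under \emph{weak} convergence of $Z^{L(p_k)}$ the quadratic functional is only lower semicontinuous, so you cannot conclude that the right-hand side converges. Even if you rearrange to move the quadratic piece to the left (obtaining $(1+q)\|Z\|^2+\|N\|^2=\text{(linear terms)}$), identifying the limit of the right-hand side with $2\|\tZ\|^2+\|\tN\|^2$ would require already knowing that the weak limits $(\tZ,\tN)$ coincide with the KW components of a solution to~\eqref{eq:BellmanBsdeExp}. That identification is part of what must be proved, so the argument is circular as written.

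The paper's proof resolves this by a different choice of test function: it applies It\^o's formula to $\Phi(L^n-L^m)$ for $m\geq n$, where
\[
\Phi(x)=\tfrac{1}{8K^2}\big(e^{4Kx}-4Kx-1\big),\qquad K=6C_k,
\]
with $C_k$ a bound on the quadratic growth constant from~\eqref{eq:quadEstimate}. The crucial algebraic identity $\tfrac{1}{2}\Phi''-2K\Phi'\equiv 1$ is what absorbs the quadratic term: after bounding the driver difference by $\xi + K(|\delta Z|^2+|Z^n-\tZ|^2+|\tZ|^2)$, the coefficient in front of $|\delta Z|^2$ on the left becomes $\tfrac{1}{2}\Phi''-K\Phi'\geq 1$, yielding a coercive estimate of the form $\|Z^n-Z^m\|^2+\|N^n-N^m\|^2\leq(\text{terms converging to }0)$. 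Weak lower semicontinuity is then applied not to identify a limit of norms, but to pass from $m$ to $\infty$ inside the inequality. This exponential transformation is the essence of Kobylanski's method; the quadratic test function you propose lacks the property $\tfrac{1}{2}\Phi''-2K\Phi'\geq c>0$ and therefore cannot close the estimate.
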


\begin{proof}
  For notational simplicity, we write the proof for the one-dimensional case ($d=1$).
  We fix a sequence $p_n\downarrow -\infty$ and corresponding $q_n\uparrow 1$.
  As $p\mapsto L_t(p)$ is monotone and positive, the $P$-a.s.~limit $\tL_t:=\lim_n L_t(p_n)$ exists.

  The sequence $M^{L(p_n)}$ of martingales is bounded in
  the Hilbert space $\cH^2$ by Lemma~\ref{le:LisBMO}(i). Hence it has a subsequence, still denoted by $M^{L(p_n)}$, which converges
  to some $\tM\in \cH^2$ in the weak topology of $\cH^2$. If we denote the KW decomposition by
  $\tM=\tZ\sint M + \tN$, we have by orthogonality that $Z^{L(p_n)}\to\tZ$ weakly in $L^2(M)$ and $N^{L(p_n)}\to \tN$ weakly in $\cH^2$. We shall use the BSDE to deduce a strong convergence.

  The drivers in the BSDE~\eqref{eq:BellmanBSDEforL} corresponding to $p_n$ and in~\eqref{eq:BellmanBsdeExp} are
  \[
     f^n(t,l,z) := q_n\,f(t,l,z), \quad f(t,l,z):=\frac{1}{2}\,l\Big(\lambda_t+\frac{z}{l}\Big)^2
  \]
  for $(t,l,z)\in [0,T]\times (0,\infty)\times \R$.
  For fixed $t$ and any convergent sequence $(l_m,z_m)\to(l,z)\in (0,\infty)\times \R$, we have
  \[
    f^m(t,l_m,z_m)\to f(t,l,z)\;\; \mbox{$P$-a.s.}
  \]
  By Lemmata~\ref{le:LpGreaterLexp} and~\ref{le:BoundsForLexp} we can find a localizing sequence $(\tau_k)$ such that
  \[
    1/k< L(p)^{\tau_k}\leq k_2 \quad\mbox{for all } p<0,
  \]
  where the upper bound is from Lemma~\ref{le:BoundsForL}. For the processes from~\eqref{eq:StructureContForR} we may assume that $\lambda^{\tau_k}\in L^2(M)$ and $M^{\tau_k}\in \cH^2$ for each $k$.

  To relax the notation, let $L^n=L(p_n)$, $Z^n=Z^{L(p_n)}$, $N^n=N^{L(p_n)}$, and $M^n=M^{L(p_n)}=Z^n\sint M + N^n$.
  The purpose of the localization is that $(f^n)$ are
  uniformly quadratic in the relevant domain: As $(L^n,Z^n)^{\tau_k}$ takes values in $[1/k,k_2]\times \R$ and
  \[
    |f^n(t,l,z)|\leq \big|l\lambda_t^2 +\lambda_t z + z^2/l\big|\leq (1+l)\lambda_t^2 + (1+1/l)z^2,
  \]
  we have for all $m,n\in\N$ that
  \begin{align}\label{eq:quadEstimate}
    |f^m(t&,L_t^n,Z_t^n)|^{\tau_k} \leq \xi_t + C_k(Z_{t\wedge \tau_k}^n)^2, \quad\mbox{where}\\
    &\xi:=(1+k_2)\big(\lambda^{\tau_k}\big)^2 \in L^1_{\tau_k}(M),\quad C_k:=1+k.\nonumber
  \end{align}
  Here $L^r_{\tau}(M):=\{H\in L^2_{loc}(M):\, H1_{[0,\tau]}\in L^r(M)\}$
  for a stopping time $\tau$ and $r\geq1$. Similarly, we set $\cH^2_{\tau}=\{X\in\cS:\, X^\tau\in \cH^2\}$.
  Now the following can be shown using a technique of Kobylanski~\cite{Kobylanski.00}.

  \begin{Lemma}\label{le:KobylanskiConv3} For fixed $k$,
   \begin{enumerate}[topsep=3pt, partopsep=0pt, itemsep=1pt,parsep=2pt]
    \item $Z^n\to\tZ$ in $L_{\tau_k}^2(M)$ and $N^n\to \tN$ in $\cH^2_{\tau_k}$,
    \item $\sup_{t\leq T}|L^n_{t\wedge \tau_k}-\tL_{t\wedge \tau_k}|\to 0$ $P$-a.s.
   \end{enumerate}
  \end{Lemma}
  The proof is deferred to Appendix~\ref{se:KobylanskiArgument}.
  Since~(ii) holds for all $k$, it follows that $\tL$ is continuous. Now Dini's Lemma shows $\sup_{t\leq T} |L^n_t-\tL_t|\to 0$ $P$-a.s.\ as claimed. Lemma~\ref{le:KobylanskiConv3} also implies that the limit $(\tL,\tZ,\tN)$ satisfies the BSDE~\eqref{eq:BellmanBsdeExp} on $[0,\tau_k]$ for all $k$, hence on $[0,T]$. The terminal condition is satisfied as $L^n_T=D_T=\exp(B)$ for all $n$.

  To end the proof, note that the convergences hold for the original sequence $(p_n)$, rather than just for a subsequence, since $p\mapsto L(p)$ is monotone and since our choice of $(\tau_k)$ does not depend on the subsequence.
\end{proof}

We can now finish the proof of Theorem~\ref{th:strategyConvExponential} (and Theorem~\ref{th:strategyConvExponentialMainRes}).

\begin{proof}[Proof of Theorem~\ref{th:strategyConvExponential}.]
  Part~(i) was already proved. For~(ii), uniform convergence and continuity were shown in Lemma~\ref{le:KobylanskiConv}.
  In view of~\eqref{eq:optStrategy} and~\eqref{eq:ExpOptimalStrategy}, the claim for the strategies is that
  \[
    (1-p)\, \hpi(p)=\lambda+\frac{Z^{L(p)}}{L(p)}\to \lambda + \frac{Z^{L^{\exp}}}{L^{\exp}}=\hvartheta\quad \mbox{in }L^2_{loc}(M).
  \]
  By a localization as in the previous proof, we may assume that $L(p) + (L(p))^{-1}+ L^{\exp}+(L^{\exp})^{-1}$
  is bounded uniformly in $p$, and, by Lemma~\ref{le:KobylanskiConv}, that
  $Z^{L(p)}\sint M \to Z^{\exp}\sint M$ in $\cH^2$. We have
  \begin{align*}
   \Big\| \tfrac{Z^{L(p)}}{L(p)}  &\sint M - \tfrac{Z^{L^{\exp}}}{L^{\exp}} \sint M \Big\|_{\cH^2}  \\
      & \leq \Big\| \tfrac{1}{L(p)} \big(Z^{L(p)}-Z^{L^{\exp}}\big) \sint M \Big\|_{\cH^2}
      + \Big\| \Big(\tfrac{1}{L(p)}-\tfrac{1}{L^{\exp}}\Big) Z^{L^{\exp}} \sint M \Big\|_{\cH^2} .
  \end{align*}
  Clearly the first norm converges to zero. Noting that  $Z^{L^{\exp}} \sint M\in \cH^2$ (even $BMO$) due to Lemma~\ref{le:QuadVarLemma},
  the second norm tends to zero by dominated convergence for stochastic integrals.
\end{proof}

The last result of this section concerns the convergence of the (normalized) solution $\hY(p)$ of the dual problem~\eqref{eq:dualProblem}; see also the comment after Remark~\ref{rk:heuristicsExpConv}. We recall the assumption~\eqref{eq:finiteEntropy} and that there is no intermediate consumption. To state the result, let
$Q^E(B)\in\sM$ be the measure which minimizes the relative entropy $H(\,\cdot\,|P(B))$ over $\sM$, where $dP(B):=(e^B/E[e^B])\,dP$.
For $B=0$ this is simply the minimal entropy martingale measure, and the existence of $Q^E(B)$ follows from the latter by a change of measure.

\begin{Prop}\label{pr:dualConvExp}
  Let $S$ be continuous and assume that $L(p)$ is continuous for all $p<0$. Then
  $\hY(p)/\hY_0(p)$ converges in the semimartingale topology to the density process of $Q^E(B)$
  as $p\to-\infty$.
\end{Prop}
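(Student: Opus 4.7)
The approach is to give explicit stochastic-exponential representations for both the prelimit and the limit, and then show convergence of the exponents in a strong enough topology to pass through the exponential.

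The first step is to identify the limit. Using Lemma~\ref{le:KobylanskiConv}, $L^{\exp}$ is continuous and strictly positive (Lemma~\ref{le:BoundsForLexp}), and the representation $L^{\exp}_t = E\big[\exp(B - G_{t,T}(\hvartheta))\big|\cF_t\big]$ from the proof of Lemma~\ref{le:BoundsForLexp} combined with the standard exponential-utility duality yields that the $P$-density process of $Q^E(B)$ is
\[
  \xi_t = \frac{e^{-G_t(\hvartheta)} L^{\exp}_t}{L^{\exp}_0}.
\]
A direct It\^o calculation using the BSDE~\eqref{eq:BellmanBsdeExp}, the decomposition $R = M + \int d\br{M}\lambda$, and the identity $\hvartheta = \lambda + Z^{L^{\exp}}/L^{\exp}_-$ from~\eqref{eq:ExpOptimalStrategy} gives the clean formula
\[
  \xi \;=\; \cE\!\left(-\lambda\sint M + \frac{1}{L^{\exp}_-}\sint N^{L^{\exp}}\right),
\]
paralleling the formula~\eqref{eq:DualOptimizerFormula}
\[
  \hY(p)/\hY_0(p) \;=\; \cE\!\left(-\lambda\sint M + \frac{1}{L(p)_-}\sint N^{L(p)}\right).
\]

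The second step is a suitable localization. Since $L^{\exp}$ is continuous, strictly positive and bounded, the stopping times $\tau_k := \inf\{t:L^{\exp}_t \le 1/k\}\wedge T$ satisfy $\tau_k\to T$ stationarily; moreover $L(p)\ge L^{\exp}$ by Lemma~\ref{le:LpGreaterLexp}, so on $[0,\tau_k]$ both $L(p)_-$ and $L^{\exp}_-$ are uniformly bounded between $1/k$ and $k_2$ simultaneously in $p$. Combining this with Lemma~\ref{le:KobylanskiConv}, which gives $L(p)\to L^{\exp}$ uniformly in $t$ (in particular $1/L(p)_- \to 1/L^{\exp}_-$ uniformly and boundedly on $[0,\tau_k]$) and $N^{L(p)}\to N^{L^{\exp}}$ in $\cH^2_{loc}$, I would bound
\[
  \Big\|\tfrac{1}{L(p)_-}\sint N^{L(p)} - \tfrac{1}{L^{\exp}_-}\sint N^{L^{\exp}}\Big\|_{\cH^2(\tau_k)}
\]
by splitting the difference into
$\tfrac{1}{L(p)_-}\sint(N^{L(p)}-N^{L^{\exp}})$, which tends to $0$ in $\cH^2$ since the integrand is bounded by $k$, and $(\tfrac{1}{L(p)_-}-\tfrac{1}{L^{\exp}_-})\sint N^{L^{\exp}}$, which tends to $0$ by bounded/dominated convergence for stochastic integrals. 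Hence the exponents converge in $\cH^2_{loc}$, and a fortiori in the semimartingale topology by Proposition~\ref{pr:SMconvPrelocHp}.

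The third step is to pass from convergence of exponents to convergence of stochastic exponentials. This is exactly the stability theorem of M\'emin for the map $X\mapsto\cE(X)$ on $\cS$ equipped with the \'Emery topology, which gives the desired convergence $\hY(p)/\hY_0(p)\to\xi$ in the semimartingale topology. The main obstacle is really the careful localization that allows the bounds on $1/L(p)_-$ to be uniform in $p$; all other ingredients (identification of $\xi$, $\cH^2_{loc}$-estimate of integrands, M\'emin stability) are either routine It\^o calculus or already available in the paper.
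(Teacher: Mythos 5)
Your proof is correct and follows essentially the same route as the paper: the paper's proof is simply more terse, deducing $L^{-1}\sint N\to(L^{\exp})^{-1}\sint N^{\exp}$ in $\cH^2_{loc}$ ``as in the previous proof,'' then invoking its own Lemma~\ref{le:convStochExp}(ii) (rather than M\'emin's theorem) for the semimartingale stability of $\cE$, and citing Frei--Schweizer for the identification of $\cE(-\lambda\sint M + (L^{\exp})^{-1}\sint N^{\exp})$ as the density process of $Q^E(B)$. The steps you spell out (the It\^o computation for that identification, and the localization via $L(p)\ge L^{\exp}$ so that $1/L_-(p)$ is bounded uniformly in $p$) are exactly what those references package up.
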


\begin{proof}
  We deduce from Lemma~\ref{le:KobylanskiConv} that $L^{-1}\sint N\to (L^{\exp})^{-1}\sint N^{\exp}$ in $\cH^2_{loc}$, as in the previous proof.
  Since $\hY/\hY_0=\cE(-\lambda\sint M + L^{-1}\sint N)$ by~\eqref{eq:DualOptimizerFormula}, Lemma~\ref{le:convStochExp}(ii) shows that $\hY/\hY_0\to \cE\big(-\lambda\sint M + (L^{\exp})^{-1}\sint N^{\exp}\big)$
  in the semimartingale topology. The right hand side is the density process of $Q^E(B)$; this follows, e.g., from~\cite[Proposition~1]{FreiSchweizer.09}.
\end{proof}

\section{The Limit $p\to 0$}\label{se:pto0Proofs}

In this section we prove Theorem~\ref{th:Limit0Economics}, some refinements of that result, as well as the corresponding convergence for the opportunity processes and the dual problem. Due to substantial technical differences, we consider separately the limits $p\to0$ from below and from above.
Recall the semimartingale
$\eta_t=E\big[\int_t^T D_s\,\mu^\circ(ds)\big|\cF_t\big]$ with canonical decomposition
\begin{equation}\label{eq:etaDecomp}
  \eta_t=(\eta_0+M^\eta_t)+A^\eta_t=E\Big[\int_0^T D_s\,\mu^\circ(ds)\Big|\cF_t\Big] - \int_0^t D_s\,\mu(ds).
\end{equation}
Clearly $\eta$ is a supermartingale with continuous finite variation part, and a martingale in the case without intermediate consumption ($\mu=0$). From~\eqref{eq:BoundsR} we have the uniform bounds
\begin{equation}\label{eq:etaBounds}
    0< k_1\leq \eta \leq (1+T)k_2.
\end{equation}

\subsection{The Limit $p\to 0-$}
We start with the convergence of the opportunity processes.

\begin{Prop}\label{pr:DualLconv0-} As $p\to 0-$,
  \begin{enumerate}[topsep=3pt, partopsep=0pt, itemsep=1pt,parsep=2pt]
   \item  for each $t\in[0,T]$, $L^*_t(p)\to \eta_t$ $P$-a.s.\ and in $L^r(P)$ for $r\in [1,\infty)$, with a uniform bound.

   \item if $\F$ is continuous, then $L^*_t(p)\to \eta_t$ uniformly in $t$, $P$-a.s.; and in $\cR^r$ for $r\in[1,\infty)$.

   \item if $u_{p_0}(x_0)<\infty$ for some $p_0\in(0,1)$, then $L^*_t(p)\to \eta_t$ uniformly in $t$, $P$-a.s.; in $\cR^r$ for $r\in[1,\infty)$; and prelocally in $\cR^\infty$.

  \end{enumerate}
  The same assertions hold for $L^*$ replaced by $L$.
\end{Prop}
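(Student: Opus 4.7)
The strategy is the analogue of Proposition~\ref{pr:DualLconv-infty}: sandwich $L^*_t(p)$ between an upper and a lower bound that both converge to $\eta_t$ as $p\to 0-$ (equivalently $q=p/(p-1)\to 0+$ and $\beta=1/(1-p)\to 1$). For the upper bound, Jensen's inequality applied to the concave map $x\mapsto x^q$ combined with the supermartingale property of any $Y\in\sY$ gives $E[(Y_s/Y_t)^q|\cF_t]\leq 1$, so~\eqref{eq:DualOppProcAltDef} yields
\[
  L^*_t(p)\;\leq\; E\Big[\int_t^T D_s^\beta\,\mu^\circ(ds)\,\Big|\,\cF_t\Big]\;\leq\; (1\vee k_2)(1+T),
\]
and the middle term converges to $\eta_t$ because $D^\beta\to D$ with the uniform dominating bound provided by~\eqref{eq:BoundsR}. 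For the lower bound, take the $P$-density process $Y$ of some $Q\in\sM$ (which exists by~\eqref{eq:ELMMexists}); then $Y$ lies in $\sY$ up to scaling, and
\[
  L^*_t(p)\;\geq\; E\Big[\int_t^T D_s^\beta (Y_s/Y_t)^q\,\mu^\circ(ds)\,\Big|\,\cF_t\Big],
\]
where the integrand tends pointwise to $D_s$ and is dominated by $k_2^\beta(1+Y_s/Y_t)$, which is conditionally integrable since $E[Y_s/Y_t|\cF_t]=1$. Dominated convergence forces the right-hand side to $\eta_t$, proving (i) and the $L^r$ statement via the uniform bound.

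For (ii), continuity of $\F$ makes the density $Y$ continuous. Lemma~\ref{le:monotoneFunctionRevHolder} shows that $\phi(q):=E[(Y_s/Y_t)^q|\cF_t]^{1/(1-q)}$ is monotone in $q$ with $\phi(q)\to 1$ as $q\to 0+$, and is continuous in $t$ for continuous $Y$. Dini's lemma on the compact $[0,s]$ then gives uniform-in-$t$ convergence, followed by Egorov's theorem to control the integration in $s$, exactly as in the proof of Proposition~\ref{pr:DualLconv-infty}. Combined with the uniform convergence $D^\beta\to D$, this yields $\sup_{t\leq T}|L^*_t(p)-\eta_t|\to 0$ $P$-a.s.; $\cR^r$ convergence then follows by dominated convergence and the uniform bound.

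For (iii), the hypothesis $u_{p_0}(x_0)<\infty$ makes $L(p_0)$ available, and by Corollary~\ref{co:ComparisonAppl}(ii) it is locally bounded. The converse of Proposition~\ref{pr:ComparisonDualL} for $p<0<p_0$ gives
\[
  L_t(p)\;\geq\;\big(k_1\mu^\circ[t,T]\big)^{1-p/p_0}\,L_t(p_0)^{p/p_0},
\]
so that stopping at $\tau_n:=\inf\{t:L_t(p_0)\geq n\}\wedge T$ produces two-sided bounds on $L(p)$ and $L^*(p)$ that are uniform in $p$ on a left-neighbourhood of $0$ and on $[0,\tau_n)$. On these stopped intervals the sandwich estimate becomes uniform in $\omega$, upgrading the pointwise (and uniform-in-$t$) convergence to prelocal $\cR^\infty$ convergence. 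Finally, the assertions for $L$ follow from those for $L^*$ via $L=(L^*)^{1/\beta}$: with $1/\beta\to 1$ and both $L^*$ and $\eta$ bounded away from $0$ and from $\infty$ on the relevant region (unconditionally in (i)-(ii) by~\eqref{eq:etaBounds} and the sandwich, after the above localization in (iii)), the continuous map $(x,y)\mapsto x^y$ transfers each mode of convergence.

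The main obstacle is the prelocal $\cR^\infty$ part of (iii): it requires simultaneous uniform two-sided control of $L(p)$ across the family of exponents, which is exactly what the auxiliary comparison of Proposition~\ref{pr:ComparisonDualL} and the local boundedness of $L(p_0)$ are designed to supply. Everything else reduces to the Jensen-plus-dominated-convergence sandwich combined with the careful passage through the exponent $\beta\to 1$.
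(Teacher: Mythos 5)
Your parts (i) and (ii) are essentially correct and match the paper's scheme; the only difference is that for the upper bound you argue $E[(Y_s/Y_t)^q|\cF_t]\le 1$ directly from Jensen for the concave map $x\mapsto x^q$ and the supermartingale property of $Y\in\sY$, while the paper reads off the upper bound $\mu^\circ[t,T]^{-\beta p}\,E[\int_t^T D_s\,\mu^\circ(ds)|\cF_t]^\beta$ from Lemma~\ref{le:BoundsForL}. Both work and both tend to $\eta_t$ uniformly in $t$, so this is a harmless variation.

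Part (iii), however, has a genuine gap. You invoke the $L$-version of the comparison, the converse of~\eqref{eq:ComparisonL2} for $p<0<p_0$,
\[
  L_t(p)\;\geq\;\big(k_1\mu^\circ[t,T]\big)^{1-p/p_0}\,L_t(p_0)^{p/p_0},
\]
and then localize $L(p_0)$. But as $p\to 0-$ the exponent $p/p_0\to 0$, so $L_t(p_0)^{p/p_0}\to1$ and $(k_1\mu^\circ[t,T])^{1-p/p_0}\to k_1\mu^\circ[t,T]$. The right-hand side therefore converges to $k_1\mu^\circ[t,T]$, which is a strict lower bound of $\eta_t$ in general; the sandwich does not close, and no uniform-in-$t$ convergence can be extracted from it. Note also that you cannot borrow the uniform-in-$t$ conclusion from (ii), since (ii) requires $\F$ continuous and (iii) does not. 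The paper instead uses the $L^*$-version~\eqref{eq:ComparisonL1}, whose structure is essentially different: the lower bound is $E[\int_t^T D_s^\beta\,\mu^\circ(ds)|\cF_t]^{1-q/q_0}\,(k_1^{\beta-\beta_0}L^*_t(p_0))^{q/q_0}$ with $q/q_0\to0$, and here the ``constant'' factor $E[\int_t^T D_s^\beta\mu^\circ(ds)|\cF_t]^{1-q/q_0}$ converges uniformly in $t$ to $\eta_t$ while the $L^*(p_0)$-factor tends to $1$. That is exactly what lets the sandwich close uniformly in $t$ (path-by-path, using that paths of $L^*(p_0)$ are bounded and bounded away from zero), and yields the prelocal $\cR^\infty$ statement upon prelocalizing $L^*(p_0)$. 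Finally, a smaller point: Corollary~\ref{co:ComparisonAppl}(ii) gives that $L(p_0)$ is of class (D), not that it is locally bounded; indeed local boundedness of $L(p_0)$ for $p_0\in(0,1)$ is not automatic (the paper remarks on this after Proposition~\ref{pr:DualLconv0+}). Fortunately prelocal boundedness, which every c\`adl\`ag process has, is all that is needed.
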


\begin{proof}
  We note that $p\to 0-$ implies $q\to0+$ and $\beta\to 1-$. In view of $L=(L^*)^{1/\beta}$, it suffices to prove the claims for $L^*$.
  From Lemma~\ref{le:BoundsForL},
  \begin{equation}\label{eq:unifBoundProofDualLconv0-}
    0\leq L^*_t(p) \leq \mu^\circ[t,T]^{-\beta p}\,E\Big[\int_t^T D_s\, \mu^\circ(ds)\Big|\cF_t\Big]^\beta\to \eta_t\;\mbox{ in }\cR^\infty.
  \end{equation}
  To obtain a lower bound, we consider the density process $Y$ of some $Q\in\sM$.

  (i)~~Using \eqref{eq:DualOppProcAltDef} we obtain
  \begin{align*}
    L^*_t(p) %
     \geq  \int_t^T  E\big[D_s^\beta (Y_s/Y_t)^{q}\big|\cF_t\big]\,\mu^\circ(ds).
  \end{align*}
  Clearly $D_s^\beta\to D_s$ in $\cR^\infty$ and $(Y_s/Y_t)^{q}\to 1$ $P$-a.s.\ for $q\to0$. We can argue as
  in Proposition~\ref{pr:DualLconv-infty}: For $s\geq t$ fixed,
  $0\leq (Y_s/Y_t)^{q}\leq 1+Y_s/Y_t\in L^1(P)$ yields $E\big[D_s^\beta(Y_s/Y_t)^{q}\big|\cF_t\big]\to E[D_s|\cF_t]$ $P$-a.s.
  Since $Y^q$ is a supermartingale, $0\leq E\big[D_s^\beta (Y_s/Y_t)^{q}\big|\cF_t\big] \leq 1\vee k_2$, and we conclude for each $t$ that
  \[
    \int_t^T E\big[D_s^\beta (Y_s/Y_t)^{q}\big|\cF_t\big]\,\mu^\circ(ds)
    \to \int_t^T E\big[D_s \big|\cF_t\big]\,\mu^\circ(ds)= \eta_t\;\;P\mbox{-a.s.}
  \]
  Hence $L^*_t(p)\to \eta_t$ $P$-a.s.\ and the convergence in $L^r(P)$ follows by the bound~\eqref{eq:unifBoundProofDualLconv0-}.

  (ii)~~Assume that $\F$ is continuous. Our argument will be similar to Proposition~\ref{pr:DualLconv-infty}, but the source of monotonicity is different.
  Fix $(s,\omega)\in[0,T]\times\Omega$ and consider
  \[
    g_q(t):=E\big[(Y_s/Y_t)^{q}\big|\cF_t\big]^{\frac{1}{1-q}}(\omega), \quad t\in[0,s].
  \]
  Then $g_q(t)$ is continuous in $t$ and decreasing in $q$ by virtue of Lemma~\ref{le:monotoneFunctionRevHolder}.
  Dini's lemma yields $g_q\to 1$ uniformly on $[0,s]$, hence \mbox{$E\big[(Y_s/Y_t)^{q}\big|\cF_t\big]\to1$} uniformly in $t$.
  We deduce that $E\big[D_s^\beta(Y_s/Y_t)^{q}\big|\cF_t\big](\omega) \to E[D_s|\cF_t](\omega)$ uniformly in $t$ since
  \begin{align*}
    \Big|& E\big[D_s^\beta(Y_s/Y_t)^{q}\big|\cF_t\big] - E[D_s|\cF_t] \Big|\\
      &\leq E\big[|D_s^\beta-D_s| (Y_s/Y_t)^{q}\big|\cF_t\big] + \Big|E\big[D_s\{(Y_s/Y_t)^{q}-1\}\big|\cF_t\big]\Big|\\
      &\leq \|D_s^\beta-D_s\|_{L^\infty(P)} E\big[ (Y_s/Y_t)^{q}\big|\cF_t\big]
        + \|D_s\|_{L^\infty(P)} \Big|E\big[ (Y_s/Y_t)^{q}\big|\cF_t\big]-1\Big|\\
      &\leq \|D_s^\beta-D_s\|_{L^\infty(P)}
        + k_2 \Big|E\big[ (Y_s/Y_t)^{q}\big|\cF_t\big]-1\Big|.
  \end{align*}
  The rest of the argument is like the end of the proof of Proposition~\ref{pr:DualLconv-infty}.

  (iii)~~Let $u_{p_0}(x_0)<\infty$ for some $p_0\in(0,1)$. Then we can take
  a different approach via Proposition~\ref{pr:ComparisonDualL}, which shows that
  \[
   L_t^*(p)\;\geq\;E\Big[\int_t^T D_s^{\beta}\,\mu^\circ(ds)\Big|\cF_t\Big]^{1-q/q_0} \,\Big(k_1^{\beta-\beta_0} L^*_t(p_0)\Big)^{q/q_0}
  \]
  for all $p<0$, where we note that $q_0<0$. Using that almost every path of $L^*(p_0)$ is
  bounded and bounded away from zero (Lemma~\ref{le:BoundsForL}), the right hand side $P$-a.s.\ tends to
  $\eta_t=E[\int_t^T D_s\,\mu^\circ(ds)|\cF_t]$ uniformly in $t$ as $q\to0$.
  Since $L^*(p_0)$ is prelocally bounded, the prelocal convergence in
  $\cR^\infty$ follows in the same way.
\end{proof}

\begin{Remark}\label{rk:RinftyConvForL}
  One can ask when the convergence in Proposition~\ref{pr:DualLconv0-} holds even in $\cR^\infty$. The following statements remain valid if $L^*$ replaced by $L$.
  \begin{enumerate}[topsep=3pt, partopsep=0pt, itemsep=1pt,parsep=2pt]
   \item Assume again that $u_{p_0}(x_0)<\infty$ for some $p_0\in(0,1)$, and in addition that
      $L^*(p_0)$ is (locally) bounded. Then the argument for Proposition~\ref{pr:DualLconv0-}(iii) shows $L^*(p)\to \eta$ in $\cR^\infty$ ($\cR^\infty_{loc}$).
   \item Conversely, $L^*(p)\to \eta$ in $\cR^\infty$ ($\cR^\infty_{loc}$) implies that $L^*(p)$ is (locally) bounded away from zero for all $p<0$ close to zero, because $\eta\geq k_1>0$.
  \end{enumerate}
\end{Remark}

As we turn to the convergence of the martingale part $M^{L(p)}$, a suitable localization will again be crucial.

\begin{Lemma}\label{le:UniformStoppingLimit0-}
  Let $p_1<0$. There exists a localizing sequence $(\sigma_n)$ such that
  \[
    \big(L(p)\big)^{\sigma_n}_->1/n\;\; \mbox{simultaneously for all }p\in [p_1,0).
  \]
\end{Lemma}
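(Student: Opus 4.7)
The plan is to reduce the problem to finding a localizing sequence for the single process $L(p_1)$. The key ingredient is the comparison result Proposition~\ref{pr:ComparisonLNoCons}, which will yield a uniform lower bound of the form $L(p) \geq c\,L(p_1)$ for $p \in [p_1,0)$ with a constant $c = c(p_1) > 0$.

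First I would apply Proposition~\ref{pr:ComparisonLNoCons} with $p_1$ and any $p \in (p_1, 0)$ in the roles of its $p$ and $p_0$, so that the hypothesis $p_1 < p < 0$ is satisfied. This gives
\[
  L_t(p_1) \;\leq\; \frac{k_2}{k_1}\,(\mu^\circ[t,T])^{p-p_1}\,L_t(p),
\]
which rearranges to $L_t(p) \geq (k_1/k_2)\,(\mu^\circ[t,T])^{p_1-p}\,L_t(p_1)$. Since $\mu^\circ[t,T] \in [1,1+T]$ and the exponent $p_1 - p$ lies in $[p_1,0]$, the minimum of $b^\alpha$ over $(b,\alpha) \in [1,1+T]\times[p_1,0]$ is $(1+T)^{p_1}$, so
\[
  L_t(p) \;\geq\; c\,L_t(p_1), \qquad c := \tfrac{k_1}{k_2}(1+T)^{p_1} \in (0,1],
\]
uniformly in $t \in [0,T]$ and $p \in (p_1, 0)$; the bound extends trivially to $p = p_1$ because $c \leq 1$. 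Passing to left limits preserves the inequality, $L_{t-}(p) \geq c\,L_{t-}(p_1)$. (In the case without intermediate consumption, the sharper form of Proposition~\ref{pr:ComparisonLNoCons} lets one take $c = 1$.)

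With the uniformity in $p$ in hand, I would localize through $L(p_1)$ alone. By Lemma~\ref{le:BoundsForL}(ii) both $L(p_1) > 0$ and $L_-(p_1) > 0$, so each c\`adl\`ag path is bounded away from zero on the compact $[0,T]$ $P$-a.s. Set
\[
  \sigma_n \;:=\; \inf\{t \geq 0 :\, L_t(p_1) \leq (cn)^{-1}\} \wedge T;
\]
this is a stopping time and $\sigma_n \to T$ stationarily. For $t < \sigma_n$ one has $L_s(p_1) > (cn)^{-1}$ for all $s < t$, hence $L_{t-}(p_1) \geq (cn)^{-1}$; at $t = \sigma_n$ the same bound holds as a limit from the left. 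Therefore the stopped left-limit process satisfies $\big(L(p_1)\big)^{\sigma_n}_- \geq (cn)^{-1}$ identically, and combining with the first step yields $\big(L(p)\big)^{\sigma_n}_- \geq c \cdot (cn)^{-1} = 1/n$ simultaneously for all $p \in [p_1,0)$.

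The main obstacle, if any, is exactly the uniformity in $p$ across the interval $[p_1,0)$; the comparison inequality in the first step precisely handles this, and the rest is a routine localization. The restriction $p \geq p_1$ enters only through the requirement that the exponent $p_1 - p$ remain in the compact interval $[p_1,0]$, which prevents $(\mu^\circ[t,T])^{p_1-p}$ from degenerating and so keeps the constant $c$ positive.
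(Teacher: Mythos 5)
Your proof is correct and fills in exactly what the paper's one-line proof (``This follows from Proposition~\ref{pr:ComparisonLNoCons} and Lemma~\ref{le:BoundsForL}.'') intends: Proposition~\ref{pr:ComparisonLNoCons} gives the uniform-in-$p$ lower bound $L(p)\geq c\,L(p_1)$, and Lemma~\ref{le:BoundsForL} gives $L(p_1),L_-(p_1)>0$ so that the hitting times localize. The only cosmetic point is that your construction yields $\geq 1/n$ rather than the strict $>1/n$ in the statement (the paper's own proof of the analogous Lemma~\ref{le:uniformStopping-infty} has the same feature); replacing the threshold $(cn)^{-1}$ by, say, $2(cn)^{-1}$ in the definition of $\sigma_n$ removes this.
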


\begin{proof}
  This follows from Proposition~\ref{pr:ComparisonLNoCons} and Lemma~\ref{le:BoundsForL}.
\end{proof}

Next, we state a basic result (i) for the convergence of $M^{L(p)}$ in $\cH^2_{loc}$
and stronger convergences under additional assumptions (ii) and (iii), for which Remark~\ref{rk:RinftyConvForL}(i) gives sufficient conditions.

\newpage

\begin{Prop}\label{pr:Strategy0-}
  Assume that $S$ is continuous. As $p\to 0-$,
  \begin{enumerate}[topsep=3pt, partopsep=0pt, itemsep=1pt,parsep=2pt]
    \item $M^{L(p)}\to M^{\eta}$ in $\cH^2_{loc}$.
    \item if $L(p)\to \eta$ in $\cR^\infty_{loc}$, then $M^{L(p)}\to M^{\eta}$ in $BMO_{loc}$.
    \item if $L(p)\to \eta$ in $\cR^\infty$, then $M^{L(p)}\to M^{\eta}$ in $BMO$.
  \end{enumerate}
\end{Prop}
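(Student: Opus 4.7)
The approach follows Proposition~\ref{pr:StrategyConv}: apply It\^o's formula to a suitable function of $X := L(p)-\eta$, use the Bellman BSDE to expand the finite-variation terms, and exploit the terminal identity $X_T = L(p)_T - \eta_T = D_T - D_T = 0$ to extract convergence of the martingale part. Since $X$ is signed (unlike in the $p\to-\infty$ proof where $X\geq0$), the natural choice of test function is $\Phi(x)=x^2$, whose derivative vanishes at $0$.

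For~(i), fix $p_1<0$ and let $p\in(p_1,0)$. Lemma~\ref{le:UniformStoppingLimit0-} supplies a localizing sequence $(\sigma_n)$ with $L(p)_-\geq 1/n$ on $[0,\sigma_n]$ simultaneously for all such $p$; Corollary~\ref{co:lambdaBMO}, Lemma~\ref{le:LisBMO}(i), and Lemma~\ref{le:QuadVarLemma} applied to the bounded nonnegative submartingale $(1+T)k_2-\eta$ then yield uniform $BMO$ bounds for $(\lambda\sint M)^{\sigma_n}$, $(M^{L(p)})^{\sigma_n}$, and $M^\eta$. Since $A^{L(p)}$ (from~\eqref{eq:FVpartofL}) and $A^\eta$ (from~\eqref{eq:etaDecomp}) are both continuous, $[X]=[M^X]$; It\^o's formula applied to $X^2$ on $[0,\sigma_n]$, together with the uniform $\cH^2$-bound on $(M^X)^{\sigma_n}$, gives
\[
  E\big[[M^X]_{\sigma_n}\big] = E\big[X_{\sigma_n}^2\big] - X_0^2 - 2\,E\Big[\int_0^{\sigma_n} X_{s-}\,dA^X_s\Big].
\]
Here $X_0^2\to0$ by Proposition~\ref{pr:DualLconv0-}(i) at $t=0$, and $E[X_{\sigma_n}^2]\to0$ by rerunning the dominated-convergence argument of that proposition at the stopping time $\sigma_n$ (in the spirit of Remark~\ref{rk:convAtStoppingtime}; the supermartingale bound $E[(Y_s/Y_{\sigma_n})^q|\cF_{\sigma_n}]\leq 1$ remains valid). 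For the finite-variation integral, expand
\[
  dA^X \,=\, \tfrac{q}{2}\,L_-\big(\lambda+Z^{L(p)}/L_-\big)^2\,d\br{M} + \big[(p-1)L_-^q D^\beta + D\big]\,d\mu.
\]
The $d\mu$-piece is controlled by dominated convergence since $(p-1)L_-^q D^\beta+D\to 0$ uniformly, while expanding the quadratic form into $qL_-\lambda^2 + 2q\lambda Z^{L(p)} + q(Z^{L(p)})^2/L_-$ bounds each summand by $|q|$ times a uniform constant, obtained from Cauchy--Schwarz together with the $BMO$ bounds and $1/n\leq L_-\leq k_2$. Hence $E[[M^X]_{\sigma_n}]\to 0$ for every $n$, which is the claimed $\cH^2_{loc}$-convergence.

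For~(ii) and~(iii) the same It\^o identity is applied on $[\sigma,T\wedge\tau_m]$ with $\tau_m$ a localizing sequence for the assumed $\cR^\infty_{loc}$-convergence (or $\tau_m=T$ under~(iii)); taking conditional expectations given $\cF_\sigma$ produces
\[
  E\big[[M^X]_{T\wedge\tau_m} - [M^X]_\sigma\,\big|\,\cF_\sigma\big]
  = E\big[X_{T\wedge\tau_m}^2 - X_\sigma^2\,\big|\,\cF_\sigma\big]
    - 2\,E\Big[\int_\sigma^{T\wedge\tau_m} X_{s-}\,dA^X_s\,\Big|\,\cF_\sigma\Big].
\]
Under the hypothesis one has $\|X\|_{\cR^\infty}\to 0$ on $[0,\tau_m]$, so the boundary terms are bounded in $L^\infty$ by $\|X\|_{\cR^\infty}^2\to 0$; moreover, since $\eta\geq k_1$, the convergence $L(p)\to\eta$ gives $L(p)_-\geq k_1/2$ on $[0,\tau_m]$ for $p$ close to $0$, so the inverse $L_-^{-1}$ is automatically under uniform control. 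The conditional expectation of the finite-variation integral is then estimated term by term as in~(i), but in $L^\infty$ using the conditional $BMO$ estimates $E[\int_\sigma^{T\wedge\tau_m}(Z^{L(p)})^2\,d\br{M}|\cF_\sigma]\leq\|Z^{L(p)}\sint M\|_{BMO}^2$ and its analogue for $\lambda\sint M$; the overall factor $|q|$ and the uniform decay of $(p-1)L_-^q D^\beta+D$ then force the integral to $0$ in $L^\infty$. Taking $\sup_\sigma$ yields the $BMO$ convergence, which is local under~(ii) and global under~(iii). The main obstacle throughout is the control of the quadratic term $q\int X_-(Z^{L(p)})^2/L_-\,d\br{M}$, which would blow up without a lower bound on $L_-$; in~(i) this is exactly why the localization of Lemma~\ref{le:UniformStoppingLimit0-} is indispensable, and the secondary subtlety is the passage of Proposition~\ref{pr:DualLconv0-}(i) to the random time $\sigma_n$.
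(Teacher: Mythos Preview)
Your approach is essentially identical to the paper's: set $X=\eta-L(p)$ (the sign is immaterial), use It\^o on $X^2$ so that $[M^X]=X^2-X_0^2-2\int X_-\,dX$, expand $dA^X$ via~\eqref{eq:FVpartofL} and~\eqref{eq:etaDecomp}, and bound the $d\br{M}$-terms by $|q|$ times uniform $BMO$ constants coming from Lemma~\ref{le:LisBMO}(i) and the localization of Lemma~\ref{le:UniformStoppingLimit0-}. One small point the paper makes explicit and you gloss over: for~(ii) and~(iii) the $BMO$ norm involves $[M^X]_{\sigma-}$, not $[M^X]_\sigma$, and $M^X$ may jump (even though $S$ and $A^X$ are continuous, $N^{L(p)}$ and $M^\eta$ need not be); the paper closes this by $|\Delta M^X|=|\Delta X|\leq 2\|X\|_{\cR^\infty}\to 0$, which you should add.
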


\begin{proof}
  Set $X=X(p)=\eta-L(p)$. Then $X$ is bounded uniformly in $p$ by Lemma~\ref{le:BoundsForL} and our aim is
  to prove $M^{X(p)}\to 0$.
  Lemma~\ref{le:QuadVarLemma} applied to $\|\eta\|_{\infty} -\eta$ shows that $M^\eta\in BMO$.
  We may restrict our attention to $p$ in some interval $[p_1,0)$ and Lemma~\ref{le:LisBMO} shows that
  $\sup_{p\in [p_1,0)}  \|M^{L(p)}\|_{BMO}<\infty$. Due to the orthogonality of the sum $M^L=Z^L\sint M +N^L$, we have in particular that
  \begin{equation}\label{eq:proofOfStrategy0-}
    \sup_{p\in [p_1,0)}  \|Z^L(p)\sint M\|_{BMO}<\infty.
  \end{equation}
  Under the condition of (iii), $L(p)$ is bounded away from zero for all $p$ close to zero since $\eta\geq k_1>0$; moreover,
  $\lambda\sint M \in BMO$ by Corollary~\ref{co:lambdaBMO}. For (i) and (ii) we may assume by a localization as in Lemma~\ref{le:UniformStoppingLimit0-} that $L_-(p)$ is bounded away from zero uniformly in $p$. Since $M$ is continuous, we may also assume that $\lambda\sint M \in BMO$, by another localization.

  Using the formula~\eqref{eq:FVpartofL} for $A^L$ and the decomposition~\eqref{eq:etaDecomp} of $\eta$, the finite variation part $A^X$ is continuous and
  \begin{align}\label{eq:formulaAXproof}
    2\,dA^X &
    = 2\Big\{(1-p)D^\beta L_-^q - D\Big\}\,d\mu \\
    &\phantom{=} - q\Big\{L_- \lambda^\top\,d\br{M}\,\lambda  + 2\lambda^\top\,d\br{M}\, Z^L + L_-^{-1} \big(Z^L\big)^\top \,d\br{M}\, Z^L\Big\}.\nonumber
  \end{align}
  In particular, we note that
  \begin{equation}\label{eq:MXbracket}
    [M^X]=[X]-X_0^2 = X^2-X_0^2 - 2\int X_-\,dX.
  \end{equation}
  For case (i) we have $X_0^2\to 0$ and $E[X_T^2]\to 0$ by Proposition~\ref{pr:DualLconv0-} (Remark~\ref{rk:convAtStoppingtime} applies). In case (iii) we have $X\to0$ in $\cR^\infty$ by assumption and under (ii) the same holds after a localization.
  If we denote $o^1_t:=E\big[X^2_T - X^2_t \big|\cF_t\big]$, we therefore have that $o^1_0\to0$ in case (i) and
  $o^1\to 0$ in $\cR^\infty$ in cases (ii) and (iii). Denote also
  $o^2_t:=2E\big[\int_t^T X_-\{(1-p)D^\beta L_-^q - D\}\,d\mu\big|\cF_t\big]$.
  Recalling that $p\to 0-$ implies $q\to0+$ and $\beta\to 1-$, we have $(1-p)D^\beta L_-^q - D\to 0$ in $\cR^\infty$
  and since $X_-$ is bounded uniformly in $p$, it follows that $o^2\to 0$ in $\cR^\infty$.
  As $M^X\in BMO$ and $X_-$ is bounded, $\int X_-\,dM^X$ is a martingale and~\eqref{eq:MXbracket} yields
  \[
    E\big[[M^X]_T - [M^X]_t \big|\cF_t\big] = E\big[X^2_T - X^2_t \big|\cF_t\big] -2 E\Big[\int_t^T X_-\,dA^X \Big|\cF_t\Big].
  \]
  Using~\eqref{eq:formulaAXproof} and the definitions of $o^1$ and $o^2$, we can rewrite this as
  \begin{align*}
    &E\big[[M^X]_T - [M^X]_t \big|\cF_t\big] - o^1_t+o^2_t \\
    &= q E\Big[\int_t^T X_-\big\{L_- \lambda^\top\,d\br{M}\,\lambda  + 2\lambda^\top\,d\br{M}\, Z^L + L_-^{-1} \big(Z^L\big)^\top \,d\br{M}\, Z^L\big\} \Big|\cF_t \Big].
  \end{align*}
  Applying the Cauchy-Schwarz inequality and using that $X_-,L_-, L_-^{-1}$ are bounded uniformly in $p$, it follows that
  \begin{align*}
    E\big[[M^X]_T - [M^X]_t &\big|\cF_t\big] - o^1_t+o^2_t \\
    &\leq q E\Big[\int_t^T X_-(1+L_-)\lambda^\top\,d\br{M}\,\lambda \Big|\cF_t \Big]\\
    &\phantom{=} +q E\Big[\int_t^T X_-(1+L_-^{-1})\big(Z^L\big)^\top \,d\br{M}\, Z^L \Big|\cF_t \Big]\\
    & \leq q C \big(\|\lambda \sint M\|_{BMO}+\|Z^{L(p)} \sint M\|_{BMO}\big),
  \end{align*}
  where $C>0$ is a constant independent of $p$ and $t$. In view of~\eqref{eq:proofOfStrategy0-}, the right hand side is bounded
  by $qC'$ with a constant $C'>0$ and we have
  \[
    E\big[[M^X]_T - [M^X]_{t} \big|\cF_t\big] \leq qC' + o^1_t-o^2_t.
  \]
  For (i) we only have to prove the convergence to zero of the left hand side for $t=0$ and so this ends the proof. For (ii) and (iii)
  we use $[M^X]_{t}=[M^X]_{t-}+(\Delta M^X_t)^2$ and $|\Delta M^X|=|\Delta X|\leq 2\|X\|_{\cR^\infty}$ to obtain
  \[
    \sup_{t\leq T} E\big[[M^X]_T - [M^X]_{t-} \big|\cF_t\big]  \;\leq\;  qC' + \|o^1\|_{\cR^\infty}+\|o^2\|_{\cR^\infty} + 4\|X\|^2_{\cR^\infty}
  \]
  and we have seen that the right hand side tends to $0$ as $p\to 0-$.
\end{proof}

\subsection{The Limit $p\to 0+$}\label{subsec:limit0+}

We notice that the limit of $L(p)$ for $p\to 0+$ is meaningless without supposing that $u_{p_0}(x_0)<\infty$ for some $p_0\in(0,1)$,
so we make this a \emph{standing assumption} for the entire Section~\ref{subsec:limit0+}.
We begin with a result on the integrability of the tail of the sequence.

\begin{Lemma}\label{le:integrablility0+}
 Let $1\leq r<\infty$. There exists a localizing sequence $(\sigma_n)$ such that
  \[
    \mathop{\esssup}_{t\in [0,T],\; p\in (0,p_0/r]} L_{t\wedge\sigma_n}(p) \q\mbox{is in }L^r(P)\mbox{ for all }n.
  \]
\end{Lemma}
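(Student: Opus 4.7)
The plan is to reduce the double supremum to a single process by the comparison result, and then control the pathwise supremum in $t$ via a hitting-time localization combined with the uniform stopping-time moment bound.

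First, I set $\bar p := p_0/r \in (0,p_0)$. Corollary~\ref{co:ComparisonAppl}(i), together with the trivial case $p=\bar p$, produces a constant $C$ (independent of $p$) such that $L_t(p) \leq C\, L_t(\bar p)$ for all $t\in[0,T]$ and all $p\in(0,\bar p]$. Consequently, it suffices to exhibit a localizing sequence $(\sigma_n)$ for which $\sup_{t\leq T} L_{t\wedge \sigma_n}(\bar p) \in L^r(P)$ for each $n$, since then
\[
  \esssup_{t\in [0,T],\, p\in (0,p_0/r]} L_{t\wedge \sigma_n}(p) \;\leq\; C\,\sup_{t\leq T} L_{t\wedge\sigma_n}(\bar p).
\]

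The natural candidate is the hitting time
\[
  \sigma_n := \inf\{t\geq 0:\, L_t(\bar p)\geq n\}\wedge T,
\]
which is a localizing sequence because $L(\bar p)$ is c\`adl\`ag with $P$-a.s.\ finite paths. By the definition of $\sigma_n$ one has $L_t(\bar p)<n$ for every $t<\sigma_n$, and therefore
\[
  \sup_{t\leq T} L_{t\wedge \sigma_n}(\bar p) \;\leq\; n \vee L_{\sigma_n}(\bar p) \;\leq\; n + L_{\sigma_n}(\bar p).
\]
Corollary~\ref{co:ComparisonAppl}(ii), applied at the stopping time $\sigma_n$ with the exponent $\bar p = p_0/r$, gives $E[L_{\sigma_n}(\bar p)^r]\leq C_r$, and Minkowski's inequality then yields $\|\sup_{t\leq T} L_{t\wedge \sigma_n}(\bar p)\|_{L^r} \leq n + C_r^{1/r}<\infty$, which is the required statement.

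The only subtle point is that $L(\bar p)$ may jump over the level $n$ precisely at the hitting time $\sigma_n$, so the pathwise supremum on $[0,\sigma_n]$ is not bounded by $n$ alone; this is exactly the obstacle that the stopping-time $L^r$-estimate of Corollary~\ref{co:ComparisonAppl}(ii) absorbs, and is in my view the only non-cosmetic step in the argument.
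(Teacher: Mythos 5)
Your argument is correct and matches the paper's own proof almost line for line: both set $\bar p = p_0/r$, use Corollary~\ref{co:ComparisonAppl}(i) to reduce the uniform supremum over $p$ to the single process $L(\bar p)$, define $\sigma_n$ as a hitting time of $L(\bar p)$, and invoke the uniform stopping-time moment bound of Corollary~\ref{co:ComparisonAppl}(ii) to handle the possible overshoot at $\sigma_n$. The only cosmetic difference is that you bound the overshoot by $L_{\sigma_n}(\bar p)$ directly, whereas the paper bounds it via the jump $\Delta L_{\sigma_n}(\bar p)$; your version is a touch cleaner and the estimates are identical in substance.
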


\begin{proof}
  Let $p_1=p_0/r$ and $\sigma_n=\inf \{t>0: L_t(p_1)>n\}\wedge T$, then by Corollary~\ref{co:ComparisonAppl}(ii),
  $\sup_t L_{t\wedge \sigma_n} (p_1)\leq n + \Delta L_{\sigma_n}(p_1) \in L^r(P)$. But
  $L(p)\leq C L(p_1)$ by Corollary~\ref{co:ComparisonAppl}(i), so $(\sigma_n)$ already satisfies the
  requirement.
\end{proof}

\begin{Prop}\label{pr:DualLconv0+}
  As $p\to 0+$,
  \[
    L^*(p)\to \eta,
  \]
  uniformly in $t$, $P$-a.s.; in $\cR^r_{loc}$ for $r\in[1,\infty)$; and prelocally in $\cR^\infty$.
  Moreover, the convergence takes place in $\cR^\infty$ (in $\cR^\infty_{loc}$) if and only if $L(p_1)$ is (locally) bounded for some $p_1\in(0,p_0)$.
  The same assertions hold for $L^*$ replaced by $L$.
\end{Prop}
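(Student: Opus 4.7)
The plan is to sandwich $L^*(p)$ between upper and lower bounds that both converge to $\eta$ as $p\to 0+$, paralleling the proof of Proposition~\ref{pr:DualLconv0-}(iii). As $p\to 0+$, $\beta=1/(1-p)\to 1+$ and $q=p/(p-1)\to 0-$; for any fixed $p_0\in(0,1)$ with $u_{p_0}(x_0)<\infty$ we have $q/q_0\to 0+$. For the upper bound I use Proposition~\ref{pr:ComparisonDualL}:
\[
L_t^*(p)\le E\Big[\int_t^T D_s^\beta\,\mu^\circ(ds)\,\Big|\,\cF_t\Big]^{1-q/q_0}\big(k_1^{\beta-\beta_0}L_t^*(p_0)\big)^{q/q_0}.
\]
By \eqref{eq:BoundsR} the inner conditional expectation lives in a compact subinterval of $(0,\infty)$ and converges to $\eta_t$ uniformly in $(t,\omega)$; combined with $1-q/q_0\to 1$, the first factor tends to $\eta_t$ uniformly in $(t,\omega)$. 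The second factor tends to $1$ uniformly in $t$ along each $\omega$-path since $L^*(p_0)$ is c\`adl\`ag and bounded below by $k_1^{\beta_0}>0$ (Lemma~\ref{le:BoundsForL}(i)), so each path is bounded on $[0,T]$, while $q/q_0\to 0$. For the lower bound, Lemma~\ref{le:BoundsForL}(i) yields $L_t^*(p)=L_t(p)^\beta\ge(\mu^\circ[t,T])^{-p\beta}\eta_t^\beta$, which tends to $\eta_t$ uniformly in $(t,\omega)$ by \eqref{eq:etaBounds}. This proves the uniform-in-$t$, $P$-a.s.\ convergence.

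For $\cR^r_{\mathrm{loc}}$ I apply Lemma~\ref{le:integrablility0+} with $r\beta_0$ in place of $r$ to obtain a localizing sequence $(\sigma_n)$ such that $\sup_{t,\,p\le p_0/(r\beta_0)}L_{t\wedge\sigma_n}(p)\in L^{r\beta_0}(P)$; then $\sup_{t,p}L_{t\wedge\sigma_n}^*(p)\le 1+(\sup_{t,p}L_{t\wedge\sigma_n}(p))^{\beta_0}\in L^r(P)$, and dominated convergence combined with the uniform-in-$t$ convergence above yields $\cR^r_{\mathrm{loc}}$ convergence. For prelocal $\cR^\infty$ I instead take $\sigma_n=\inf\{t:L_t(p_0)>n\}\wedge T$, so that $L(p_0)^{\sigma_n-}\le n$ pointwise; Corollary~\ref{co:ComparisonAppl}(i) then gives $L(p)^{\sigma_n-}\le Cn$ uniformly in $p\in(0,p_0]$, so $L^*(p)^{\sigma_n-}$ is bounded by a constant independent of $(p,\omega,t)$. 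Inserting this uniform bound into the sandwich makes the second factor of the upper bound tend to $1$ uniformly in $\omega$ as well, which delivers prelocal $\cR^\infty$ convergence.

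For the ``if'' direction of the equivalence, if $L(p_1)$ is bounded for some $p_1\in(0,p_0)$ then Corollary~\ref{co:ComparisonAppl}(i) applied at $p_1$ bounds $L(p)$ uniformly in $p\in(0,p_1]$; running the sandwich with $p_1$ in place of $p_0$ now has every ingredient uniformly bounded in $\omega$, so the convergence is in $\cR^\infty$. Conversely, $\cR^\infty$ convergence forces $L^*(p)$ to be bounded for $p$ small, and then $L(p_1)=L^*(p_1)^{1/\beta_1}$ is bounded for such $p_1$. The $\cR^\infty_{\mathrm{loc}}$ version is analogous with stopping. Finally, the transfer of all assertions from $L^*$ to $L$ uses $L=(L^*)^{1/\beta}$ with $\beta\in(1,\beta_0]$ bounded and $L\ge k_1>0$: the family $x\mapsto x^{1/\beta}$ is equi-Lipschitz on any compact subinterval of $(0,\infty)$ as $\beta$ varies in a bounded range, so every mode of convergence transfers.

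The main obstacle is securing the uniformity in $\omega$ required for the $\cR^\infty$ and prelocal $\cR^\infty$ statements. The factor $(L^*(p_0))^{q/q_0}$ in the Proposition~\ref{pr:ComparisonDualL} bound converges to $1$ uniformly in $\omega$ only when its base is itself uniformly bounded in $\omega$, and this is exactly what (pre)stopping before the exit time of $L(p_0)$, respectively the hypothesis of (local) boundedness of $L(p_1)$, supplies. The remaining steps are a careful but routine adaptation of the $p\to 0-$ argument.
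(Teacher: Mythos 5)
Your proof is correct and follows essentially the same route as the paper: the same sandwich (lower bound from Lemma~\ref{le:BoundsForL} via $L^*=L^\beta$, upper bound from Proposition~\ref{pr:ComparisonDualL}), the same localization via Lemma~\ref{le:integrablility0+} for $\cR^r_{\mathrm{loc}}$, the same prelocalization of $L^*(p_0)$ for prelocal $\cR^\infty$, and the same treatment of the equivalence. Your write-up simply fills in a few steps that the paper leaves terse (the $L^{r\beta_0}$ device and the equi-Lipschitz transfer from $L^*$ to $L$), and you correctly note $\beta\to 1+$, where the paper's text has a harmless sign typo.
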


\begin{proof}
  We consider only  $p\in(0,p_0)$ in this proof
  and recall that $p\to 0+$ implies $q\to0-$ and $\beta\to 1-$. Since $L=(L^*)^{1/\beta}$, it suffices to prove the claims for $L^*$.
  Using Lemma~\ref{le:BoundsForL},
  \begin{equation}\label{eq:ProofDualLconv0+Bound1}
     L^*_t(p)\geq \mu^\circ[t,T]^{-\beta p}\,E\Big[\int_t^T D_s\, \mu^\circ(ds)\Big|\cF_t\Big]^\beta\to \eta_t\;\mbox{ in }\cR^\infty.
  \end{equation}
  Conversely, by Proposition~\ref{pr:ComparisonDualL},
  \begin{equation}\label{eq:ProofDualLconv0+Bound2}
   L_t^*(p)\;\leq\;E\Big[\int_t^T D_s^{\beta}\,\mu^\circ(ds)\Big|\cF_t\Big]^{1-q/q_0} \,\Big(k_1^{\beta-\beta_0} L^*_t(p_0)\Big)^{q/q_0}.
  \end{equation}
  Since almost every path of $L^*(p_0)$ is bounded, the right hand side $P$-a.s.\ tends to
  $\eta_t$ uniformly in $t$ as $q\to0-$.
  By localizing $L^*(p_0)$ to be prelocally bounded, the same argument shows the prelocal convergence in
  $\cR^\infty$.

  We have proved that $L^*(p)\to \eta$ uniformly in $t$, $P$-a.s. In view of Lemma~\ref{le:integrablility0+}, the convergence in $\cR^r_{loc}$ follows by dominated convergence.

  For the second claim, note that the ``if'' statement is shown exactly like the prelocal $\cR^\infty$ convergence and the converse holds by boundedness of $\eta$. Of course, if $L(p_1)$ is (locally) bounded for some $p_1\in(0,p_0)$, then in fact $L(p)$ has this property for all $p\in (0,p_1]$, by Corollary~\ref{co:ComparisonAppl}(i).
\end{proof}

We turn to the convergence of the martingale part. The major difficulty will be that $L(p)$ may have unbounded jumps; i.e., we have to prove the convergence of quadratic BSDEs whose solutions are not locally bounded.

\begin{Prop}\label{pr:Strategy0+}
    Assume that $S$ is continuous. As $p\to 0+$,
  \begin{enumerate}[topsep=3pt, partopsep=0pt, itemsep=1pt,parsep=2pt]
    \item $M^{L(p)}\to M^{\eta}$ in $\cH^2_{loc}$.
    \item if there exists $p_1\in(0,p_0]$ such that $L(p_1)$ is locally bounded, then $M^{L(p)}\to M^{\eta}$ in $BMO_{loc}$.
    \item if there exists $p_1\in(0,p_0]$ such that $L(p_1)$ is bounded, then $M^{L(p)}\to M^{\eta}$ in $BMO$.
  \end{enumerate}
\end{Prop}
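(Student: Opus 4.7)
The approach will mirror that of Proposition~\ref{pr:Strategy0-}, with extra care needed in case~(i) because $L(p)$ need not be locally bounded when $p\to 0+$. I will set $X(p):=\eta-L(p)$ and aim to show $M^{X(p)}\to 0$ in the stated topology. As a preliminary observation, $M^\eta\in BMO$ follows by applying Lemma~\ref{le:QuadVarLemma} to the bounded submartingale $\|\eta\|_{\cR^\infty}-\eta$.

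The first step is to secure uniform $BMO$ bounds on $M^{L(p)}$, hence on $Z^{L(p)}\sint M$ (by KW-orthogonality), and on $\lambda\sint M$, all uniformly in $p$ close to $0$. Under (iii), Corollary~\ref{co:ComparisonAppl}(i) turns the bound on $L(p_1)$ into a uniform bound on $L(p)$ for $p\in(0,p_1]$; Lemma~\ref{le:LisBMO}(ii) then yields the $BMO$ estimate, while Corollary~\ref{co:lambdaBMO} gives $\lambda\sint M\in BMO$. Case (ii) reduces to (iii) by localizing so that $L(p_1)$ becomes bounded. Case~(i) will be handled by prelocalization along $\sigma_n:=\inf\{t:L_t(p_0)>n\}\wedge T$, giving $L^{\sigma_n-}(p_0)\leq n$ and, via Corollary~\ref{co:ComparisonAppl}(i), a uniform bound for $L^{\sigma_n-}(p)$ on $p\in(0,p_0]$.

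The second step combines~\eqref{eq:FVpartofL} for $A^L$ with the decomposition~\eqref{eq:etaDecomp} of $\eta$: since $S$ is continuous, $A^X$ is continuous and
\begin{align*}
 2\,dA^X &= 2\bigl\{(1-p)D^\beta L_-^q-D\bigr\}\,d\mu\\
 &\phantom{=}\;-q\bigl\{L_-\lambda^\top d\br{M}\lambda+2\lambda^\top d\br{M}Z^L+L_-^{-1}(Z^L)^\top d\br{M}Z^L\bigr\}.
\end{align*}
Starting from $[M^X]=X^2-X_0^2-2\int X_-\,dX$ (with $\int X_-\,dM^X$ a true martingale thanks to the uniform bounds on $X$), I will take conditional expectations and apply Cauchy-Schwarz to the three quadratic-variation terms together with the $BMO$ bounds above to reach
\[
 E\bigl[[M^X]_T-[M^X]_t\,\big|\,\cF_t\bigr]\leq |q|\,C'+o^1_t-o^2_t,
\]
where $o^1_t:=E[X_T^2-X_t^2|\cF_t]$ and $o^2_t$ collects the $d\mu$-contribution; both tend to $0$ by Proposition~\ref{pr:DualLconv0+} together with $(1-p)D^\beta L_-^q-D\to 0$ in $\cR^\infty$. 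Since $q\to 0$ as $p\to 0+$, this gives the $\cH^2_{loc}$-convergence in (i); for (ii) and (iii), the jump bound $(\Delta M^X)^2\leq 4\|X\|_{\cR^\infty}^2$ will let me pass from $E[[M^X]_T-[M^X]_t|\cF_t]$ to $\sup_\tau E[[M^X]_T-[M^X]_{\tau-}|\cF_\tau]$, upgrading to the $BMO$ convergences.

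The main obstacle will be case~(i): Lemma~\ref{le:LisBMO}(ii) requires a genuine stopping time $\sigma$ with $L(p_0)^\sigma\leq\alpha$, yet $L(p)$ is in general only prelocally bounded. The remedy is to apply the lemma along the pre-stopped intervals $[0,\sigma_n)$ and transfer the Step-2 estimate to this prelocalized setting, in particular verifying that $\int X_-\,dM^X$ remains a true martingale there and that the source terms $o^1,o^2$ still vanish. This is precisely what forces the conclusion in (i) to be only $\cH^2_{loc}$ rather than $BMO_{loc}$.
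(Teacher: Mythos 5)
Your treatment of cases (ii) and (iii) matches the paper: reduce to a uniform pointwise bound on $L(p)$ via Corollary~\ref{co:ComparisonAppl}(i), invoke Lemma~\ref{le:LisBMO}(ii) and Corollary~\ref{co:lambdaBMO} for the uniform $BMO$ bounds, and then run the Proposition~\ref{pr:Strategy0-} argument unchanged. The plan for case (i), however, has a genuine gap precisely where the paper does something nontrivial.

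You propose to prelocalize along $\sigma_n=\inf\{t:L_t(p_0)>n\}\wedge T$ and then ``apply Lemma~\ref{le:LisBMO}(ii) along the pre-stopped intervals $[0,\sigma_n)$.'' This does not go through. Lemma~\ref{le:LisBMO}(ii) and the underlying Lemma~\ref{le:QuadVarLemma} require a genuine pointwise bound $L(p)^{\sigma}\leq\alpha$ and a (sub)martingale after stopping; pre-stopping gives $L(p)^{\sigma_n-}\leq C_n$ on $[0,\sigma_n)$, but $\Delta L_{\sigma_n}(p)$ is uncontrolled. The pre-stopped process $C_n-L(p)^{\sigma_n-}$ is then neither nonnegative nor obviously a submartingale, so the entire $[X]\leq X^2$ machinery of Lemma~\ref{le:QuadVarLemma} is unavailable, and no prelocal $BMO$ or $\cH^2$ bound on $M^{L(p)}$ emerges. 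In short, the obstacle you correctly identify has no remedy of the form you sketch.

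The paper instead replaces the uniform $BMO$ bound for $Z^{L(p)}\sint M$ with the weaker requirement (sufficient for convergence in $\cH^2_{loc}$, where only the $t=0$ evaluation is needed) that $\limsup_{p\to 0+}\|Z^{L(p)}1_{[0,\sigma_n]}\|_{L^2(M)}\leq C_n$, and proves it by a genuinely different mechanism: integrate the Bellman BSDE~\eqref{eq:FVpartofL} over $[0,T]$ (using that $L(p)$ is of class~(D) by Corollary~\ref{co:ComparisonAppl}(ii), so $M^{L(p)}$ is a true martingale) to obtain the identity
\[
  \tfrac12\Big\|\sqrt{L_-}\lambda+\tfrac{Z^L}{\sqrt{L_-}}\Big\|_{L^2(M)}^2=\tfrac{1}{|q|}\big(L_0(p)-\Gamma_0(p)\big),
\]
and then show that \emph{both} $L_0(p)$ and $\Gamma_0(p)$ converge to $\eta_0$ \emph{linearly} in $q$, which keeps the right-hand side bounded despite the vanishing denominator $|q|$. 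The linear-convergence lemma (Lemma~\ref{le:linearConvergence}) and the two-sided pinching of $L_0(p)$ from~\eqref{eq:ProofDualLconv0+Bound1} and~\eqref{eq:ProofDualLconv0+Bound2} are the essential ingredients that your proposal is missing; without them, the Step-2 estimate cannot be closed in case~(i). Separately, you also need to justify that $\int X_-\,dM^X$ is a true martingale for $p\in(0,p_0)$, which the paper does by combining Lemma~\ref{le:integrablility0+} (giving $\sup_t L_t\in L^1$) with the class-(D) decomposition; this step is not addressed in your sketch either.
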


The following terminology will be useful in the proof.
We say that real numbers $(x_\eps)$ converge to $x$ \emph{linearly} as $\eps\to0$ if
\[
  \limsup_{\eps\to 0+} \tfrac{1}{\eps}|x_\eps - x|<\infty.
\]

\begin{Lemma}\label{le:linearConvergence}
  Let $x_\eps\to x$ linearly and $y_\eps\to y$ linearly. Then
  \begin{enumerate}[topsep=3pt, partopsep=0pt, itemsep=1pt,parsep=2pt]
    \item $\limsup_{\eps\to0} \tfrac{1}{\eps}|x_\eps-y_\eps|<\infty$ if $x=y$,
    \item $x_\eps y_\eps\to xy$ linearly,
    \item if $x>0$ and $\varphi$ is a real function with $\varphi(0)=1$ and differentiable at $0$, then
        $(x_\eps)^{\varphi(\eps)}\to x$ linearly.
  \end{enumerate}
\end{Lemma}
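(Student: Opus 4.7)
The plan is to unpack the definition of linear convergence — $x_\eps\to x$ linearly means there exist $C<\infty$ and $\eps_0>0$ with $|x_\eps-x|\le C\eps$ for $0<\eps<\eps_0$ — and then treat each item with a standard $O(\eps)$ manipulation. All three parts are elementary; I would present them in order.

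For (i), the triangle inequality gives $|x_\eps-y_\eps|\le|x_\eps-x|+|y-y_\eps|$ because $x=y$, and dividing by $\eps$ yields
\[
\tfrac{1}{\eps}|x_\eps-y_\eps|\le \tfrac{1}{\eps}|x_\eps-x|+\tfrac{1}{\eps}|y_\eps-y|,
\]
whose limsup is finite by hypothesis. For (ii), I would write
\[
x_\eps y_\eps - xy = (x_\eps-x)\,y_\eps + x\,(y_\eps-y),
\]
observe that $(y_\eps)$ is bounded (being convergent), and conclude linearity of the left side from the two linear convergences on the right.

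For (iii), the point is that $x>0$ and $|x_\eps-x|=O(\eps)$ force $x_\eps$ to lie in a compact subinterval of $(0,\infty)$ for all small $\eps$, so both $\log$ and $\exp$ may be applied freely. Writing
\[
(x_\eps)^{\varphi(\eps)} = \exp\bigl(\varphi(\eps)\log x_\eps\bigr),
\]
I would use differentiability of $\log$ at $x$ to get $\log x_\eps = \log x + O(\eps)$ (i.e.\ linear convergence of $\log x_\eps$ to $\log x$), and differentiability of $\varphi$ at $0$ with $\varphi(0)=1$ to get $\varphi(\eps)=1+O(\eps)$. Part (ii) then yields $\varphi(\eps)\log x_\eps = \log x + O(\eps)$, and a final application of the mean value theorem to $\exp$ on a compact neighborhood of $\log x$ promotes this to $(x_\eps)^{\varphi(\eps)} = x + O(\eps)$.

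There is no real obstacle; the only point requiring a word of care is the positivity and boundedness argument at the start of (iii), which ensures that $\log$ is Lipschitz on the relevant range of values so that linear convergence is preserved under the composition.
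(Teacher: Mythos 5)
Your proof is correct, and parts (i) and (ii) coincide essentially verbatim with the paper's (both use the triangle inequality and, for (ii), the decomposition $x_\eps y_\eps - xy = x_\eps(y_\eps-y)+y(x_\eps-x)$ up to a trivial rearrangement, together with boundedness of convergent sequences). For (iii), however, you take a genuinely different route. You write $(x_\eps)^{\varphi(\eps)}=\exp(\varphi(\eps)\log x_\eps)$, establish linear convergence of $\log x_\eps\to\log x$ and of $\varphi(\eps)\to1$, invoke part (ii) to multiply the two, and finish by local Lipschitz continuity of $\exp$. The paper instead avoids logarithms altogether: it reduces via
\[
|(x_\eps)^{\varphi(\eps)}-x|\le |x_\eps|\,\big|(x_\eps)^{\varphi(\eps)-1}-1\big|+|x_\eps-x|
\]
to bounding $\eps^{-1}|(x_\eps)^{\varphi(\eps)-1}-1|$, sandwiches this between the values $\varrho(\delta_i,\eps)=|\delta_i^{\varphi(\eps)-1}-1|$ at two fixed endpoints $\delta_1<x<\delta_2$ using monotonicity of $\delta\mapsto\delta^{\varphi(\eps)-1}$, and then computes $\lim_\eps\eps^{-1}\varrho(\delta,\eps)=|\log(\delta)\,\varphi'(0)|$ by differentiating $\eps\mapsto\delta^{\varphi(\eps)}$ at $\eps=0$. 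Your version is slightly tidier and more modular, since it reuses (ii) and relies on routine local Lipschitz facts; the paper's is self-contained for (iii) and makes the $O(\eps)$ rate explicit as $|\log(\delta)\varphi'(0)|$. Both are correct and of comparable length.
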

\begin{proof}
  (i) This is immediate from the triangle inequality. (ii) This follows from $|x_\eps y_\eps-xy|\leq |x_\eps||y_\eps-y|+|y||x_\eps-x|$ because convergent sequences are bounded. (iii) Here we use
  \[
    |(x_\eps)^{\varphi(\eps)}-x|\leq |x_\eps||(x_\eps)^{\varphi(\eps)-1} - 1|+ |x_\eps-x|;
  \]
  as $\{x_\eps\}$ is bounded and $x_\eps\to x$ linearly, the question is reduced to the boundedness of $\eps^{-1}|(x_\eps)^{\varphi(\eps)-1} - 1|$.
  Fix $0<\delta_1<x<\delta_2$ and set $\varrho(\delta,\eps):=|\delta^{\varphi(\eps)-1}-1|$.
  For $\eps$ small enough,  $x_\eps\in [\delta_1,\delta_2]$ and then
  \[
   \varrho(\delta_1,\eps)\wedge \varrho(\delta_2,\eps) \;\leq\; |(x_\eps)^{\varphi(\eps)-1} - 1| \;\leq\; \varrho(\delta_1,\eps)\vee \varrho(\delta_2,\eps).
  \]
  For $\delta>0$ we have $\lim_\eps \eps^{-1}|\varrho(\delta,\eps)|= \big|\frac{d}{d\eps}\delta^{\varphi(\eps)}|_{\eps=0}\big| = |\log(\delta)\varphi'(0)|<\infty$. Hence the upper and the lower bound above converge to $0$ linearly.
\end{proof}

\begin{proof}[Proof of Proposition~\ref{pr:Strategy0+}] We first prove (ii) and (iii), i.e,
  we assume that $L(p_1)$ is locally bounded (resp.\ bounded). Recall $L(p)\geq k_1$ from Lemma~\ref{le:BoundsForL}.
  By Corollary~\ref{co:ComparisonAppl}(i) there exists a constant $C>0$ independent of $p$ such that $L(p)\leq C L(p_1)$ for all $p\in (0,p_1]$. Hence $L(p)$ is bounded uniformly in $p\in (0,p_1]$ in the case (iii) and for (ii) this holds after a localization.
  Now Lemma~\ref{le:LisBMO}(ii) implies
  $\sup_{p\in (0,p_1]}  \|M^{L(p)}\|_{BMO}<\infty$ and we can proceed exactly as in the proof of items~(ii) and~(iii) of Proposition~\ref{pr:Strategy0-}.

  (i)~~This case is more difficult because we have to use prelocal bounds and Lemma~\ref{le:LisBMO}(ii) does not apply.
  Again, we want to imitate the proof of Proposition~\ref{pr:Strategy0-}(i), or more precisely, the arguments after~\eqref{eq:MXbracket}.
  We note that for the claimed $\cH^2_{loc}$-convergence those estimates are required only at $t=0$  and so the $BMO$-norms can be replaced by $\cH^2$-norms. Inspecting that proof in detail, we see that we can proceed in the same way once we establish:

  \begin{itemize}
   \item There exists a localizing sequence $(\sigma_n)$ and constants $C_n$ such that for all $n$,
     \begin{enumerate}[topsep=3pt, partopsep=0pt, itemsep=1pt,parsep=2pt]
       \item[(a)] $(H1_{[0,\sigma_n]})\sint M^{L(p)}$ is a martingale for all $H$ predictable and bounded, and all $p\in (0,p_0)$,
       \item[(b)] $\sup_{p\in (0,p_0]} \big(L_-(p) + L_-^{-1}(p)\big)\leq C_n$ on $[0,\sigma_n]$,
       \item[(c)] $\limsup_{p\to0+}\|Z^{L(p)}1_{[0,\sigma_n]}\|_{L^2(M)}\leq C_n$.
     \end{enumerate}
  \end{itemize}

  We may assume by localization that $\lambda \sint M \in \cH^2$.  We now prove (a)-(c); instead of indicating $(\sigma_n)$ explicitly, we write ``by localization\dots'' as usual.

  (a)~~Fix $p\in(0,p_0)$. By Lemma~\ref{le:BoundsForL} and Lemma~\ref{co:ComparisonAppl}(ii), $L=L(p)$ is a supermartingale of class (D).
  Hence its  Doob-Meyer decomposition $L=L_0 + M^L + A^L$
  is such that $A^L$ is decreasing and nonpositive, and $M^L$ is a true martingale. Thus
  \[
    0\leq E[-A^L_T]=E[L_0-L_T]<\infty.
  \]
  After localizing as in Lemma~\ref{le:integrablility0+} (with $r=1$), we have $\sup_t L_t\in L^1(P)$.
  Hence $\sup_t |M^L_t| \leq \sup_t L_t  - A^L_T \in L^1(P)$.
  Now (a) follows by the BDG inequalities exactly as in the proof of Lemma~\ref{le:QuadVarLemma}.

  (b)~~We have $L_-(p)\geq k_1$ by Lemma~\ref{le:BoundsForL}. Conversely, by Corollary~\ref{co:ComparisonAppl}(i),
   $L_-(p)\leq C L_-(p_0)$ for $p\in (0,p_0]$ with some universal constant $C>0$, and $L_-(p_0)$ is locally bounded by left-continuity.

  (c)~~We shall use the rate of convergence obtained for $L(p)$ and the information about $Z^L$ contained in $A^L$ via the Bellman BSDE.
  We may assume by localization that (a) and (b) hold with $\sigma_n$ replaced by $T$. Thus it suffices to show that
    \begin{align*}
    \limsup_{p\to0+}\bigg\|\sqrt{L_-(p)} \lambda+\frac{Z^{L(p)}}{\sqrt{L_-(p)}}\bigg\|_{L^2(M)} <\infty.
  \end{align*}
  Suppressing again $p$ in the notation, (a) and the formula~\eqref{eq:FVpartofL} for $A^L$ imply
  \begin{align*}
   &E[L_0-L_T]  = E[-A^L_T] \\
   &= E\Big[(1-p) \int_0^T D^\beta L_-^q \,d\mu\Big] - \frac{q}{2} E\Big[\int_0^T \,L_-\Big(\lambda+\frac{Z^L}{L_{-}}\Big)^\top\,d\br{M}\, \Big(\lambda+\frac{Z^L}{L_{-}}\Big)\Big].
  \end{align*}
  Recalling that $L_T=D_T$, this yields
  \begin{align*}
    \tfrac{1}{2} \Big\|\sqrt{L_-} \lambda+\frac{Z^L}{\sqrt{L_-}}\Big\|_{L^2(M)}
    &=  \tfrac{1}{2} E\Big[\int_0^T \,L_-\Big(\lambda+ \frac{Z^L}{L_{-}}\Big)^\top\, d\br{M}\, \Big(\lambda+\frac{Z^L}{L_{-}}\Big)\Big]\\
    &= \tfrac{1}{|q|} \bigg(E[L_0-L_T]-E\Big[(1-p) \int_0^T D^\beta L_-^q \,d\mu\Big]\bigg)\\
    &= \tfrac{1}{|q|} \bigg(L_0 - E\Big[D_T+ (1-p) \int_0^T D^\beta L_-^q \,d\mu\Big]\bigg)\\
    &= \tfrac{1}{|q|} (L_0 - \Gamma_0),
  \end{align*}
  where we have set $\Gamma_0=\Gamma_0(p)=E [D_T+ (1-p) \int_0^T D^\beta L_-^q \,d\mu]$.
  We know that both $L_0$ and $\Gamma_0$ converge to $\eta_0=E\big[\int_0^T D_s\,\mu^\circ(ds)\big]$ as $p\to0+$ (and hence $q\to0-$). However, we are asking for the stronger result
  \[
    \limsup_{p\to 0+} \tfrac{1}{|q|}|L_0(p) - \Gamma_0(p)|<\infty.
  \]
  By Lemma~\ref{le:linearConvergence}(i), it suffices to show that $L_0(p)\to\eta_0$ linearly and $\Gamma_0(p)\to \eta_0$ linearly.
  Using $L^*=L^\beta$, inequalities~\eqref{eq:ProofDualLconv0+Bound1} and~\eqref{eq:ProofDualLconv0+Bound2} evaluated at $t=0$ read
  \[
  \mu^\circ[0,T]^{-p} \eta_0 \leq L_0(p) \leq
        E\Big[\int_0^T D_s^{\beta}\,\mu^\circ(ds)\Big]^{1/\beta+p/q_0} \,\Big(k_1^{1-\beta_0/\beta} L_0(p_0)\Big)^{q/q_0}.
  \]
  Recalling the bound~\eqref{eq:BoundsR} for $D$, items (ii) and (iii) of Lemma~\ref{le:linearConvergence} yield that $L_0(p)\to\eta_0$ linearly. The second claim, that $\Gamma_0(p)\to \eta_0$ linearly, follows from the definitions of $\Gamma_0(p)$ and $\eta_0$ using again~\eqref{eq:BoundsR} and the uniform bounds for $L_-$ from~(b). This ends the proof.
\end{proof}

\subsection{Proof of Theorem~\ref{th:Limit0Economics} and Other Consequences}
\begin{Lemma}\label{le:ForLimit0Economics}
  Assume that $S$ is continuous and that there exists $p_0>0$ such that $u_{p_0}(x_0)<\infty$.
  As $p\to 0$,
  \begin{equation}\label{eq:inLemmaForLimit0Economics}
    \frac{Z^{L(p)}}{L_-(p)}\to \frac{Z^\eta}{\eta_-} \mbox{ in }L^2_{loc}(M)
    \quad\mbox{and} \quad \frac{1}{L_-(p)}\sint N(p)\to \frac{1}{\eta_-}\sint N^\eta \mbox{ in }\cH^2_{loc}.
  \end{equation}
  For a sequence $p\to0-$ the convergence $\frac{Z^{L(p)}}{L_-(p)}\to \frac{Z^\eta}{\eta_-}$ in $L^2_{loc}(M)$ holds also without the assumption on $p_0$.
\end{Lemma}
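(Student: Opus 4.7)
The plan is to reduce the convergence for the quotients to the already-proven $\cH^2_{loc}$-convergence $M^{L(p)}\to M^\eta$ (Propositions~\ref{pr:Strategy0-}(i) and~\ref{pr:Strategy0+}(i)) via the identities
\[
\frac{Z^{L(p)}}{L_-(p)} - \frac{Z^\eta}{\eta_-} = \frac{Z^{L(p)} - Z^\eta}{L_-(p)} + \Big(\frac{1}{L_-(p)} - \frac{1}{\eta_-}\Big) Z^\eta
\]
and
\[
\frac{1}{L_-(p)}\sint N(p) - \frac{1}{\eta_-}\sint N^\eta = \frac{1}{L_-(p)}\sint (N(p) - N^\eta) + \Big(\frac{1}{L_-(p)} - \frac{1}{\eta_-}\Big)\sint N^\eta.
\]
The first summand in each decomposition will be controlled by the $\cH^2_{loc}$-convergence of the martingale parts; the second by dominated convergence applied to the integrand $L_-(p)^{-1}-\eta_-^{-1}$.

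First I would pass to a common localizing sequence on which $C^{-1}\le L_-(p),\eta_-\le C$ uniformly in $p$ in a one-sided neighborhood of $0$: the upper bounds come from Lemma~\ref{le:BoundsForL} for $p\to 0-$ and from Lemma~\ref{le:integrablility0+} for $p\to 0+$, while the lower bound on $\eta_-$ is \eqref{eq:etaBounds} and the lower bound on $L_-(p)$ follows from Lemma~\ref{le:UniformStoppingLimit0-} for $p\to 0-$ and from $L(p)\ge k_1$ for $p\to 0+$. Since $\eta$ is a bounded supermartingale, $\|\eta\|_{L^\infty}-\eta$ is a bounded submartingale and Lemma~\ref{le:QuadVarLemma} gives $M^\eta\in BMO$; a further localization ensures $Z^\eta\sint M, N^\eta\in\cH^2$. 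Then, by orthogonality of the Kunita--Watanabe decomposition, $Z^{L(p)}\sint M\to Z^\eta\sint M$ and $N(p)\to N^\eta$ in $\cH^2_{loc}$, which combined with $|1/L_-(p)|\le C$ shows that the first summand in each decomposition tends to $0$ in $\cH^2_{loc}$.

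For the second summands the relevant $\cH^2$-norms take the form
\[
E\Big[\int_0^T\bigl(L_-(p)^{-1}-\eta_-^{-1}\bigr)^2 (Z^\eta)^\top\,d\br{M}\,Z^\eta\Big]\quad\text{and}\quad E\Big[\int_0^T\bigl(L_-(p)^{-1}-\eta_-^{-1}\bigr)^2\,d[N^\eta]\Big],
\]
with integrands dominated by $4C^2$ times $P$-integrable quantities. Under the standing assumption $u_{p_0}(x_0)<\infty$, Proposition~\ref{pr:DualLconv0+} and Proposition~\ref{pr:DualLconv0-}(iii) provide $L(p)\to\eta$ prelocally in $\cR^\infty$; after a final localization this is uniform-in-$t$, $P$-a.s., hence $L_-(p)^{-1}\to\eta_-^{-1}$ uniformly with a uniform bound, and dominated convergence delivers both claimed convergences.

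It remains to obtain the convergence $Z^{L(p)}/L_-(p)\to Z^\eta/\eta_-$ for sequences $p\to 0-$ without assuming $u_{p_0}<\infty$. The first summand is still handled as above, since Proposition~\ref{pr:Strategy0-}(i) does not need $p_0$. For the second only the pointwise-at-each-$t$ convergence from Proposition~\ref{pr:DualLconv0-}(i) is available; I would pass to a subsequence so that $L_t(p_n)\to\eta_t$ simultaneously on a countable dense $Q\subset[0,T]$ $P$-a.s., and exploit continuity of $S$ (hence of $M$ and $\br{M}$) together with c\`adl\`ag regularity of $L(p_n)$ and $\eta$ to promote this to $d\br{M}\otimes dP$-a.e.\ convergence of $L_-(p_n)^{-1}$ to $\eta_-^{-1}$; dominated convergence then closes the argument. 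I expect this final step to be the main obstacle: without the $u_{p_0}$-hypothesis one loses the uniform-in-$t$ $\cR^\infty$-convergence, and the upgrade relies essentially on continuity of $\br{M}$ ruling out atoms in $t$ together with a careful use of the c\`adl\`ag structure to transfer dense-set convergence to $d\br{M}$-a.e.\ convergence.
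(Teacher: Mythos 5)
Your plan for the main case matches the paper's proof closely: the same algebraic decomposition, the same localization to make $L_-(p)^{\pm 1}$, $\eta_-^{\pm 1}$ uniformly bounded, the martingale parts handled via Propositions~\ref{pr:Strategy0-}(i),~\ref{pr:Strategy0+}(i), and the scalar factor $L_-(p)^{-1}-\eta_-^{-1}$ handled via prelocal $\cR^\infty$-convergence from Propositions~\ref{pr:DualLconv0-}(iii) and~\ref{pr:DualLconv0+} plus dominated convergence. One minor quibble: for the upper bound on $L_-(p)$ as $p\to 0+$ you cite Lemma~\ref{le:integrablility0+}, which only gives $L^r(P)$-integrability of the running supremum; the paper instead uses Corollary~\ref{co:ComparisonAppl}(i) together with the fact that $L_-(p_0)$ is locally bounded (left limits exist), as in step~(b) of the proof of Proposition~\ref{pr:Strategy0+}.

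The genuine gap is the final step, as you yourself suspected. The ``promotion'' from $P$-a.s.\ convergence on a countable dense $Q\subset[0,T]$ to $d\br{M}\otimes dP$-a.e.\ convergence of $L_-(p_n)^{-1}$ is not a general fact, even for c\`adl\`ag processes and continuous $\br{M}$. For a concrete obstruction: take $g$ the Cantor function, $C$ the Cantor set, $f_n(t):=\max\{0,1-3^n\,d(t,C)\}$ (continuous, hence c\`adl\`ag) and $Q$ a countable dense subset of $[0,1]\setminus C$; then $f_n\to 0$ on $Q$ yet $f_n\equiv 1$ on $C$, which carries all of $dg$. So dense-set convergence plus c\`adl\`ag regularity plus atomlessness of $\br{M}$ does not yield $d\br{M}$-a.e.\ convergence, and moreover you would still need to pass from $L(p_n)$ (where the pointwise convergence lives) to the left-limit process $L_-(p_n)$ (which is what actually enters the integrand).

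The paper handles exactly this last point with the optional-set trick: setting $\Lambda:=\bigcap_n\{L_-(p_n)=L(p_n)\}\cap\{\eta=\eta_-\}$, the $t$-sections of $\Lambda^c$ are countable $P$-a.s.\ because each $L(p_n)$ and $\eta$ is c\`adl\`ag, and since $M$ is continuous we get $\int_0^T 1_{\Lambda^c}\,d\br{M}=0$ $P$-a.s. This lets one replace $\eta_--L_-(p_n)$ by $(\eta-L(p_n))1_\Lambda$ inside the stochastic integral against $Z^\eta\sint M$, after which the convergence $L_t(p_n)\to\eta_t$ for each $t$ (not merely on a dense $Q$; Proposition~\ref{pr:DualLconv0-}(i) gives this for the whole sequence, so no subsequence extraction is needed) together with dominated convergence for stochastic integrals finishes the argument. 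The key observation you were missing is that you do not need convergence of $L_{t-}(p_n)$ at all: you only need to know that, up to a $d\br{M}$-null set, $L_-$ and $L$ agree, and this is automatic from the c\`adl\`ag property and the continuity of $\br{M}$.
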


\begin{proof}
  By localization we may assume that $L_-(p)$ is bounded away from zero and infinity, uniformly in $p$ (Lemma~\ref{le:UniformStoppingLimit0-} and Lemma~\ref{le:BoundsForL} and the preceding proof); we also recall~\eqref{eq:etaBounds}. We have
  \[
    \Big|\frac{Z^{L(p)}}{L_-(p)} - \frac{Z^\eta}{\eta_-}\Big| \leq  \Big|\frac{1}{L_-(p)}\big(Z^{L(p)}-Z^\eta\big)\Big| + \Big|\big(\eta_- - L_-(p)\big)\frac{Z^\eta}{L_-(p)\eta_-}\Big|.
  \]
  Let $u_{p_0}(x_0)<\infty$. The first part of~\eqref{eq:inLemmaForLimit0Economics} follows from the $L^2_{loc}(M)$ and prelocal $\cR^{\infty}$ convergences mentioned in Propositions~\ref{pr:Strategy0-},~\ref{pr:Strategy0+} and Propositions~\ref{pr:DualLconv0-},~\ref{pr:DualLconv0+}, respectively.
  The proof of the second part of~\eqref{eq:inLemmaForLimit0Economics} is analogous.

  Now drop the assumption that $u_{p_0}(x_0)<\infty$ and consider a sequence $p_n\to0-$. Then Proposition~\ref{pr:DualLconv0-} only yields $L_t(p_n)\to\eta_t$ $P$-a.s.\ for each $t$,
  rather than the convergence of $L_{t-}(p_n)$ to $\eta_{t-}$.
  Consider the optional set
  $\Lambda:=\bigcap_n\{L_-(p_n)=L(p_n)\}\cap\{\eta=\eta_-\}$. Because $L(p_n)$ and $\eta$ are c\`adl\`ag,
  $\{t:\,(\omega,t)\in \Lambda^c\}\subset[0,T]$ is countable $P$-a.s.\ and as $M$ is continuous is follows that $\int_0^T1_{\Lambda^c}\,d\br{M}=0$ $P$-a.s.
  Now dominated convergence for stochastic integrals yields that
  $
     \{(\eta_- - L_-(p_n))Z^\eta\}\sint M = \{(\eta - L(p_n))1_{\Lambda}Z^\eta\}\sint M \to 0
  $ in $\cH^2_{loc}$ and the rest is as before.
\end{proof}

\begin{proof}[Proof of Theorem~\ref{th:Limit0Economics} and Remark~\ref{rk:Limit0Economics}]
  The convergence of the optimal consumption is contained in Propositions~\ref{pr:DualLconv0-} and~\ref{pr:DualLconv0+} by the formula~\eqref{eq:consumptionFeedback}. The convergence of the portfolios follows from
  Lemma~\ref{le:ForLimit0Economics} in view of~\eqref{eq:optStrategy}.

  For $p\in (0,p_0]$ we have the uniform bound $\hkappa(p)\leq (k_2/k_1)^{\beta_0}$
  by Lemma~\ref{le:BoundsForL} and~\eqref{eq:consumptionFeedback}; while for
  $p\in[p_1,0)$, $\hkappa(p)$ is prelocally uniformly bounded by
  Lemma~\ref{le:UniformStoppingLimit0-} and~\eqref{eq:consumptionFeedback}.
  Hence the convergence of the wealth processes follows from
  Corollary~\ref{co:ForWealthProcConv}(i).
\end{proof}

We complement the convergence in the primal problem by a result for the solution $\hY(p)$ of the dual problem~\eqref{eq:dualProblem}.

\begin{Prop}\label{pr:dualConv0}
  Assume that $S$ is continuous and that there exists $p_0>0$ such that $u_{p_0}(x_0)<\infty$ holds. Moreover, assume that
  there exists $p_1\in(0,p_0]$ such that $L(p_1)$ is locally bounded. As $p\to0$,
  \[
    \hY(p)\to \frac{\eta_0}{x_0}\cE\Big(-\lambda\sint M + \frac{1}{\eta_-}\sint N^\eta\Big)\quad\mbox{in }\cH^r_{loc} \mbox{ for all }r\in[1,\infty).
  \]
  If $\eta$ and $L(p)$ are continuous for $p<0$, the convergence for a limit $p\to0-$ holds in the semimartingale topology without the assumptions on $p_0$ and $p_1$.
\end{Prop}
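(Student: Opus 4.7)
The plan is to read off $\hY(p)$ from the explicit formula~\eqref{eq:DualOptimizerFormula}. Writing $y_0(p) = L_0(p) x_0^{p-1}$ and
\[
 X(p) := -\lambda \sint M + \tfrac{1}{L_-(p)} \sint N^{L(p)}, \qquad X := -\lambda\sint M + \tfrac{1}{\eta_-}\sint N^\eta,
\]
we have $\hY(p) = y_0(p)\,\cE(X(p))$. The scalar factor is handled immediately: Propositions~\ref{pr:DualLconv0-} and~\ref{pr:DualLconv0+} give $L_0(p)\to\eta_0$ and clearly $x_0^{p-1}\to x_0^{-1}$, so $y_0(p)\to \eta_0/x_0$. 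Everything then reduces to showing $\cE(X(p))\to\cE(X)$ in the prescribed topology, which I would deduce from convergence $X(p)\to X$ of semimartingales by invoking the continuity of the stochastic exponential (Lemma~\ref{le:convStochExp}, exactly as in the proof of Proposition~\ref{pr:dualConvExp}).

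For the first claim ($\cH^r_{loc}$ for all $r$, under $u_{p_0}(x_0)<\infty$ and $L(p_1)$ locally bounded): Corollary~\ref{co:ComparisonAppl}(i) gives $L(p)\leq CL(p_1)$ for $p\in(0,p_1]$, and Remark~\ref{rk:RinftyConvForL}(i) upgrades this to $L(p)\to\eta$ in $\cR^\infty_{loc}$ on both sides of $0$. After localization I may therefore assume that $L_-(p)$ and $L_-(p)^{-1}$ are bounded uniformly in $p$ near~$0$. Corollary~\ref{co:lambdaBMO} then yields $\lambda\sint M\in BMO_{loc}$, and Propositions~\ref{pr:Strategy0-}(ii) and~\ref{pr:Strategy0+}(ii) give $M^{L(p)}\to M^\eta$ in $BMO_{loc}$. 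Combining with the uniform bounds on $L_-(p)^{-1}$, the integrals $L_-(p)^{-1}\sint N^{L(p)}$ converge to $\eta_-^{-1}\sint N^\eta$ in $BMO_{loc}$, so $X(p)\to X$ in $BMO_{loc}$. The reverse Hölder inequality of Kazamaki for BMO-exponentials then transports this to $\cE(X(p))\to\cE(X)$ in $\cH^r_{loc}$ for every $r\in[1,\infty)$, and the convergent scalar prefactor preserves this topology.

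For the second claim ($p\to 0-$ in the semimartingale topology, under continuity of $\eta$ and $L(p)$ and without the $p_0$, $p_1$ assumptions): Lemma~\ref{le:UniformStoppingLimit0-} provides a localization on which $L_-(p)^{-1}$ is bounded uniformly in $p$, and Proposition~\ref{pr:Strategy0-}(i) gives $M^{L(p)}\to M^\eta$ in $\cH^2_{loc}$ (no $p_0$-assumption). The continuity hypothesis ensures $L_-(p)=L(p)$ and $\eta_-=\eta$ off an evanescent set, so that the countable-exceptional-set argument used in Lemma~\ref{le:ForLimit0Economics} together with dominated convergence for stochastic integrals yields $L_-(p)^{-1}\sint N^{L(p)}\to \eta_-^{-1}\sint N^\eta$ in $\cH^2_{loc}$. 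Thus $X(p)\to X$ in $\cH^2_{loc}$, hence in the semimartingale topology, and Lemma~\ref{le:convStochExp}(ii) delivers $\cE(X(p))\to\cE(X)$ there; multiplication by the convergent scalar $y_0(p)$ finishes the argument.

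The main obstacle is the first claim: passing from $BMO_{loc}$ convergence of the exponents to $\cH^r_{loc}$ convergence of the stochastic exponentials simultaneously for all $r\in[1,\infty)$. This requires not merely $BMO_{loc}$ convergence but a \emph{uniform} bound on $\|X(p)\|_{BMO}$ (on each localization interval) so that the reverse Hölder inequality for $\cE(X(p))$ holds with a $p$-independent exponent; this uniform bound is exactly what Corollary~\ref{co:lambdaBMO} together with Propositions~\ref{pr:Strategy0-}(ii) and~\ref{pr:Strategy0+}(ii) provides, and it is the essential reason why the assumption on $L(p_1)$ cannot be dispensed with in this part.
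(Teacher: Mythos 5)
Your proof is correct and follows essentially the same route as the paper: establish $BMO_{loc}$ convergence of the local martingale exponents via the $\cR^\infty_{loc}$ bound from Remark~\ref{rk:RinftyConvForL}, Corollary~\ref{co:lambdaBMO}, and Propositions~\ref{pr:Strategy0-}/\ref{pr:Strategy0+}, then pass to $\cH^r_{loc}$ convergence of the stochastic exponentials, and treat the limit $p\to0-$ by the $\cH^2_{loc}$ argument of Lemma~\ref{le:ForLimit0Economics} together with Lemma~\ref{le:convStochExp}(ii). The only difference is cosmetic: for the final step of the first claim you invoke Kazamaki's reverse H\"older inequality while the paper cites the equivalent packaged result of Protter (Theorem~3.4 and Remark~3.7(2) in~\cite{Protter.80}) for $\cH^\omega_{loc}$-differentials, which encodes exactly the same mechanism.
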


\begin{proof}
  (i)~~If $L(p_1)$ is locally bounded, then $L(p)\to\eta$ in $\cR^\infty_{loc}$ by
  Remark~\ref{rk:RinftyConvForL} and Proposition~\ref{pr:DualLconv0+}. Moreover,
  $M^{L(p)}\to M^{\eta}$ in $BMO_{loc}$ by Propositions~\ref{pr:Strategy0-} and~\ref{pr:Strategy0+}.
  This implies $N^{L(p)}\to N^{\eta}$ in $BMO_{loc}$ by orthogonality of the KW decompositions. It follows that
  \[
    - \lambda \sint M+\frac{1}{L_-(p)}\sint N^{L(p)}\to - \lambda \sint M+ \frac{1}{\eta_-}\sint N^{\eta}\quad\mbox{ in } BMO_{loc}.
  \]
  This implies that the corresponding stochastic exponentials converge in $\cH^r_{loc}$ for $r\in[1,\infty)$ (see Theorem~3.4 and Remark~3.7(2) in Protter~\cite{Protter.80}). In view of the formula~\eqref{eq:DualOptimizerFormula} for $\hY(p)$, this ends the proof of the first claim.

  (ii)~~Using Lemma~\ref{le:ForLimit0Economics} and Lemma~\ref{le:convStochExp}(ii), the proof of the second claim is similar.
\end{proof}

Note that in the standard case $D\equiv 1$ the normalized limit in Proposition~\ref{pr:dualConv0} is $\cE(-\lambda\sint M)$, i.e., the ``minimal martingale density'' (cf.~\cite{Schweizer.95b}).
We conclude by an additional statement concerning the convergence of the wealth processes in Theorem~\ref{th:Limit0Economics}.

\begin{Prop}
  Let the conditions of Theorem~\ref{th:Limit0Economics}(ii) hold and assume in addition that
  there exists $p_1\in(0,p_0]$ such that $L(p_1)$ is locally bounded. Then the convergence of the wealth processes in
  Theorem~\ref{th:Limit0Economics}(ii) takes place in $\cH^r_{\loc}$ for all $r\in[1,\infty)$.
\end{Prop}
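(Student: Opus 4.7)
The plan is to strengthen the convergence of the exponent $Y^p := \hpi(p)\sint R - \hkappa(p)\sint\mu$ of $\hX(p)=x_0\,\cE(Y^p)$ to a convergence in the semimartingale-$BMO$ space $\cH^\omega_{\loc}$, and then to invoke continuity of the Dol\'eans--Dade exponential from $\cH^\omega_{\loc}$ into $\cH^r_{\loc}$ (Theorem~3.4 and Remark~3.7(2) of~\cite{Protter.80}, as in the proof of Proposition~\ref{pr:dualConv0}). Set $\pi^*:=\lambda+Z^\eta/\eta_-$, $\kappa^*:=D/\eta$ and $Y^*:=\pi^*\sint R-\kappa^*\sint\mu$, so the candidate limit is $x_0\,\cE(Y^*)$.

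First I would fix a uniform localization on a punctured neighborhood $[p_{-1},p_1]\setminus\{0\}$ of $0$. For $p\in(0,p_1]$, the hypothesis that $L(p_1)$ is locally bounded, combined with Corollary~\ref{co:ComparisonAppl}(i) and Lemma~\ref{le:BoundsForL}, supplies a localizing sequence on which both $L(p)$ and $1/L(p)$ are bounded uniformly in $p$; for $p\in[p_{-1},0)$, Lemma~\ref{le:UniformStoppingLimit0-} and Lemma~\ref{le:BoundsForL} play the same role. After one more localization, Corollary~\ref{co:lambdaBMO} yields $\lambda\sint M\in BMO$. The bounds~\eqref{eq:etaBounds} give $\eta,1/\eta$ uniformly bounded as well.

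Under this localization, Remark~\ref{rk:RinftyConvForL}(i) together with Proposition~\ref{pr:DualLconv0+} yields $L(p)\to\eta$ in $\cR^\infty$, and then Propositions~\ref{pr:Strategy0-}(ii) and~\ref{pr:Strategy0+}(ii) give $M^{L(p)}\to M^\eta$ in $BMO$. Orthogonality of the Kunita--Watanabe decomposition gives $Z^{L(p)}\sint M\to Z^\eta\sint M$ in $BMO$, and the uniform bounds then imply $\tfrac{Z^{L(p)}}{L_-(p)}\sint M\to\tfrac{Z^\eta}{\eta_-}\sint M$ in $BMO$; combined with~\eqref{eq:optStrategy} this shows
\[
  \hpi(p)\sint M = \beta\lambda\sint M + \beta\,\tfrac{Z^{L(p)}}{L_-(p)}\sint M \longrightarrow \pi^*\sint M \quad\text{in }BMO.
\]
For the finite-variation part $A^p := \int \hpi(p)^\top\,d\br{M}\,\lambda - \hkappa(p)\sint\mu$, the consumption piece converges in $\cR^\infty$ (hence in $\cH^\omega$) by the uniform convergence $\hkappa(p)\to\kappa^*$ from Theorem~\ref{th:Limit0Economics}(i) together with the uniform bounds on $\hkappa(p)$ and $\kappa^*$. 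For the $d\br{M}$-drift I would use the Cauchy--Schwarz/Kazamaki estimate
\[
  E\Big[\textstyle\int_\tau^T \big|(\hpi(p)-\pi^*)^\top\,d\br{M}\,\lambda\big|\,\Big|\,\cF_\tau\Big] \le \|(\hpi(p)-\pi^*)\sint M\|_{BMO}\,\|\lambda\sint M\|_{BMO},
\]
valid uniformly in the stopping time $\tau$, which converts $BMO$-convergence of the $M$-integrand into $\cH^\omega$-convergence of the drift. Altogether $Y^p\to Y^*$ in $\cH^\omega_{\loc}$.

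The $\cE$-continuity then yields $\hX(p)=x_0\,\cE(Y^p)\to x_0\,\cE(Y^*)$ in $\cH^r_{\loc}$ for every $r\in[1,\infty)$, which is the claim. The main technical hurdle is the passage from $BMO$-convergence of the martingale components to $\cH^\omega$-convergence of the full exponent: it is there that the uniform positive lower bounds on $L(p)$ (from Lemma~\ref{le:UniformStoppingLimit0-} for $p<0$, and from the additional hypothesis via Corollary~\ref{co:ComparisonAppl}(i) for $p>0$) and the $BMO$-property of $\lambda\sint M$ are both essential. The concluding continuity step for $\cE$ is a direct extension of the argument already used for the dual problem in Proposition~\ref{pr:dualConv0}, now applied to special semimartingales rather than martingales only.
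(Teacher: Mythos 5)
Your proposal is correct and follows essentially the same route as the paper. The paper's proof is a two-line citation: it notes that the results of Section~6 upgrade the convergences to $\hkappa(p)\to\kappa^*$ in $\cR^\infty_{\loc}$ and $\hpi(p)\sint M\to\pi^*\sint M$ in $BMO_{\loc}$ (hence $\cH^\omega_{\loc}$), and then invokes Corollary~\ref{co:ForWealthProcConv}(ii), which in turn rests on Lemma~\ref{le:convMandR} (passing from $L^\omega_{\loc}(M)$ to $\cH^\omega_{\loc}$ for $\pi\sint R$) and the Protter result on $\cE$-continuity. You re-derive these ingredients explicitly rather than citing the packaged corollary: your Cauchy--Schwarz bound
\[
  E\Big[\textstyle\int_\tau^T |(\hpi(p)-\pi^*)^\top\,d\br{M}\,\lambda|\,\Big|\,\cF_\tau\Big] \le \|(\hpi(p)-\pi^*)\sint M\|_{BMO}\,\|\lambda\sint M\|_{BMO}
\]
is exactly the $r=\omega$ version of the inequality proved in Lemma~\ref{le:convMandR} for $r=2$, and the concluding step via Protter is the same. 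The one cosmetic imprecision is writing ``$M^{L(p)}\to M^\eta$ in $BMO$'' where you mean $BMO_{\loc}$ (Propositions~\ref{pr:Strategy0-}(ii),~\ref{pr:Strategy0+}(ii) give the localized statement), but you are evidently working on the fixed uniform localization you set up, so the argument is sound. In short: same proof, unfolded.
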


\begin{proof}
  Under the additional assumption, the results of this section yield the convergence of $\hkappa(p)$ in $\cR^\infty_{loc}$
  and the convergence of $\hpi(p)\sint M$ in $BMO_{loc}$ (and hence in $\cH^\omega_{loc}$) by the same formulas as before. Corollary~\ref{co:ForWealthProcConv}(ii) yields the claim.
\end{proof}

\appendix
\section{Convergence of Stochastic Exponentials}\label{se:convExponentials}

This appendix provides some continuity results for stochastic exponentials of continuous semimartingales in an elementary and self-contained way. They are required for the main results of Section~\ref{se:mainResults} because our wealth processes are exponentials.
We also use a result from the (much deeper) theory of $\cH^\omega$-differentials; but this is applied only for refinements of the main results.

\begin{Lemma}\label{le:locUnifBddForExponential}
  Let $X^n=M^n+A^n$, $n\geq1$ be continuous semimartingales with continuous canonical decompositions and assume that $\sum_n \|X^n\|_{\cH^2}<\infty$. Then $M^n$, $[M^n]$ and
  $\int |dA^n|$ are locally bounded uniformly in $n$.
\end{Lemma}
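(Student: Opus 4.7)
The plan is to exploit the summability of $\|X^n\|_{\cH^2}$ to bound all three quantities by a single $P$-a.s.\ finite, nondecreasing, adapted process $Y$, and then localize along the first-passage times of $Y$. First I would unpack the $\cH^2$-norm: the assumption $\sum_n\|X^n\|_{\cH^2}<\infty$ gives $\sum_n|X^n_0|<\infty$, $\sum_n\big\|[M^n]_T^{1/2}\big\|_{L^2}<\infty$ and $\sum_n\big\|\int_0^T|dA^n|\big\|_{L^2}<\infty$. Doob's $L^2$-inequality applied to the continuous martingale $M^n-M^n_0$ then yields $\big\|\sup_{t\le T}|M^n_t-M^n_0|\big\|_{L^2}\le 2\big\|[M^n]_T^{1/2}\big\|_{L^2}$, so the running suprema are summable in $L^2$ as well; in particular $\sup_n|M^n_0|<\infty$ under the $\cH^2$-bound.

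Next I would form the adapted, nondecreasing process
\[
Y_t:=\sum_{n\ge1}\Big(\sup_{s\le t}|M^n_s-M^n_0|+[M^n]_t^{1/2}+\int_0^t|dA^n|\Big),\qquad t\in[0,T].
\]
The three convergent series above yield $E[Y_T]<\infty$, hence $Y_T<\infty$ $P$-a.s. I then set
\[
\tau_k:=\inf\{t\ge 0:Y_t>k\}\wedge T.
\]
Since $Y$ is nondecreasing, $\{\tau_k<t\}=\bigcup_{s<t,\,s\in\Q}\{Y_s>k\}\in\cF_t$, so $\tau_k$ is a stopping time in the right-continuous filtration $\F$; and $Y_T<\infty$ $P$-a.s.\ forces $\tau_k\uparrow T$ stationarily as $k\to\infty$.

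For every $t<\tau_k$ one has $Y_t\le k$, so for each $n$ each of $\sup_{s\le t}|M^n_s-M^n_0|$, $[M^n]_t^{1/2}$ and $\int_0^t|dA^n|$ is bounded by $k$. The continuity of $M^n$, $[M^n]$ and $A^n$ lets these bounds pass to the limit $t\uparrow\tau_k$ summand-by-summand (even though $Y$ itself may fail to be right-continuous), so $(M^n-M^n_0)^{\tau_k}$, $[M^n]^{\tau_k}$ and $\big(\int_0^{\cdot}|dA^n|\big)^{\tau_k}$ are all bounded by $k$ uniformly in $n$; combined with the uniform bound on $|M^n_0|$ this is the lemma.

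The only subtlety is technical: the pointwise sum of countably many continuous nondecreasing processes $Y$ need not be right-continuous, so one cannot treat $\tau_k$ as a hitting time in the usual sense by invoking continuity of $Y$; both the stopping-time property of $\tau_k$ and the passage of the bounds to the terminal time $\tau_k$ have to be reduced to right-continuity of $\F$ and to continuity of the \emph{individual} summands, as above.
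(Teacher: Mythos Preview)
Your proof is correct and follows essentially the same approach as the paper: both exploit the summability $\sum_n\|X^n\|_{\cH^2}<\infty$ together with the BDG/Doob control of $\sup_t|M^n_t|$ to localize the three quantities uniformly in $n$. The paper handles $M^n$, $[M^n]$ and $\int|dA^n|$ separately via Chebyshev-type bounds $P[\sup_n(\cdot)>k]\to0$, whereas you bundle everything into a single nondecreasing process $Y$ with $E[Y_T]<\infty$ and localize once---a minor variant, with the bonus that you explicitly address the measurability of $\tau_k$ and the passage to the limit $t\uparrow\tau_k$ when $Y$ itself may fail to be right-continuous.
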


\begin{proof}
  Let $\sigma_k=\inf \{t>0:\, \sup_n |M^n_t|>k\}\wedge T$. We use the notation $M^{n\star}_t=\sup_{s\leq t}|M^n_s|$,
  then the norms $\|M^{n\star}_T\|_{L^2}$ and $\|M^n\|_{\cH^2}$ are equivalent by the BDG inequalities. Now
  \[
    P\Big[ \sup_n M^{n\star}_T > k\Big]\leq k^{-2} \sum_n \|M^{n\star}_T\|_{L^2}
  \]
  shows $P[\sigma_k<T]\to0$.
  Similarly, $P\big[ \sup_n [M^n]_T > k\big]\leq k^{-1} \sum_n \|M^n\|_{\cH^2}$
  and $P\big[ \sup_n \int_0^T |dA^n| > k\big]\leq k^{-2} \sum_n \|A^n\|_{\cH^2}$
  yield the other claims.
\end{proof}

We sometimes write ``in $\cS^0$'' to indicate convergence in the semimartingale topology.

\begin{Lemma}\label{le:convStochExp}
  Let $X^n=M^n+A^n$, $n\geq1$ and $X=M+A$ be continuous semimartingales with continuous canonical decompositions.
  \begin{enumerate}[topsep=3pt, partopsep=0pt, itemsep=1pt,parsep=2pt]
    \item $\sum_n \|X^n-X\|_{\cH^2}<\infty$ implies $\cE(X^n)\to \cE(X)$ in $\cH^2_{loc}$.
    \item $X^n\to X$ in $\cH^2_{loc}$ implies $\cE(X^n)\to \cE(X)$ in $\cS^0$.
    \item $X^n\to X$ in $\cS^0$ implies $\cE(X^n)\to \cE(X)$ in $\cS^0$.
  \end{enumerate}
\end{Lemma}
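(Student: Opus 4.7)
The plan is to prove (i) by direct estimation of the canonical decomposition of $\cE(X^n)-\cE(X)$ after a strong localization, and to deduce (ii) and (iii) from (i) through the subsequence characterization of the semimartingale topology in Proposition~\ref{pr:SMconvPrelocHp}.

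For (i), set $Y^n:=\cE(X^n)$, $Y:=\cE(X)$. Since $dY^n=Y^n_-\,dX^n$ and $dY=Y_-\,dX$, the continuous semimartingale $Y^n-Y$ has canonical decomposition $Y^n-Y=M^{Y^n-Y}+A^{Y^n-Y}$ with
\begin{align*}
 M^{Y^n-Y} & = (Y^n_--Y_-)\sint M + Y^n_-\sint(M^n-M),\\
 A^{Y^n-Y} & = (Y^n_--Y_-)\sint A + Y^n_-\sint(A^n-A).
\end{align*}
Applying Lemma~\ref{le:locUnifBddForExponential} to the family $(X^n-X)_{n\geq1}$ and combining with a further localization of $X$ itself, I obtain a localizing sequence $(\sigma_k)$ along which $\sup_n\|(M^n)^{\sigma_k}\|_\infty$, $\sup_n\|[M^n]_{\sigma_k}\|_\infty$ and $\sup_n\|\int_0^{\sigma_k}|dA^n|\|_\infty$ are all finite for each $k$. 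By continuity, $Y^n=\exp(X^n-\tfrac12[X^n])$ is then bounded on $[0,\sigma_k]$ uniformly in $n$, and likewise for $Y$. The summability hypothesis together with the BDG inequality forces $\sum_n\|(X^n-X)^*_{\sigma_k}\|_{L^2}<\infty$, so Borel--Cantelli gives $\sup_{t\leq\sigma_k}|X^n_t-X_t|\to 0$ $P$-a.s.~and hence $\sup_{t\leq\sigma_k}|Y^n_t-Y_t|\to 0$ $P$-a.s.~with a uniform $L^2$-dominating bound. Estimating each of the four integrals in the decomposition, e.g.\
\begin{align*}
 \big\|\big((Y^n_--Y_-)\sint M\big)^{\sigma_k}\big\|_{\cH^2}^{2} & \leq \big\|[M]_{\sigma_k}\big\|_\infty\, E\Big[\sup_{t\leq\sigma_k}(Y^n_t-Y_t)^2\Big],\\
 \big\|\big(Y^n_-\sint(M^n-M)\big)^{\sigma_k}\big\|_{\cH^2}^{2} & \leq \big\|(Y^n)^{\sigma_k}\big\|_\infty^{2}\,\big\|(M^n-M)^{\sigma_k}\big\|_{\cH^2}^{2},
\end{align*}
with analogous bounds for the two finite variation terms, dominated convergence and the hypothesis drive both right-hand sides to zero, proving $\cE(X^n)\to\cE(X)$ in $\cH^2_{loc}$.

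For (ii), let $(\tau_m)$ be a localizing sequence witnessing $X^n\to X$ in $\cH^2_{loc}$. Given any subsequence, a diagonal extraction produces a further subsequence $(X^{n_k})$ with $\sum_k\|(X^{n_k}-X)^{\tau_m}\|_{\cH^2}<\infty$ for each $m$; applying (i) to each stopped family $((X^{n_k})^{\tau_m})_k$ yields $\cE(X^{n_k})^{\tau_m}\to\cE(X)^{\tau_m}$ in $\cH^2_{loc}$, hence $\cE(X^{n_k})\to\cE(X)$ prelocally in $\cH^2$, and Proposition~\ref{pr:SMconvPrelocHp} gives $\cE(X^n)\to\cE(X)$ in $\cS^0$. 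For (iii), the same proposition lets us extract from any subsequence of $(X^n)$ a subsubsequence converging to $X$ prelocally in $\cH^2$; continuity of the processes makes this $\cH^2_{loc}$-convergence, so (ii) applies to the subsubsequence and a further appeal to Proposition~\ref{pr:SMconvPrelocHp} concludes.

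The main obstacle is the uniform-in-$n$ boundedness of $Y^n=\cE(X^n)$ needed in (i) to pass to the limit in the two terms containing $Y^n_-$; this is precisely what the summability hypothesis, channelled through Lemma~\ref{le:locUnifBddForExponential}, is designed to deliver. The same obstacle is the reason (ii) and (iii) are stated only at the level of $\cS^0$: the localizing sequence produced by Lemma~\ref{le:locUnifBddForExponential} in general depends on the given (sub)sequence, so one cannot hope to preserve $\cH^2_{loc}$-convergence of the exponentials along the original sequence.
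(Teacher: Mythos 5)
Your proof is essentially correct and follows the same route as the paper's own argument: localize via Lemma~\ref{le:locUnifBddForExponential} (applied to $(X^n-X)$, combined with a localization of $X$) to obtain uniform bounds on $\cE(X^n)$, establish uniform a.s.\ convergence $\cE(X^n)\to\cE(X)$ with an $L^2$ dominating bound, then estimate the decomposition of $\cE(X^n)-\cE(X)=\cE(X^n)_-\sint X^n - \cE(X)_-\sint X$ by the same splitting, and derive (ii) and (iii) via Proposition~\ref{pr:SMconvPrelocHp}.

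There is one small gap in the middle of (i). You write that $\sup_{t\leq\sigma_k}|X^n_t-X_t|\to 0$ $P$-a.s.\ ``and hence'' $\sup_{t\leq\sigma_k}|Y^n_t-Y_t|\to 0$. This does not follow from uniform convergence of $X^n$ alone, because $Y^n=\exp\bigl(X^n-\tfrac12[M^n]\bigr)$ involves the quadratic variation: you also need $\sup_{t\leq\sigma_k}|[M^n]_t-[M]_t|\to 0$ $P$-a.s. The paper makes this point explicit. Fortunately the summability hypothesis delivers it: after the localization one has $\sum_n E\bigl[[M^n-M]_{\sigma_k}\bigr]=\sum_n\|(M^n-M)^{\sigma_k}\|_{\cH^2}^2<\infty$ (the norms are summable, hence so are their squares), so $[M^n-M]_{\sigma_k}\to 0$ a.s.; combined with the Kunita--Watanabe bound $\sup_t|[M^n-M,M]_t|\leq [M^n-M]_{\sigma_k}^{1/2}[M]_{\sigma_k}^{1/2}$ and the localized boundedness of $[M]$, this gives $[M^n]\to[M]$ uniformly a.s. Inserting this observation, your estimates of the four integrals then close the argument exactly as intended. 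Everything else, including the diagonal extraction in (ii) and the double appeal to Proposition~\ref{pr:SMconvPrelocHp} in (iii), matches the paper.
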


\begin{proof}
  (i)~~By localization we may assume that $M$ and $\int |dA|$ are bounded and, by Lemma~\ref{le:locUnifBddForExponential}, that
  $|M^n|$ and $\int |dA^n|$ are bounded by a constant $C$ independent of $n$.
  Note that $X^n\to X$ in $\cH^2$; we shall show $\cE(X^n)\to \cE(X)$ in $\cH^2$. Since this is a metric space, no loss of generality is entailed by passing to a subsequence. Doing so,
  we have $M^n\to M$, $[M^n]\to[M]$, and $A^n\to A$ uniformly in time, $P$-a.s. In view of the uniform bound
  \[
    Y^n:=\cE(X^n)=\exp\big(X^n-\tfrac{1}{2}[M^n]\big)\leq e^{2C}
  \]
  we conclude that $Y^n\to Y:=\cE(X)=\exp(X-\tfrac{1}{2}[M])$ in $\cR^2$.
  By definition of the stochastic exponential we have $Y-Y^n=Y\sint X - Y^n\sint X^n$, where
  \begin{align*}
   \|Y\sint X - Y^n\sint X^n\|_{\cH^2}
     & \leq \|(Y-Y^n)\sint X\|_{\cH^2} + \|Y^n \sint (X-X^n)\|_{\cH^2}.
  \end{align*}
  The first norm tends to zero by dominated convergence for stochastic integrals
  and for the second we use
  that $|Y^n|\leq e^{2C}$ and $X^n\to X$ in $\cH^2$.

  (ii)~~Consider a subsequence of $(X^n)$. After passing to another subsequence, (i) shows the convergence in $\cH^2_{loc}$ and Proposition~\ref{pr:SMconvPrelocHp} yields (ii).

  (iii)~~This follows from (ii) by using Proposition~\ref{pr:SMconvPrelocHp} twice.
\end{proof}

We return to the semimartingale $R$ of asset returns, which is assumed to be continuous in the sequel. We recall the structure condition~\eqref{eq:StructureContForR} and define $L^\omega(M):=\{\pi\in L(M):\,\|\pi\|_{L^\omega(M)}<\infty\}$, where $\|\pi\|_{L^\omega(M)}:=\|\pi\sint M\|_{\cH^\omega}$ and $\cH^\omega$ was introduced at the end of Section~\ref{se:decomps}.

\begin{Lemma}\label{le:convMandR}
  Let $R$ be continuous, $r\in \{2,\omega\}$, and $\pi,\pi^n\in L^r_{loc}(M)$.
  Then $\pi^n\to\pi$ in $L^r_{loc}(M)$ if and only if $\pi^n\sint R\to \pi\sint R$ in $\cH^r_{loc}$.
\end{Lemma}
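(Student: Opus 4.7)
The plan is to use the structure condition~\eqref{eq:StructureContForR} to split $\pi\sint R=\pi\sint M+\int \pi^\top\,d\br{M}\lambda$, the first term being the local martingale part and the second the continuous predictable finite-variation part of the canonical decomposition. Both implications then reduce to controlling the finite-variation contribution to $\cH^r$ by the martingale contribution, which by the definitions in Section~\ref{se:decomps} is exactly $\pi^n-\pi$ measured in $L^r(M)$.

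First I would localize with a single sequence $(\tau_m)$ that (i) realizes the relevant $L^r_{loc}(M)$ or $\cH^r_{loc}$ convergence on each $[0,\tau_m]$, and (ii) ensures $\int_0^{\tau_m}\lambda^\top\,d\br{M}\lambda\leq C_m$ for constants $C_m>0$, which is possible since $\lambda\in L^2_{loc}(M)$ by~\eqref{eq:StructureContForR}. The pointwise Kunita--Watanabe inequality then yields, for every stopping time $\sigma\leq\tau_m$,
\[
  \int_\sigma^{\tau_m}\big|(\pi^n-\pi)^\top\,d\br{M}\lambda\big|\;\leq\; C_m^{1/2}\,\big([(\pi^n-\pi)\sint M]_{\tau_m}-[(\pi^n-\pi)\sint M]_\sigma\big)^{1/2}.
\]
Taking $\sigma=0$ and $L^2(P)$-norms controls the finite-variation part of $(\pi^n-\pi)\sint R$ on $[0,\tau_m]$ by $C_m^{1/2}\|(\pi^n-\pi)\sint M\|_{\cH^2}$. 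Taking conditional expectations $E[\,\cdot\,|\cF_\sigma]$ on both sides, before any integration, produces the corresponding estimate with the bracket increment to the power $1/2$ inside the conditional expectation, which is exactly the shape of the term in the definition of $\|\cdot\|_{\cH^\omega}$.

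For the forward implication $L^r_{loc}(M)\Rightarrow \cH^r_{loc}$, these estimates dominate the finite-variation contribution to $\|(\pi^n-\pi)\sint R\|_{\cH^r}$ on $[0,\tau_m]$ by $C_m^{1/2}$ times the martingale contribution; the latter equals $\|(\pi^n-\pi)\sint M\|_{\cH^r}$, which in turn coincides with $\|\pi^n-\pi\|_{L^r(M)}$ on $[0,\tau_m]$ (for the $\omega$-case, note that the $\cH^\omega$-norm of a local martingale reduces to its bracket term since its finite-variation part vanishes). Conversely, $(\pi^n-\pi)\sint R$ is special with martingale part $(\pi^n-\pi)\sint M$, and the bracket and total-variation terms inside $\|\cdot\|_{\cH^r}$ are both nonnegative; dropping the finite-variation term gives the trivial bound $\|(\pi^n-\pi)\sint M\|_{\cH^r}\leq \|(\pi^n-\pi)\sint R\|_{\cH^r}$ on $[0,\tau_m]$, i.e.\ the required $L^r_{loc}(M)$-convergence.

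The point to watch out for is the $r=\omega$ case: one has to match the square root in Kunita--Watanabe with the square root in the definition of $\|\cdot\|_{\cH^\omega}$ at the level of conditional expectations, and for this it is essential that the localizing sequence $(\tau_m)$ gives a deterministic bound on $\int_0^{\tau_m}\lambda^\top\,d\br{M}\lambda$ rather than merely an $L^p$ one. Once this alignment is in place no appeal to John--Nirenberg or to an independent $BMO$ argument is needed, and the single calculation handles both $r\in\{2,\omega\}$.
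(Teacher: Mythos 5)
Your proof is correct and follows essentially the same route as the paper: decompose $\pi\sint R$ via the structure condition, control the finite-variation part by Kunita--Watanabe against the martingale bracket, and localize so that $\chi=\int\lambda^\top\,d\br{M}\lambda$ is (deterministically) bounded. The paper dispatches the $r=\omega$ case with "the proof is similar," whereas you spell out that the pathwise Kunita--Watanabe estimate must be conditioned on $\cF_\sigma$ before integrating so that the square root matches the $\cH^\omega$-norm -- a worthwhile clarification, but not a different method.
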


\begin{proof}
  By~\eqref{eq:StructureContForR} we have $\pi\sint R=\pi\sint M + \int \pi^\top \,d\br{M}\,\lambda$.
  Let $\chi:=\int \lambda^\top\,d\br{M}\,\lambda$ denote the mean-variance tradeoff process. The inequality
  \[E\Big[\Big(\int_0^T |\pi^\top \,d\br{M}\,\lambda|\Big)^2\Big]
    \leq E\Big[\Big(\int_0^T \pi^\top \,d\br{M}\,\pi\Big) \Big(\int_0^T \lambda^\top \,d\br{M}\,\lambda\Big)\Big]\]
  implies
  $
    \|\pi\sint M\|_{\cH^2}\leq \|\pi\sint R\|_{\cH^2} \leq (1+\|\chi_T\|_{L^\infty})\|\pi\sint M\|_{\cH^2}.
  $
  As $\chi$ is locally bounded due to continuity, this yields the result for $r=2$. The proof for $r=\omega$ is similar.
\end{proof}

\begin{Cor}\label{co:ForWealthProcConv}
  Let $R$ be continuous and $(\pi,\kappa),(\pi^n,\kappa^n)\in \cA$.
  \begin{enumerate}[topsep=3pt, partopsep=0pt, itemsep=1pt,parsep=2pt]
  \item Assume that $\pi^n\to\pi$ in $L^2_{loc}(M)$, that $(\kappa^n)$ is prelocally bounded uniformly in $n$, and   that $\kappa^n_t\to\kappa_t$ $P$-a.s.\ for each $t\in [0,T]$.
      Then $X(\pi^n,\kappa^n)\to X(\pi,\kappa)$ in the semimartingale topology.
  \item Assume $\pi^n\to\pi$ in $L^\omega_{loc}(M)$ and $\kappa^n\to\kappa$ in $\cR^\infty_{loc}$.
      Then $X(\pi^n,\kappa^n)\to X(\pi,\kappa)$ in $\cH^r_{loc}$ for all $r\in[1,\infty)$.
  \end{enumerate}
\end{Cor}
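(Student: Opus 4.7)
The plan is to recognize that by the stochastic exponential formula~\eqref{eq:wealthExponential}, both parts reduce to a continuity statement for $\cE$. Setting $Y^n:=\pi^n\sint R - \kappa^n\sint \mu$ and $Y:=\pi\sint R - \kappa\sint \mu$, we have $X(\pi^n,\kappa^n)=x_0\cE(Y^n)$ and $X(\pi,\kappa)=x_0\cE(Y)$, so in each case it suffices to show convergence of $Y^n$ to $Y$ in the appropriate topology and then invoke the relevant continuity property of the stochastic exponential. In both parts the integral-against-$R$ summand is handled by Lemma~\ref{le:convMandR}, which translates hypotheses on $\pi^n\to\pi$ in $L^r_{loc}(M)$ directly into $\cH^r_{loc}$-convergence of $\pi^n\sint R$ (for $r\in\{2,\omega\}$); the heart of the proof is treating the consumption drift $\kappa^n\sint \mu$.

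For part~(i), I would prove that $Y^n\to Y$ in $\cS^0$ and then apply Lemma~\ref{le:convStochExp}(iii). The asset component $\pi^n\sint R\to\pi\sint R$ converges in $\cH^2_{loc}$ by Lemma~\ref{le:convMandR}, and hence in $\cS^0$ by Proposition~\ref{pr:SMconvPrelocHp}. For the consumption drift, localize by the prelocalizing sequence for $(\kappa^n)$ so that $\sup_n\|\kappa^n\|_{\cR^\infty}\leq C$ for some constant $C$. The hypothesis gives pointwise-in-$t$ a.s.\ convergence $\kappa^n_t\to\kappa_t$; Fubini on $P\otimes ds$ therefore yields, for $P$-a.e.\ $\omega$, that $\kappa^n_s(\omega)\to\kappa_s(\omega)$ for Lebesgue-a.e.\ $s\in[0,T]$. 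Dominated convergence on $[0,T]$ with the uniform bound $2C$ then gives $\int_0^T|\kappa^n_s-\kappa_s|\,ds\to 0$ $P$-a.s., which dominates $\sup_{t\in[0,T]}|(\kappa^n-\kappa)\sint \mu_t|$. Since this supremum is uniformly bounded by $2CT$, a second application of dominated convergence (in $\omega$) shows $\kappa^n\sint \mu\to\kappa\sint \mu$ in $\cR^1$, hence prelocally in $\cH^1$, hence in $\cS^0$. Combining the two summands, $Y^n\to Y$ in $\cS^0$, and Lemma~\ref{le:convStochExp}(iii) closes the argument.

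For part~(ii), I would upgrade the mode of convergence of $Y^n$ to $\cH^\omega_{loc}$ and then appeal to the $\cH^\omega$-continuity of the stochastic exponential (Theorem~3.4 and Remark~3.7(2) in Protter~\cite{Protter.80}, already invoked in the proof of Proposition~\ref{pr:dualConv0}), which yields $\cE(Y^n)\to\cE(Y)$ in $\cH^r_{loc}$ for all $r\in[1,\infty)$. Lemma~\ref{le:convMandR} with $r=\omega$ gives $\pi^n\sint R\to\pi\sint R$ in $\cH^\omega_{loc}$. For the finite-variation drift, localize so that $\|\kappa^n-\kappa\|_{\cR^\infty}\to 0$; since $\kappa^n\sint \mu$ has no martingale part, its $\cH^\omega$-norm reduces to the conditional expectation of the total variation, giving the trivial bound $\|(\kappa^n-\kappa)\sint \mu\|_{\cH^\omega}\leq T\|\kappa^n-\kappa\|_{\cR^\infty}\to 0$. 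Hence $Y^n\to Y$ in $\cH^\omega_{loc}$ and the cited continuity result yields the claim. The main delicate point is the Fubini step in part~(i): the hypothesis provides only a $t$-dependent a.s.\ statement rather than joint a.s.\ convergence, so one must pass through Fubini and dominated convergence to upgrade to a uniform-in-$t$ $P$-a.s.\ statement; once that is in hand, everything else reduces to the tool kit of Lemmas~\ref{le:convMandR},~\ref{le:convStochExp} and Proposition~\ref{pr:SMconvPrelocHp}, together with the cited $\cH^\omega$ result for part~(ii).
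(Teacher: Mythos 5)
Your proposal is correct and follows essentially the same route as the paper: reduce via the exponential formula~\eqref{eq:wealthExponential}, use Lemma~\ref{le:convMandR} for the $\pi^n\sint R$ part, handle the consumption drift by localization and dominated convergence, and conclude with Lemma~\ref{le:convStochExp} in~(i) and the Protter $\cH^\omega$-continuity result in~(ii). The only cosmetic difference is that in~(i) the paper keeps the whole sum in $\cH^2_{loc}$ and applies Lemma~\ref{le:convStochExp}(ii), whereas you pass the drift through $\cH^1$/$\cS^0$ and use part~(iii); both are fine, and your explicit Fubini step just spells out what the paper's ``bounded convergence'' asserts.
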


\begin{proof}
  (i)~~By continuity of $\mu$, $\kappa^n_s\sint \mu(ds)_t=\kappa^n_s\sint \mu(ds)_{t_-}$ for all $t$.
  After localization, bounded convergence yields
  $\int_0^T |\kappa^n_t- \kappa_t|\,\mu(dt)\to 0$ $P$-a.s.\ and in $L^2(P)$. Using Lemma~\ref{le:convMandR}, we have $\pi^n\sint R + \kappa^n\sint \mu(dt) \to \pi\sint R + \kappa\sint \mu(dt)$ in $\cH^2_{loc}$.
  In view of~\eqref{eq:wealthExponential} we conclude by Lemma~\ref{le:convStochExp}(ii).

  (ii)~~With Lemma~\ref{le:convMandR} we obtain $\pi^n\sint R + \kappa^n\sint \mu(dt) \to \pi\sint R + \kappa\sint \mu(dt)$
  in $\cH^\omega_{loc}$. Thus the stochastic exponentials converge in $\cH^r_{loc}$ for all $r\in[1,\infty)$ (see
   Theorem~3.4 and Remark~3.7(2) in~\cite{Protter.80}).
\end{proof}

\section{Proof of Lemma~\ref{le:KobylanskiConv3}}\label{se:KobylanskiArgument}
  In this section we give the proof of Lemma~\ref{le:KobylanskiConv3}. As mentioned above, the argument
  is adapted from the Brownian setting of~\cite[Proposition~2.4]{Kobylanski.00}.

  We use the notation introduced before Lemma~\ref{le:KobylanskiConv3}, in particular, recall~\eqref{eq:quadEstimate}.
  We fix $k$ throughout and let $\tau:=\tau_k$. For fixed integers $m\geq n$ we abbreviate
  $\dL=L^n-L^m$, moreover, $\dM$, $\dZ$, $\dN$ have the analogous meaning. Note that $\dL\geq0$ as $m\geq n$.
  The technique consists in applying It\^o's formula to $\Phi(\dL)$, where, with $K:=6 C_k$,
  \[
    \Phi(x)=\frac{1}{8K^2}\big(e^{4Kx}-4Kx-1\big).
  \]
  On $\R_+$ this function satisfies
  \[
    \Phi(0)=\Phi'(0)=0,\q \Phi\geq0,\q \Phi'\geq0,\q \tfrac{1}{2}\Phi''-2K\Phi'\equiv 1.
  \]
  Moreover, $\Phi''\geq0$ and hence $h(x):=\tfrac{1}{2}\Phi''(x)-K\Phi'(x)=1+K\Phi'(x)$
  is nonnegative and nondecreasing.

  (i)~~By It\^o's formula we have
  \begin{align*}
    \Phi(\dL_0)&=\Phi(\dL_\tau)-\int_0^{\tau}\Phi'(\dL_s) \big[f^n(s,L^n_s,Z^n_s)-f^m(s,L^m_s,Z^m_s)\big]\, d\br{M}_s\\
             &\phantom{=} -\int_0^{\tau}\tfrac{1}{2}\Phi''(\dL_s) \,d\,\br{\dM}_s - \int_0^{\tau}\Phi'(\dL_s)\, d\dM_s.
  \end{align*}
  By elementary inequalities we have for all $m$ and $n$ that
  \begin{equation*}%
    |f^n(t,L^n,Z^n)-f^m(t,L^m,Z^m)|^\tau\leq \xi+K\big(|Z^n-Z^m|^2+|Z^n-\tZ|^2+|\tZ|^2\big)^\tau,
  \end{equation*}
  where the index $t$ was omitted. Hence
  \begin{align*}
    \Phi(\dL_0)&\leq\Phi(\dL_\tau)+\int_0^{\tau}\Phi'(\dL_s) \Big[\xi_s + K\big(|\dZ_s|^2 + |Z^n_s-\tZ_s|^2+|\tZ_s|^2\big)\Big]\, d\br{M}_s\\
             &\phantom{=} -\int_0^{\tau}\tfrac{1}{2}\Phi''(\dL_s) \,d\,\br{\dM}_s - \int_0^{\tau}\Phi'(\dL_s)\, d\dM_s.
  \end{align*}
  The expectation of the stochastic integral vanishes since $\dL$ is bounded and $\dM\in\cH^2$.
  We deduce
  \begin{align}
     E\int_0^{\tau} & \big[\tfrac{1}{2}\Phi''(\dL_s)-K\Phi'(\dL_s)\big]|\dZ_s|^2 \, d\br{M}_s
               +E\int_0^{\tau} \tfrac{1}{2}\Phi''(\dL_s) \, d\br{\dN}_s \label{eq:proofKobylanski2a}\\
        & \phantom{\leq}- E\int_0^{\tau}  K\Phi'(\dL_s)|Z^n_s-\tZ_s|^2 \, d\br{M}_s + \Phi(\dL_0)\label{eq:proofKobylanski2b}\\
        & \leq\;\; E\big[\Phi(\dL_\tau)\big]+E\int_0^{\tau}\Phi'(\dL_s) \big[\xi_s + K|\tZ_s|^2\big]\, d\br{M}_s.\label{eq:proofKobylanski2c}
  \end{align}
  We let $m$ tend to infinity, then $\dL_t=L^n_t-L^m_t$ converges to $L^n_t-\tL_t$ $P$-a.s.
  for all $t$ and with a uniform bound, so~\eqref{eq:proofKobylanski2c} converges to
  \[
    E\big[\Phi(L^n_\tau-\tL_\tau)\big]+E\int_0^{\tau}\Phi'(L^n_s-\tL_s) \big[\xi_s + K|\tZ_s|^2\big]\, d\br{M}_s;
  \]
  while~\eqref{eq:proofKobylanski2b} converges to
  \[
    - E\int_0^{\tau}  K\Phi'(L^n_s-\tL_s)|Z^n_s-\tZ_s|^2 \, d\br{M}_s   + \Phi(L^n_0-\tL_0).
  \]
  We turn to~\eqref{eq:proofKobylanski2a}. The
  continuous function $h(x)=\frac{1}{2}\Phi''(x)-K\Phi'(x)$ occurs in
  the first integrand. We recall that $h$ is nonnegative and nondecreasing and note that $\Phi''$ has the same properties.
  Moreover, as $L^m_t$ is monotone decreasing in $m$,
  \[
    h(\dL_s)=h(L^n_s-L^m_s) \uparrow h(L^n_s-\tL_s); \q \Phi''(\dL_s)=\Phi''(L^n_s-L^m_s) \uparrow \Phi''(L^n_s-\tL_s)
  \]
  $P$-a.s. for all $s$. Hence we have for any fixed $m_0\leq m$  that
  \begin{align*}
    E\int_0^{\tau}h(L^n_s-L^m_s)|Z^n_s-Z^m_s|\, d\br{M}_s & \geq E\int_0^{\tau} h(L^n_s-L^{m_0}_s)|Z^n_s-Z^m_s|\, d\br{M}_s;\\
    E\int_0^{\tau}\Phi''(L^n_s-L^m_s)\, d\br{N^n-N^m}_s & \geq E\int_0^{\tau} \Phi''(L^n_s-L^{m_0}_s)\, d\br{N^n-N^m}_s.
  \end{align*}
  The right hand sides are convex lower semicontinuous functions of $Z^m\in L^2(M)$ and $N^m\in \cH^2$, respectively, hence also weakly lower semicontinuous. We conclude from the weak convergences $Z^m\to \tZ$ and $N^m\to \tN$ that
  \begin{align*}
    \liminf_{m\to\infty} E\int_0^{\tau}h(L^n_s-L^m_s)|Z^n_s&-\tZ^m_s|\,  d\br{M}_s \\
      &\geq E\int_0^{\tau} h(L^n_s-L^{m_0}_s)|Z^n_s-\tZ_s|\, d\br{M}_s;
  \end{align*}
  \[
      \liminf_{m\to\infty} E\int_0^{\tau}\Phi''(L^n_s-L^m_s)\, d\br{N^n-N^m}_s  \geq E\int_0^{\tau} \Phi''(L^n_s-L^{m_0}_s)\, d\br{N^n-\tN}_s
  \]
  for all $m_0$. We can now let $m_0$ tend to infinity, then by monotone convergence the first right hand side tends to
  $
    E\int_0^{\tau} h(L^n_s-\tL_s) |Z^n_s-\tZ_s|\, d\br{M}_s
  $
  and the second one tends to
  \[
    E\int_0^{\tau} \Phi''(L^n_s-\tL_s)\, d\br{N^n-\tN}_s\geq 2 E\int_0^{\tau}  d\br{N^n-\tN}_s=2 E\big[\br{N^n-\tN}_{\tau}\big],
  \]
  where we have used that $L^n-\tL\geq 0$ and $\Phi''(x)=2 e^{4Kx}\geq 2$ for $x\geq0$.
  Altogether, we have passed from~\eqref{eq:proofKobylanski2a}--\eqref{eq:proofKobylanski2c} to
  \begin{align*}
     E&\int_0^{\tau} \Big(\tfrac{1}{2}\Phi''-2K\Phi'\Big)(L^n_s-\tL_s)\,|Z^n_s-\tZ_s|^2 \, d\br{M}_s
     +  E\big[\br{N^n-\tN}_{\tau}\big] \\
        & \leq E\Phi(L^n_\tau-\tL_\tau)-\Phi(L^n_0-\tL_0)+ E\int_0^{\tau}\Phi'(L^n_s-\tL_s) \big[\xi_s + K|\tZ_s|^2\big]\, d\br{M}_s.
  \end{align*}
  As $\tfrac{1}{2}\Phi''-2K\Phi'\equiv 1$, the first integral reduces to $E\int_0^{\tau} |Z^n_s-Z_s|^2 \, d\br{M}_s$.
  If we let $n$ tend to infinity, the right hand side converges to zero by dominated
  convergence, so that we conclude
  \[
   E\int_0^{\tau} |Z^n_s-\tZ_s|^2 \, d\br{M}_s\to 0; \q E\big[\br{N^n-\tN}_{\tau}\big]\to 0
  \]
  as claimed.

  (ii)~~For all $m$ and $n$ we have
  \begin{align}\label{eq:proofKobylanski3}
    |L^n_{t\wedge\tau}-L^m_{t\wedge\tau}|&\leq |L^n_\tau-L^m_\tau|+ \int_{t\wedge\tau}^{\tau}| f^n(s,L^n_s,Z^n_s)-f^m(s,L^m_s,Z^m_s)|\,d\br{M}_s\nonumber\\
                 &\phantom{\leq}+ \big| (M^n_\tau-M^m_\tau) - (M^n_{t\wedge\tau}-M^m_{t\wedge\tau})\big|.
  \end{align}
  The sequence $M^m=Z^m\sint M + N^m$ is Cauchy in $\cH^2_{\tau}$.
  We pick a fast subsequence, still denoted by $M^m$, such that
   $\|M^{m}-M^{m+1}\|_{\cH^2_\tau}\leq 2^{-m}$. This implies that
   \[
     M^*:=\sup_m |M^m| \in \cH^2_\tau;\q Z^*:=\sup_m |Z^m|\in L^2_\tau(M)
   \]
   and that $Z^m$ converges $P\otimes \br{M^\tau}$-a.e.~to $\tZ$.
   Therefore,
  $\lim_n f^m(t,L^m_t,Z^m_t)=f(t,\tL_t,\tZ_t)$ $P\otimes \br{M^\tau}$-a.e.
  Moreover, $|f^m(t,L^m_t,Z^m_t)^\tau|\leq \xi_t +C|Z^*_t|^2$ and this bound is in $L^1_\tau(M)$.
  Passing to a subsequence if necessary, we have
  \begin{align*}
     \lim_{m\to\infty}  \int_{0}^{\tau}| & f^n(s,L^n_s,Z^n_s)-f^m(s,L^m_s,Z^m_s)|\,d\br{M}_s  \\
      &=\int_{0}^{\tau}| f^n(s,L^n_s,Z^n_s)-f(s,\tL_s,\tZ_s)|\,d\br{M}_s\q P\mbox{-a.s.}
  \end{align*}

  As $M^m\to \tM$ in $\cH^2_\tau$, we have $E\big[\sup_{t\leq T} |M_{t\wedge\tau}^m-\tM_{t\wedge\tau}| \big]\to 0$ and, after picking a subsequence,
  $\sup_{t\leq T} |M_{t\wedge\tau}^m-\tM_{t\wedge\tau}|\to 0$ $P$-a.s.
  We can now take $m\to\infty$ in~\eqref{eq:proofKobylanski3} to obtain
  \begin{align*}
    \sup_{t\leq T}|L^n_{t\wedge\tau}-\tL_{t\wedge\tau}|&\leq |L^n_\tau-\tL_\tau|+ \int_{0}^{\tau}| f^n(s,L^n_s,Z^n_s)-f(s,\tL_s,\tL_s)|\,d\br{M}_s\\
                 &\phantom{\leq}+ \sup_{t\leq T}\big| (M^n_\tau-\tM_\tau) - (M^n_{t\wedge\tau}-\tM_{t\wedge\tau})\big|.
  \end{align*}
  With exactly the same arguments, extracting another subsequence if necessary, the right hand
  side converges to zero $P$-a.s.\ as $n\to\infty$. We have shown that
  $\lim_{n} \sup_{t\leq T}|L^n_{t\wedge\tau}-\tL_{t\wedge\tau}|=0$, along a subsequence. But by monotonicity, we conclude the result for the whole sequence.\hfill$\square$

\bibliography{C:/Users/numadmin/Documents/tex/stochfin}

\newcommand{\dummy}[1]{}
\begin{thebibliography}{10}

\bibitem{AnselStricker.03}
J.~P. Ansel and C.~Stricker.
\newblock D{\'e}composition de {K}unita-{W}atanabe.
\newblock In {\em S{\'e}minaire de {P}robabilit{\'e}s XXVII}, volume 1557 of
  {\em Lecture Notes in Math.}, pages 30--32, Berlin, 1993. Springer.

\bibitem{Becherer.06}
D.~Becherer.
\newblock Bounded solutions to backward {SDE}'s with jumps for utility
  optimization and indifference hedging.
\newblock {\em Ann. Appl. Probab.}, 16(4):2027--2054, 2006.

\bibitem{BiaginiFrittelli.05}
S.~Biagini and M.~Frittelli.
\newblock Utility maximization in incomplete markets for unbounded processes.
\newblock {\em Finance Stoch.}, 9(4):493--517, 2005.

\bibitem{BiaginiFrittelli.07}
S.~Biagini and M.~Frittelli.
\newblock The supermartingale property of the optimal wealth process for
  general semimartingales.
\newblock {\em Finance Stoch.}, 11(2):253--266, 2007.

\bibitem{CarassusRasonyi.06}
L.~Carassus and M.~R{\'a}sonyi.
\newblock Convergence of utility indifference prices to the superreplication
  price.
\newblock {\em Math. Methods Oper. Res.}, 64(1):145--154, 2006.

\bibitem{DelbaenEtAl.97}
F.~Delbaen, P.~Monat, W.~Schachermayer, M.~Schweizer, and C.~Stricker.
\newblock Weighted norm inequalities and hedging in incomplete markets.
\newblock {\em Finance Stoch.}, 1(3):181--227, 1997.

\bibitem{DelbaenSchachermayer.98}
F.~Delbaen and W.~Schachermayer.
\newblock The fundamental theorem of asset pricing for unbounded stochastic
  processes.
\newblock {\em Math. Ann.}, 312:215--250, 1998.

\bibitem{DellacherieMeyer.82}
C.~Dellacherie and P.~A. Meyer.
\newblock {\em Probabilities and Potential B}.
\newblock North Holland, Amsterdam, 1982.

\bibitem{Emery.79}
M.~{\'E}mery.
\newblock Une topologie sur l'espace des semimartingales.
\newblock In {\em S{\'e}minaire de {P}robabilit{\'e}s XIII}, volume 721 of {\em
  Lecture Notes in Math.}, pages 260--280, Berlin, 1979. Springer.

\bibitem{FreiSchweizer.09}
C.~Frei and M.~Schweizer.
\newblock Exponential utility indifference valuation in a general
  semimartingale model.
\newblock In F.~Delbaen, M.~R{\'a}sonyi, and C.~Stricker, editors, {\em
  Optimality and Risk - Modern Trends in Mathematical Finance. The Kabanov
  Festschrift}, pages 49--86. Springer, 2009.

\bibitem{GollKallsen.00}
T.~Goll and J.~Kallsen.
\newblock Optimal portfolios for logarithmic utility.
\newblock {\em Stochastic Process. Appl.}, 89(1):31--48, 2000.

\bibitem{GollKallsen.03}
T.~Goll and J.~Kallsen.
\newblock A complete explicit solution to the log-optimal portfolio problem.
\newblock {\em Ann. Appl. Probab.}, 13(2):774--799, 2003.

\bibitem{Gollier.01}
C.~Gollier.
\newblock {\em The Economics of Risk and Time}.
\newblock MIT Press, Cambridge, 2001.

\bibitem{GranditsRheinlander.02}
P.~Grandits and T.~Rheinl{\"a}nder.
\newblock On the minimal entropy martingale measure.
\newblock {\em Ann. Probab.}, 30(3):1003--1038, 2002.

\bibitem{GranditsSummer.07}
P.~Grandits and C.~Summer.
\newblock Risk averse asymptotics and the optional decomposition.
\newblock {\em Theory Probab. Appl.}, 51(2):325--334, 2007.

\bibitem{Grasselli.03}
M.~Grasselli.
\newblock A stability result for the {HARA} class with stochastic interest
  rates.
\newblock {\em Insurance Math. Econom.}, 33(3):611--627, 2003.

\bibitem{JacodShiryaev.03}
J.~Jacod and A.~N. Shiryaev.
\newblock {\em Limit Theorems for Stochastic Processes}.
\newblock Springer, Berlin, 2nd edition, 2003.

\bibitem{JouiniNapp.04}
E.~Jouini and C.~Napp.
\newblock Convergence of utility functions and convergence of optimal
  strategies.
\newblock {\em Finance Stoch.}, 8(1):133--144, 2004.

\bibitem{KabanovStricker.02}
Yu. Kabanov and C.~Stricker.
\newblock On the optimal portfolio for the exponential utility maximization:
  {R}emarks to the six-author paper.
\newblock {\em Math. Finance.}, 12:125--134, 2002.

\bibitem{KaratzasZitkovic.03}
I.~Karatzas and G.~{\v Z}itkovi\'c.
\newblock Optimal consumption from investment and random endowment in
  incomplete semimartingale markets.
\newblock {\em Ann. Probab.}, 31(4):1821--1858, 2003.

\bibitem{KardarasZitkovic.09}
K.~Kardaras and G.~{\v Z}itkovi\'c.
\newblock Stability of the utility maximization problem with random endowment
  in incomplete markets.
\newblock {\em Forthcoming in Math. Finance}.

\bibitem{Kobylanski.00}
M.~Kobylanski.
\newblock Backward stochastic differential equations and partial differential
  equations with quadratic growth.
\newblock {\em Ann. Probab.}, 28(2):558--602, 2000.

\bibitem{Larsen.09}
K.~Larsen.
\newblock Continuity of utility-maximization with respect to preferences.
\newblock {\em Math. Finance}, 19(2):237--250, 2009.

\bibitem{ManiaSchweizer.05}
M.~Mania and M.~Schweizer.
\newblock Dynamic exponential utility indifference valuation.
\newblock {\em Ann. Appl. Probab.}, 15(3):2113--2143, 2005.

\bibitem{ManiaTevzadze.03}
M.~Mania and R.~Tevzadze.
\newblock A unified characterization of the {$q$}-optimal and minimal entropy
  martingale measures.
\newblock {\em Georgian Math. J.}, 10(2):289--310, 2003.

\bibitem{Nutz.09b}
M.~Nutz.
\newblock The {B}ellman equation for power utility maximization with
  semimartingales.
\newblock {\em Preprint (arXiv:0912.1883v1)}, 2009.

\bibitem{Nutz.09a}
M.~Nutz.
\newblock The opportunity process for optimal consumption and investment with
  power utility.
\newblock {\em Preprint (arXiv:0912.1879v1)}, 2009.

\bibitem{Protter.80}
P.~Protter.
\newblock An extension of {K}azamaki's results on {$BMO$} differentials.
\newblock {\em Ann. Probab.}, 8(6):1107--1118, 1980.

\bibitem{Santacroce.05}
M.~Santacroce.
\newblock On the convergence of the {$p$}-optimal martingale measures to the
  minimal entropy martingale measure.
\newblock {\em Stoch. Anal. Appl.}, 23(1):31--54, 2005.

\bibitem{Schachermayer.01}
W.~Schachermayer.
\newblock Optimal investment in incomplete markets when wealth may become
  negative.
\newblock {\em Ann. Appl. Probab.}, 11(3):694--734, 2001.

\bibitem{Schweizer.95b}
M.~Schweizer.
\newblock On the minimal martingale measure and the {F}{\"o}llmer-{S}chweizer
  decomposition.
\newblock {\em Stochastic Anal. and Appl.}, 13:573--599, 1995.

\end{thebibliography}
\bibliographystyle{plain}

\end{document}